\newsavebox{\ffbox}\newlength{\ffboxlen}
\newcommand{\todo}[1]{%
  {\sbox{\ffbox}{\textbf{TODO:}\ \textit{{#1}}\ \textbf{:ODOT}}
    \settowidth{\ffboxlen}{\usebox{\ffbox}}
		\addtolength{\ffboxlen}{-5mm}
    \ifthenelse{\ffboxlen>\linewidth}{%
      \noindent\marginpar{$>>>>$}\textbf{TODO:}\ \textit{{#1}}\ \textbf{:ODOT}\marginpar{$<<<<$}}{%
      \noindent\marginpar{$>><<$}\textbf{TODO:}\ \textit{{#1}}\ \textbf{:ODOT}}}}
\definecolor{color1}{RGB}{194, 34, 13}
\definecolor{color2}{RGB}{56, 13, 196}
\definecolor{color2}{RGB}{0,76,153}
\definecolor{color3}{RGB}{152, 194, 13}
\newtheorem{theorem}{Theorem}
\newtheorem{corollary}{Corollary}
\newtheorem{observation}{Observation}
\newtheorem*{claim*}{Claim}
\newenvironment{proofS}{%
	\proof}{\endproof}
\let\citep\cite
\newcommand{\decprob}[3]{%
  \begin{center}%
    \begin{minipage}{0.9\linewidth}%
      \textsc{#1}\\
      \textbf{Input:} #2\\
      \textbf{Output:} #3
    \end{minipage}%
  \end{center}%
}
\DeclareMathOperator*{\argmax}{argmax}
\pgfplotsset{compat=1.8}
\definecolor{rulecolor}{RGB}{0,76,153}
\definecolor{tableheadcolor}{gray}{0.92}
\newcommand{\topline}{ %
        \arrayrulecolor{rulecolor}\specialrule{0.1em}{\abovetopsep}{0pt}%
        \arrayrulecolor{tableheadcolor}\specialrule{\belowrulesep}{0pt}{0pt}%
        \arrayrulecolor{rulecolor}}
\newcommand{\midtopline}{ %
        \arrayrulecolor{tableheadcolor}\specialrule{\aboverulesep}{0pt}{0pt}%
        \arrayrulecolor{rulecolor}\specialrule{\lightrulewidth}{0pt}{0pt}%
        \arrayrulecolor{white}\specialrule{\belowrulesep}{0pt}{0pt}%
        \arrayrulecolor{rulecolor}}
\newcommand{\appsymb}{$\bigstar$}
\newenvironment{claimproof}{\begin{proof}}{\end{proof}}
\newcommand{\runt}{r_{\text{waw}}}
\newcommand{\waw}{\textsc{Weighted Approval Winner}\xspace}
\title{\Large Selecting Matchings via Multiwinner Voting: \\ How Structure Defeats a Large Candidate Space}
\author[1]{\large Niclas Boehmer}
\author[2]{\large Markus Brill}
\author[2]{\large Ulrike Schmidt-Kraepelin}
\affil[1]{\normalsize Research Group Algorithmics and Computational Complexity, TU Berlin}
\affil[2]{\normalsize Research Group Efficient Algorithms, TU Berlin}
\date{}
\begin{document}

\maketitle

\section{Introduction}
Matching problems involving preferences occur in a wide variety of applications, and the literature has identified a host of criteria for choosing a ``fair'' matching \cite{Manl13a}. In contrast to most of this work, we are interested in situations where \textit{multiple} matchings between agents need to be chosen based on the preferences of agents over each other.
Such situations occur naturally in applications where agents need to be matched multiple times, either successively or simultaneously. For instance,
teachers often divide students into pairs for partner work, and multiple matchings might be required for different learning activities and different subjects. Several matchings also need to be found in pair programming, if, for example, one pairing is selected per project milestone.
Other natural applications occur in workplaces where shifts are executed in pairs, which is often the case for security reasons (e.g., police officers or pilots usually work in shifts as pairs). 

We model scenarios of this type as the problem of finding $k$ matchings between agents 
based on the agents' dichotomous (i.e., approval/disapproval) preferences over each other. More concretely, we associate with each agent an approval set, i.e., a subset of other agents that are approved by the agent. 
In the student/teacher scenario, approval sets of students could, for example, consist of all students they like, or of all students that are deemed compatible by the teacher.
Preferences over agents are then lifted to preferences over matchings in a straightforward way: An agent approves a matching if and only if she is matched to an agent she approves. 
If the task were to find only a single matching, it would be natural to select a matching maximizing the number of approvers. However, as we are interested in finding multiple matchings, it is often possible to balance interests of agents by selecting a collection of matchings that makes all agents at least partially happy. Hence, our goal is to find~$k$ matchings that fairly represent the agents' preferences.

By interpreting matchings as candidates and agents as voters in an election, 
our setting can be viewed as a special case of approval-based multiwinner elections \cite{ABC+16a}. 
As a consequence, voting rules and axiomatic results from this more general framework are applicable to our setting, to which we refer to as \textit{matching elections}.
Since we aim to treat agents fairly, we are particularly interested in axioms capturing \textit{proportional representation}, and in voting rules satisfying those axioms. 
In matching elections, we explicitly allow that a single candidate (i.e., matching) can be selected multiple times. This is in contrast to general approval-based multiwinner elections, where candidates can be selected at most once. As a rationale for our decision, observe that such a constraint would be rather artificial in our setting: Two matchings which only differ in a few pairs would already be considered as two distinct candidates in a matching election. 
Allowing matchings to be selected multiple times positions matching elections within the class of \textit{party-approval elections}~\citep{BGP+19a}, a recently introduced subclass of approval-based multiwinner elections for which stronger axiomatic guarantees are obtainable.

Matching elections exhibit two characteristics that give rise to several interesting theoretical questions: %
First, the number of candidates in a matching election is exponential in the number of agents 
(and thus in the size of the description of an instance). 
As a consequence, a number of standard algorithms for applying voting rules or checking axiomatic guarantees no longer run efficiently, as they iterate over the candidate space.
Second, preferences of agents have a very specific structure. For instance, it is possible to combine certain parts of two matchings, thereby obtaining a ``compromise'' candidate that is approved by some approvers of the first and some approvers of the second matching. Exploiting this structure has the potential to not only recover the computational tractability of voting rules, but also to prove proportional representation guarantees that go beyond those obtainable in more general multiwinner settings. 

We also consider two natural special cases of matching elections: \textit{symmetric} matching elections, where agents' approvals are mutual, and \textit{bipartite} matching elections, where agents are partitioned into two groups and agents only approve members of the opposite group. The previously described applications yield symmetric matching elections if, for example, approvals encode compatibility constraints. Similarly, bipartite matching elections arise whenever matched agents are required to have different attributes regarding professional experience, educational background, gender, etc.

\subsection{Related Work}
\label{sec:related}

Recent years have witnessed a considerable amount of interest in approval-based multiwinner elections (see \cite{LaSk20a} for a survey);  a particular focus has been on axiomatic properties capturing the notion of proportional representation \cite{ABC+16a,SFF+17a,BLS16a,BFJL16a,FMS18a,CJMW19a,JMW20a,PeSk20a,AzLe21a,PPSS21a}.

A fundamental challenge in computational social choice is to model settings where agents are presented an exponential number of possibilities. One method to deal with this is to assume that there exists some compact representation of the agents' preferences that can be systematically lifted to preferences over all possibilities. This approach has been used, for example, in the study of hedonic games \cite{BoJa02a,DBLP:journals/teco/AzizBBHOP19,DBLP:conf/ijcai/BoehmerE20}, fair division \cite{BEL10a,AGM+15a}, and single-winner voting in combinatorial domains \cite{CELM08a,LaXi15a}.  To the best of our knowledge, multiwinner elections with exponentially many candidates have not yet been considered.

The literature on matching problems has produced a variety of optimality criteria for selecting a single matching based on ordinal preferences of agents \cite{Manl13a}. Interestingly, some of these criteria are based on concepts from voting theory. For instance, a popular matching \citep{Gard75a,Cseh17a} corresponds to a (weak) Condorcet winner of the corresponding single-winner matching election. Settings in which \textit{multiple} matchings are to be selected are rare in the literature; a notable exception is the literature on dynamic matching markets, which mostly focuses on extending the notion of stability to temporal settings \citep{dynamic-1,BLY20a}.

Notably, \citet{BoMo04a} consider a setting that is similar to ours, except that \textit{probability distributions} over matchings are chosen (rather than multiple matchings). Probabilities of matchings can be interpreted as time shares, and the utility of an agent is given by the probability of being matched to an approved partner.
They focus on the \textit{egalitarian solution} \citep{BoMo04a}, which chooses probability distributions maximizing the utility of the worst-off agent (breaking ties according to the leximin order).
It was recently shown that such a probability distribution can be computed in polynomial time~\citep{GaBo20a}. 
\citet{BoMo04a} only consider bipartite and symmetric\footnote{%
More precisely, \citet{BoMo04a} do allow asymmetric preferences but assume that agents can only be matched if they approve each other, effectively rendering the setting symmetric.} 
instances and show that, under these restrictions, the egalitarian solution satisfies strong fairness and incentive properties.
It is easy to verify that the egalitarian solution does \textit{not} lead to proportional outcomes in unrestricted matching elections as considered here (see \Cref{fn:egalitarian}).

\subsection{Our Contributions}
We establish matching elections as a novel subdomain of approval-based multiwinner elections with an exponential candidate space and initiate their computational and axiomatic study. 
We consider several established 
(classes of) approval-based multiwinner rules (Thiele rules, Phragmén's sequential rule, and Rule~X) and proportionality axioms 
(PJR, EJR, and core stability). 
Exploiting the structure of matching elections, we prove a number of positive results. 
In particular, we show that all considered sequential rules can be computed in polynomial time despite the exponential candidate space. 
In fact, we show the slightly more general result that those rules are tractable in all elections where a candidate maximizing a weighted approval score can be found efficiently.
We furthermore show that non-sequential Thiele rules such as PAV can be computed efficiently in symmetric and in bipartite matching elections, whereas they are computationally intractable in general matching elections (with a general matching election we mean an election that is neither bipartite nor symmetric). We present these results in \Cref{sec:cc}, which we start with a table (\Cref{fig:ov1}) summarizing our computational results. 

The additional structure of symmetric matching elections has axiomatic ramifications as well: We show that a large class of sequential Thiele rules satisfies EJR in this setting. This is particularly surprising as these rules are known to violate even significantly weaker axioms in general multiwinner elections. On the other hand, Phragmén's sequential rule and Rule~X do not satisfy stronger proportionality axioms compared to the general setting. We present these results in \Cref{sec:axioms}, which we start with a table (\Cref{fig:ov2}) summarizing our axiomatic results.

Lastly, in \Cref{sec:check-axiom},  we show that in matching elections it can be checked efficiently whether a committee satisfies EJR, whereas checking core stability or PJR is intractable. The problem of checking PJR is our only example for a computational problem that is polynomial-time solvable in the party-approval setting and NP-complete in the setting of matching elections. 

The proofs (or their completions) for results marked by (\appsymb) can be found in the appendix.

\section{Preliminaries}
In this section, we define party-approval elections and recap some approval-based multiwinner voting rules and proportionality axioms. 
Let $\mathbb{N}=\{1, 2, \ldots\}$ and $\mathbb{N}_0 = \mathbb{N} \cup \{0\}$.
For $n \in \mathbb{N}$, let $[n]$ denote the set $\{1,\ldots,n\}$.

\subsection{Party-Approval Elections}

A \emph{party-approval election} \citep{BGP+19a} is a tuple $(N,C,A,k)$, where $N$ is a set of agents, $C$ a set of candidates, $A=(A_{a})_{a\in N}$ a preference profile with $A_a\subseteq C$ denoting the approval set of agent~$a$, and $k \in \mathbb{N}$ the committee size.\footnote{To avoid trivial instances, we always assume that there exists at least one agent $a \in N$ with $A_a \neq \emptyset$.} 
A \emph{committee} $W: C \rightarrow \mathbb{N}_0$ is a multiset of candidates, with the interpretation that $W(c)$ is the number of copies of candidate~$c$ contained in $W$. The \textit{size} of a committee $W$ is given by $\sum_{c \in C} W(c)$. 
For an agent $a\in N$ and a committee $W$, we let the happiness score $h_a(W)$ of~$a$ denote the number of (copies of) candidates from $W$ approved by~$a$, i.e., $h_a(W) = \sum_{c \in A_a} W(c)$. Moreover, $N_c= \{a \in N \mid c \in A_a\}$ denotes the set of \textit{approvers} (also called supporters) of $c$, and $|N_c|$ is called the \textit{approval score} of~$c$. 
A \textit{voting rule} maps a party-approval election $(N,C,A,k)$ to a set of committees of size $k$. 
All committees output by a voting rule are considered tied for winning. 
Party-approval elections differ from the more general approval-based multiwinner elections \cite{ABC+16a} in that candidates can appear in a committee multiple times.

It is usually assumed that instances of an election are described by listing all candidates and approval sets explicitly. Since we will deal with elections with an exponential candidate space, we relax this assumption and only require that a representation of an election is given from which the full election can be reconstructed. We will show that several computational problems we consider in the following can be reduced to solving the following problem:
\decprob{Weighted Approval Winner}{A representation of a party-approval election $(N,C,A,k)$ and a weight function \mbox{$\omega\colon N\mapsto \mathbb{R}_{\geq 0}$}.}{A candidate maximizing the total weight of its approvers, i.e., an element of $\argmax_{c\in C}\sum_{a\in N_c} \omega(a)$.}
\noindent We let $\runt$ denote the running time of solving this problem.

\subsection{Voting Rules from Multiwinner Voting}

We describe several methods for computing committees. 
The output of a voting rule consists of all committees that can result from this method for some way of breaking ties.

\newcommand{\score}{\mathit{sc}}

 \paragraph{Thiele Rules \emph{\citep{Thie95a,Jans16a}}} 
 The class of $w$-Thiele rules is parameterized by a \textit{weight sequence}~$w$, i.e., an infinite sequence of non-negative numbers $w=(w_1,w_2,\dots)$ such that $w_1=1$ and $w_i\geq w_{i+1}$ for all~$i$. 
Given a weight sequence~$w$, the score of a committee $W$ is defined as $\score_w(W) = \sum_{a \in N} \sum_{i=1}^{h_a(W)}w_i$. 
The rule $w$-Thiele selects committees maximizing this score. 
Setting $w_i = 1/i$ for all $i \in \mathbb{N}$ yields the arguably most popular $w$-Thiele rule known as \textit{Proportional Approval Voting (PAV)}.   

\paragraph{Sequential $w$-Thiele Rules (seq-$w$-Thiele rules) \emph{\cite{Thie95a,Jans16a}}} 
These variants of $w$-Thiele rules start with the empty committee and add candidates iteratively. Given a multiset~$W$ of already selected candidates, the \textit{marginal contribution} of a candidate $c$ is defined as $\score_w(W \cup \{c\})-\score_w(W)$. In each step, seq-$w$-Thiele adds a candidate with a maximum marginal contribution. Setting $w_i = 1/i$ for all $i \in \mathbb{N}$, we obtain the rule \emph{seq-PAV}.

\paragraph{Phragmén's Sequential Rule (seq-Phragmén) \emph{\citep{Phra94a,Jans16a}}}
In seq-Phragmén, all agents start without money and continuously earn money (i.e., budget) 
at an equal and constant speed. As soon as there is a candidate $c$ such that the group $N_c$ jointly owns one dollar, such a candidate is added to the committee $W$ and the budget of the group $N_c$ is reduced to zero. All remaining agents keep their budget. This is repeated until the committee has size $k$.

\paragraph{Rule~X \emph{\citep{PeSk20a}}} 
Initially, every agent~$a$ has a budget~$b_a$ of $k/n$ dollars. Each candidate costs one dollar and a candidate $c$ is said to be $q$-affordable if $\sum_{a \in N_c} \min\{b_a,q\} \geq 1$. In each round, we add a candidate which is $q$-affordable for minimum $q$ and reduce the budget of the agents from $N_c$ accordingly. The rule stops when there exists no $q$-affordable candidate for any $q>0$. Note that Rule~X might create a committee of size smaller than~$k$; in this case, the committee can be completed by choosing the remaining candidates arbitrarily \citep{PeSk20a}. 

\smallskip
Since seq-$w$-Thiele rules, seq-Phragmén, and Rule~X add candidates to the committee one by one, we refer to these rules as \textit{sequential rules}.

\subsection{Axioms from Multiwinner Voting}
Consider a party-approval election $(N,C,A,k)$.
For $\ell \in [k]$, a set of agents $S \subseteq N$ is \emph{$\ell$-cohesive} if $|S|\geq \ell \frac{n}{k}$ and  $\bigcap_{a \in S} A_a \ne \emptyset$.
We consider three axioms capturing proportional representation  \cite{ABC+16a,SFF+17a}:
\paragraph{Proportional Justified Representation} A committee $W$ provides \emph{proportional justified representation (PJR)} if there does not exist $\ell\in [k]$ and an $\ell$-cohesive group $S$ such that $W$ contains strictly less than $\ell$ (copies of) candidates that are approved by at least one agent in $S$, i.e., $\sum_{c \in \bigcup_{a \in S}A_a} W(c) < \ell$.
\paragraph{Extended Justified Representation} A committee $W$ provides \emph{extended justified representation (EJR)} if there does not exist $\ell\in [k]$ and an $\ell$-cohesive group $S$ such that $h_a(W) < \ell$ for all $a \in S$. 
\paragraph{Core Stability} Given a committee $W$, we say that a group of agents $S\subseteq N$ \textit{blocks} $W$ if $|S| \geq \ell \frac{n}{k}$ for some $\ell \in [k]$ and there exists a committee $ W'$ of size $\ell$ such that $h_a(W') > h_a(W)$ for all $a \in S$. A committee $W$ is \emph{core stable} if it is not blocked by any group of agents. 
\medskip

Core stability implies EJR \citep{ABC+16a}, and EJR implies PJR \citep{SFF+17a}. 
As it is standard in the literature \citep{LaSk20a}, we say that a voting rule satisfies PJR/EJR/core stability if \textit{all} committees in its output satisfy the respective condition.

\section{Matching Elections}\label{sec:first_observation}
In this section, we formally introduce matching elections and establish them as a special case of party-approval elections by giving a formal embedding. We familiarize ourselves with the newly introduced setting by proving some first observations on the special structure of the candidate space as well as showing that the weighted approval winner problem can be solved efficiently. 

A \emph{matching election} is a tuple $(N,A,k)$, 
where $N$ is a finite set of agents, 
$A=(A_a)_{a\in N}$ a preference profile with $A_a\subseteq N\setminus\{a\}$ denoting the set of agents that are approved by agent $a$, 
and $k \in \mathbb{N}$ the number of matchings to be chosen. We let $n$ denote the number of agents $|N|$. For notational convenience, we also call $(N,A)$ a matching election.

A \emph{matching} $M$ is a subset of (unordered) pairs of agents, i.e., $M \subseteq \{\{a,b\} \mid a,b \in N, a \neq b\}$, such that no agent is included in more than one pair. If $\{a,b\} \in M$, we say that $a$ is $b$'s \emph{partner} or $a$ is \emph{matched to} $b$ in $M$. A matching~$M$ is \emph{perfect} if every agent has a partner. An agent $a$ \emph{approves} a matching $M$ if $a$ is matched to some agent~$b$ in~$M$ and $a$ approves $b$, i.e., $b\in A_a$. We let $N_M$ denote the set of agents approving matching~$M$. We call a matching~$M$ \emph{Pareto optimal} if there does not exist another matching~$M'$ such that $N_M \subsetneq N_{M'}$.
We call a matching \emph{minimal} if there does not exist another matching~$M'$ such that $M' \subset M$ and $N_M = N_{M'}$.  An outcome of a matching election is a multiset (or committee) $\mathcal{M}$ of $k$ Pareto optimal and minimal matchings. \footnote{Minimality is only a formal restriction introduced for the sake of consistency, 
as any minimal matching can be extended to a (nearly) perfect matching by adding pairs of unmatched agents.
Pareto optimality enforces that no clearly suboptimal matchings are part of the committee. We can convert any matching $M$ into a Pareto optimal matching $M'$ with $N_M \subseteq N_{M'}$ by solving one instance of \waw. For details, we refer to the proof of \Cref{le:lemWeighApp}.}

\paragraph{Approval Graph}
The \emph{approval graph} of a matching election $(N,A)$ is a mixed graph defined as follows. The nodes of the approval graph are the agents in $N$ and the edges depict the approval preferences: For two agents $a,b\in N$, there is an undirected edge $\{a,b\}$ if $a$ approves $b$ and $b$ approves $a$; 
and there is a directed edge $(a,b)$ if $a$ approves $b$ but $b$ does not approve~$a$. For an example, see the illustration on the left in \Cref{fig:ex}. Observe that a matching is minimal if and only if it contains only pairs which are connected by an (undirected or directed) edge in the approval graph.
Every minimal and Pareto optimal matching is in particular a maximal matching in the approval graph when all edges are interpreted as undirected. Observe that the reverse direction is not true, i.e., not every maximal matching in the approval graph is Pareto optimal.

\paragraph{Bipartite and Symmetric Matching Elections}
We consider two natural domain restrictions for matching elections.  
A matching election $(N,A)$ is called \emph{bipartite} if there exists a partition of the agents $N=N_1 \dot{\cup} N_2$ such that each agent approves only agents from the other set, i.e., if $a \in N_i$ for $i \in \{1,2\}$, then $A_a \subseteq N \setminus N_i$. 
Furthermore, we call a matching election $(N,A)$ \emph{symmetric} if agents' approvals are mutual, i.e., for two agents $a,b\in N$, $b\in A_a$ implies $a\in A_b$.

\subsection{Embedding into Party-Approval Elections}
\label{sec:embedding}

A matching election $(N,A,k)$ can be transformed into a party-approval election $(N',C',A',k')$ with $N'=N$ and $k'=k$, and $C'$ being the set of all Pareto optimal and minimal matchings in $(N,A)$ and $A'$ being the preference profile where each agent approves all candidates corresponding to approved matchings.
As we thereby establish matching elections as a subclass of party-approval elections, voting rules and axioms for party-approval elections directly translate to matching elections.

\tikzstyle{node}=[draw, circle, fill, inner sep = 2pt]
\tikzstyle{squared-node}=[draw, fill, inner sep = 2pt]
\tikzstyle{edge}=[->,> = latex']
\begin{figure}
    \centering
    \begin{tikzpicture}
         \node[node, label=90:$a_1$] (v1) at (0, 0) {};
         \node[node, label=270:$a_2$] (v2) at (0, -1) {};
         \node[node, label=90:$a_3$] (v3) at (1, 0) {};
         \node[node, label=270:$a_4$] (v4) at (1, -1) {};
         \node[node, label=90:$a_5$] (v5) at (2, 0) {};
         \node[node, label=270:$a_6$] (v6) at (2, -1) {};
         \draw[-{Latex[length=2mm]}, very thick] (v1)-- (v2);
         \draw[-{Latex[length=2mm]}, very thick] (v2)-- (v3);
         \draw[-{Latex[length=2mm]}, very thick] (v5)-- (v3);
         \draw[-{Latex[length=2mm]}, very thick] (v6)-- (v4);
         \draw (v3) edge[very thick] (v4);

    \end{tikzpicture} \hspace{0.6cm}\vline\hspace{0.6cm}
        \begin{tikzpicture}
         \node[node, label=90:$a_1$] (v1) at (0, 0) {};
         \node[node, label=270:$a_2$] (v2) at (0, -1) {};
         \node[node, label=90:$a_3$] (v3) at (1, 0) {};
         \node[node, label=270:$a_4$] (v4) at (1, -1) {};
         \node[node, label=90:$a_5$] (v5) at (2, 0) {};
         \node[node, label=270:$a_6$] (v6) at (2, -1) {};
         \draw[-{Latex[length=2mm]}, very thick,color1] (v1) to[bend right=15] (v2);
        \draw[-{Latex[length=2mm]}, very thick,color2] (v1) to[bend left=15] (v2);
         \draw[-{Latex[length=2mm]}, very thick,color3] (v2)-- (v3);
         \draw[-{Latex[length=2mm]}, very thick,color2] (v5)-- (v3);
         \draw[-{Latex[length=2mm]}, very thick,color2] (v6) to[bend left=15] (v4);
        \draw[-{Latex[length=2mm]}, very thick,color3] (v6) to[bend right=15] (v4);
         \draw (v3) edge[very thick,color1] (v4);
         \node[label=\textcolor{color1}{$\bm{c_1}$}] (v1) at (3, -0.5) {};
         \node[label=\textcolor{color2}{$\bm{c_2}$}] (v2) at (3, -1) {};
         \node[label=\textcolor{color3}{$\bm{c_3}$}] (v3) at (3, -1.5) {};
    \end{tikzpicture}
    \caption{The figure on the left depicts the approval graph of the matching election $(N,A)$ with $N=\{a_1,\dots, a_6\}$ and approval sets $A_{a_1}=\{a_2\}$, $A_{a_2}=\{a_3\}$,
    $A_{a_3}=\{a_4\}$,
    $A_{a_4}=\{a_3\}$,
    $A_{a_5}=\{a_3\}$, and
    $A_{a_6}=\{a_4\}$. 
   The figure on the right depicts the three candidates $c_1$, $c_2$, and $c_3$ in the corresponding party-approval election.}
    \label{fig:ex}
\end{figure}
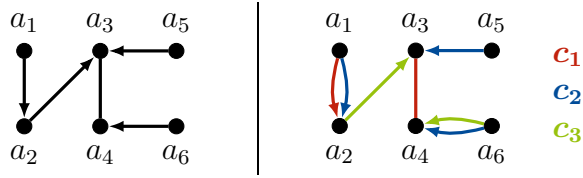

To illustrate the described transformation, we convert the matching election with six agents, whose approval graph is depicted in \Cref{fig:ex}, into a party-approval election.  
The candidates of the corresponding party-approval election are the three Pareto optimal and minimal matchings $c_1=\{\{a_1,a_2\},\{a_3,a_4\}\}$, $c_2=\{\{a_1,a_2\},\{a_3,a_5\},\{a_4,a_6\}\}$, and $c_3=\{\{a_2,a_3\},\{a_4,a_6\}\}$, which are marked on the right side of \Cref{fig:ex}. The approval sets of the agents in the corresponding party-approval election are $A_{a_1}=\{c_1,c_2\}$, $A_{a_2}=\{c_3\}$, $A_{a_3}=A_{a_4}=\{c_1\}$, $A_{a_5}=\{c_2\}$, and $A_{a_6}=\{c_2,c_3\}$. 

To get a feeling for proportionality in this election, let us set $k=3$. Observe that the groups $\{a_3,a_4\}$ and $\{a_5,a_6\}$ make up one third of the electorate while at the same time, each of the groups can agree on a matching they commonly approve. In other words, both groups are $1$-cohesive. Since $a_3$ and $a_4$ only approve $c_1$, this is a strong argument in favor of choosing $c_1$ at least once. Given that $c_1$ is chosen at least once, adding $c_2$ seems preferable over adding $c_3$, since $c_2$ is approved by three agents, two of which are completely unhappy so far, whereas $c_3$ is approved by only two so far completely unhappy agents.
Lastly, there is the choice between selecting $c_3$, which would lead to every agent being satisfied at least once, and selecting one of the more popular matchings $c_1$ or $c_2$ again. In fact, all three resulting committees are core stable.  
PAV and seq-PAV both select $\{c_1,c_2,c_3\}$ in this example, whereas seq-Phragmén returns  $\{c_1,c_2,c_3\}$ and $\{c_1,c_1,c_2\}$ as tied winners. Rule~X terminates after adding $c_1$ and $c_2$ to the committee, which can be interpreted as a three-way tie between $\{c_1,c_1,c_2\}$, $\{c_1,c_2,c_2\}$, and $\{c_1,c_2,c_3\}$.%
\footnote{\label{fn:egalitarian}
A modified version of this example can be used to show that the egalitarian solution \citep{BoMo04a} (see \Cref{sec:related}) may fail to select proportional outcomes:  
Consider the matching election that results from the one depicted in \Cref{fig:ex} when restricting the set of agents to $\{a_1,a_2,a_3,a_4\}$. This election has two candidates $c =\{\{a_1,a_2\},\{a_3,a_4\}\}$ and $c' = \{\{a_2,a_3\}\}$ and approval sets  
$A_{a_1}=A_{a_3}=A_{a_4}=\{c\}$ and $A_{a_2}=\{c'\}$.
The egalitarian solution selects the uniform probability distribution over~$\{c,c'\}$, which can be interpreted in our setting as selecting equally many copies of $c$ and $c'$ (for even $k$). 
This, however, violates PJR. To see this, let $k=4$ and consider the group $S=\{a_1,a_3,a_4\}$. This group is $3$-cohesive, but the committee $\{c,c,c',c'\}$ corresponding to the egalitarian solution contains only $2$ copies of the only candidate approved by agents in~$S$.}

While the focus of this paper is on matching elections, we note that some of our results apply to general party-approval elections. In particular, we establish our algorithmic results in \Cref{sec:comp-rules,sec:check-axiom} by reducing the computational problem at hand to solving instances of \waw (which is polynomial-time solvable for matching elections as shown in \Cref{sub:waw-match}). 

\subsection{First Observations on the Candidate Space}\label{sub:firstOb} 

In this subsection, we make some general first observations about features of our candidate space and the agents' approval sets. We start with an observation about the richness of the candidate space. Given a candidate (i.e., a matching) $M$ and an agent~$a$ disapproving~$M$, it is possible to obtain a new candidate $M'$ that is approved by $a$ and by all agents approving $M$ except at most three: Assuming that $a$ approves at least one agent, say~$b$, to construct $M'$, we remove the pair from $M$ containing $b$, say $\{b,c\}$ (if it exists), as well as the pair containing $a$, say $\{a,d\}$ (if it exists). Finally, we insert the pair $\{a,b\}$. Observe that, for the approval of $a$, we lost at most three approvals from~$M$, namely the ones of $b,c$, and $d$.

\begin{observation}
Given a matching election $(N,A)$ with ${n \geq 4}$, let $M$ be a matching and 
$a \in N \setminus N_M$ an agent with $A_a \neq \emptyset$. 
There exists a matching $M'$ which is approved by $a$ and all but at most three agents from $N_M$.
\end{observation}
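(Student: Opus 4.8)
The plan is to build $M'$ by a single local edit of $M$ and then bound how many approvers this edit can destroy. Since $A_a \neq \emptyset$, I would first fix some $b \in A_a$; note $b \neq a$ because $A_a \subseteq N \setminus \{a\}$. I then obtain $M'$ from $M$ by deleting the pair of $M$ containing $b$ (write it $\{b,c\}$, if $b$ is matched in $M$), deleting the pair of $M$ containing $a$ (write it $\{a,d\}$, if $a$ is matched in $M$), and finally inserting the pair $\{a,b\}$.

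Before counting approvals I would check that $M'$ is a legal matching. After the two deletions both $a$ and $b$ are unmatched, so adding $\{a,b\}$ creates no conflict and every other agent still has at most one partner. I would also pin down the distinctness of $a, b, c, d$, which is what makes the bookkeeping clean: $a \neq c$ and $b \neq d$ both hold because otherwise $a$ would be matched in $M$ to $b \in A_a$, contradicting $a \notin N_M$; and $c \neq d$ since no single agent can be the $M$-partner of both $a$ and $b$.

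The two required properties then follow directly. Agent $a$ approves $M'$ since it is matched to $b \in A_a$. For the loss bound, the key observation is that $M$ and $M'$ agree on every pair not involving one of $a, b, c, d$: if an agent $e \notin \{a,b,c,d\}$ were matched in $M$ to one of $a,b,c,d$, it would itself have to equal $d$, $c$, $b$, or $a$, a contradiction, so $e$ keeps its partner and hence its approval status. Consequently the only members of $N_M$ whose status can change are $b$ (re-matched from $c$ to $a$), $c$ (now unmatched), and $d$ (now unmatched) --- at most three agents. In the degenerate cases where $a$ or $b$ is unmatched in $M$, the corresponding pair is absent and the bound only improves.

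The argument is a careful case check rather than a deep one, so the main thing to get right is the bookkeeping: verifying that no agent outside $\{a,b,c,d\}$ is touched, and that these (at most four) agents are pairwise distinct so that the edit yields a genuine matching. The hypothesis $n \geq 4$ is exactly what guarantees enough room for these four distinct agents in the worst case where both $a$ and $b$ are already matched.
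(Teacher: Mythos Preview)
Your proposal is correct and follows exactly the same construction as the paper: delete the pairs $\{b,c\}$ and $\{a,d\}$ (if present), then insert $\{a,b\}$, losing only the approvals of $b$, $c$, and $d$. You add more careful bookkeeping than the paper (verifying pairwise distinctness of $a,b,c,d$ and that the result is a valid matching), which only strengthens the argument.
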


Using this exchange argument, it is possible to show that the number of approvals of each Pareto optimal matching $M$ is at least $\frac{1}{3}$ of the number of approvals of any other matching $M'$. To see this, we create a third matching $\widetilde{M}$ by the following procedure: Initially, set $\widetilde{M}=M'$. As long as there exists an agent $a \in N_{M}$ not approving $\widetilde{M}$, insert into $\widetilde{M}$ the pair from $M$ containing $a$, say $\{a,b\}$, and delete the pairs $\{a,d\},\{b,c\}$ from~$\widetilde{M}$ (if they exist). This procedure terminates in $|N_M|$ steps, since every agent in $N_M$ is considered at most once. After termination, $N_M \subseteq N_{\widetilde{M}}$ and because $M$ is Pareto optimal, also $N_{M}=N_{\widetilde{M}}$. Since in each iteration the number of approvals went down by at most two, we get that $|N_{M}| \geq |N_{M'}| - 2|N_{M}|$. 
\begin{observation}\label{ob:onethird}
    Let $(N,A)$ be a matching election and $M$ a Pareto optimal matching. For any other matching $M'$, it holds that $|N_M| \geq \frac{1}{3} |N_{M'}|$. 
\end{observation}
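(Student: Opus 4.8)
The plan is to prove the bound by transforming $M'$ into a matching whose approver set \emph{contains} $N_M$, while carefully controlling how many approvers are lost during the transformation. I would start from $\widetilde{M} := M'$ and repair it one agent at a time: as long as some $a \in N_M$ does not approve the current $\widetilde{M}$, take the (unique) pair $\{a,b\} \in M$ incident to $a$ and splice it into $\widetilde{M}$, deleting the at most two pairs of $\widetilde{M}$ that currently cover $a$ or $b$ (write $d$ for $a$'s old partner and $c$ for $b$'s old partner, where they exist). Such a pair $\{a,b\}$ exists with $b \in A_a$ precisely because $a \in N_M$ approves $M$. Since each step turns a fresh agent of $N_M$ into an approver, and no pair of $M$ already inserted is ever disturbed again, every agent of $N_M$ is processed at most once; the procedure therefore halts after exactly $|N_M|$ steps and ends with $N_M \subseteq N_{\widetilde{M}}$.

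Next I would invoke Pareto optimality of $M$: from $N_M \subseteq N_{\widetilde{M}}$ and the fact that no matching can strictly dominate $M$, we get $N_M = N_{\widetilde{M}}$, so the final matching has exactly $|N_M|$ approvers. The quantitative heart of the argument is then a per-step loss bound. In a single step the inserted agent $a$ becomes an approver ($+1$), whereas only the two endpoints freed by the deleted pairs, namely $c$ and $d$, together with $b$ itself (which may no longer approve its new partner $a$), can possibly stop approving. That is at most three agents, so the number of approvers drops by at most $3-1 = 2$ in each step. Starting from $|N_{M'}|$ approvers and decreasing by at most $2$ over $|N_M|$ steps, the terminal count obeys $|N_M| = |N_{\widetilde{M}}| \geq |N_{M'}| - 2|N_M|$, which rearranges to $|N_M| \geq \tfrac{1}{3}|N_{M'}|$.

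The step that needs the most care, and which I regard as the main obstacle, is the per-step accounting. I would verify that $a, b, c, d$ are genuinely distinct whenever they exist — otherwise the count of lost approvers could be misstated — and that no agent \emph{other} than these four can change its approval status, since only the two pairs incident to $a$ and $b$ are modified. Distinctness follows from $b \in A_a$ together with the fact that $a$ does not approve $\widetilde{M}$ before the step (which forbids, e.g., $a$ being matched to $b$ already), while the matching property rules out any collision among $c$, $d$, and $\{a,b\}$. The degenerate cases where $a$ or $b$ is unmatched in $\widetilde{M}$ only remove potential losses, so the bound of two lost approvers per step is preserved.
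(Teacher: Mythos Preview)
Your argument is correct and mirrors the paper's own proof essentially verbatim: start from $\widetilde{M}=M'$, repeatedly splice in the $M$-edge of an unhappy agent $a\in N_M$ while deleting the at most two conflicting edges, use Pareto optimality to conclude $N_{\widetilde M}=N_M$, and bound the per-step loss by two. The only slip is the word ``exactly'': the procedure halts after \emph{at most} $|N_M|$ steps (some agents of $N_M$ may already approve $M'$), but since fewer steps only means fewer lost approvers, the inequality $|N_M|\ge |N_{M'}|-2|N_M|$ still follows.
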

Thus, we know that all candidates in a matching election are approved by the same number of agents up to a factor of three. For symmetric matching elections, it is possible to tighten this bound: Here, all candidates are approved by the same number of agents and it is possible to perform one-to-one exchanges. 
This is also the key observation that helps proving that many seq-$w$-Thiele rules satisfy EJR.
To see why this is true, recall that, in a symmetric matching election, the set of agents approving a minimal matching is exactly the set of matched agents. For the sake of contradiction, assume that there exist two minimal Pareto optimal matchings $M$ and $M'$ where $M$ matches more agents than $M'$. Then, the symmetric difference of $M$ and $M'$ contains at least one path of odd length starting and ending with an edge from $M$. By augmenting $M'$ along this path, it is possible to match an additional agent, which contradicts that $M'$ is Pareto optimal. 
\begin{observation} \label{ob:equalapproval}
    In symmetric matching elections, all candidates have the same approval score and correspond to maximum matchings in the approval graph.
\end{observation}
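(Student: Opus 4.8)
The plan is to reduce the statement to the standard augmenting-path characterization of maximum matchings (Berge's lemma), using symmetry to identify the approvers of a matching with its matched agents. First I would record the basic consequence of symmetry: since approvals are mutual, the approval graph is undirected, and for any \emph{minimal} matching $M$ every pair $\{a,b\}\in M$ is an approval edge, so both $a$ and $b$ approve $M$. Hence $N_M$ is exactly the set of agents matched by $M$, and $|N_M| = 2|M|$. In particular, a candidate's approval score is completely determined by the number of edges it contains, which reduces both claims to showing that every candidate has the maximum possible number of edges.

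Next I would show that every candidate is a maximum matching in the approval graph. Let $M$ be a Pareto optimal minimal matching; by minimality it consists only of approval edges, so it is a matching in the undirected approval graph. Suppose for contradiction that $M$ is not maximum. Then there is an $M$-augmenting path $P$ in the approval graph, that is, a path whose two endpoints are unmatched by $M$ and whose edges alternate between $M$-edges and non-$M$-edges. Augmenting $M$ along $P$ (deleting the $M$-edges of $P$ and inserting the others) yields a matching $M'$ with $|M'| = |M|+1$, and all edges of $M'$ are again approval edges. The two endpoints of $P$ are now matched to approved partners, while every agent already matched by $M$ remains matched to an approved partner; hence $N_M \subsetneq N_{M'}$, contradicting the Pareto optimality of $M$. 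Therefore $M$ is a maximum matching.

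Finally, since every candidate is a maximum matching and all maximum matchings share the same cardinality $\nu$, each candidate matches $\nu$ pairs and thus, by the first step, has approval score $2\nu$. This establishes both parts of the statement at once. I expect the only delicate point to be the strict inclusion $N_M \subsetneq N_{M'}$ after augmentation: this is exactly where symmetry is indispensable, since it guarantees that the freshly inserted pairs create approvals at \emph{both} endpoints and that no previously satisfied agent loses its approval. In an asymmetric election an augmenting step could attach an agent to a partner it does not approve, so both the argument and the statement itself would break down.
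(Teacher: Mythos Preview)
Your proposal is correct and follows essentially the same approach as the paper: both use that symmetry makes $N_M$ equal to the set of matched agents, and both derive a contradiction to Pareto optimality via an augmenting path. The only cosmetic difference is that the paper compares two candidates via their symmetric difference, whereas you invoke Berge's lemma directly against an abstract maximum matching; your version is arguably slightly cleaner since it yields the ``maximum matching'' claim immediately rather than via an extra step.
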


The first part of \Cref{ob:equalapproval} already implies that symmetric matching elections have a strong structure. The second part has even further implications on the distribution of approvals of agents. These follow from the Gallai-Edmonds Structure Theorem \citep{Gall64a,Edmo65a}, which describes the structure of maximum matchings in undirected graphs. For our setting, the theorem implies that we can partition the agents into three sets $W,X,$ and $Y$ such that all agents from $X$ and $W$ approve every Pareto optimal matching. Moreover, in every Pareto optimal matching, all agents from $X$ are matched to agents from $Y$ and agents from $W$ are matched among themselves. Using this theorem, we can convert every symmetric matching election into an essentially equivalent bipartite matching election. Here, the agents $Y$ form one part of the bipartition and agents from $X$ (plus some dummy nodes) form the other part. We present details on the transformation and the Gallai-Edmonds Structure Theorem in \Cref{sub:biptosym} and in \Cref{ap:cc}.

\medskip

\subsection{Weighted Approval Winner Problem}\label{sub:waw-match} 

For matching elections, we can solve \waw efficiently by solving two \textsc{Maximum Weighted Matching} instances:

\begin{restatable}[\appsymb]{lemma}{lemWeighApp}\label{le:lemWeighApp}
    Given a matching election $(N,A)$ and a weight function $\omega$, \textsc{Weighted Approval Winner} is solvable in $\mathcal{O}(n^3)$-time.
\end{restatable}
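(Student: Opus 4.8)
The plan is to reduce \waw to computing maximum weight matchings in the approval graph, using two matching instances: the first to find the optimal weighted approval value, and the second to turn the result into a legitimate candidate (a Pareto optimal and minimal matching).

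First I would set up the core reduction. On vertex set $N$, I assign every pair $\{a,b\}$ that forms an edge of the approval graph the weight
\[
 w_1(\{a,b\}) = \omega(a)\cdot\mathbf{1}[b\in A_a] + \omega(b)\cdot\mathbf{1}[a\in A_b],
\]
so that an undirected edge gets $\omega(a)+\omega(b)$ and a directed edge $(a,b)$ gets only $\omega(a)$. The key identity is that for any matching $M$ (which uses only approval-graph edges, being minimal) an agent $a$ lies in $N_M$ exactly when it is matched to a partner it approves; hence its contribution $\omega(a)$ to $\sum_{a\in N_M}\omega(a)$ is captured by the unique edge of $M$ incident to $a$. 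Summing edgewise yields $\sum_{e\in M} w_1(e) = \sum_{a\in N_M}\omega(a)$, so a maximum weight matching under $w_1$ attains the value $V^{*}:=\max_M \sum_{a\in N_M}\omega(a)$, the maximum ranging over all matchings.

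Second, I would address that the output must be a candidate, whereas a plain $w_1$-maximum matching need not be Pareto optimal (agents with weight $0$ may be left unmatched). I first note that $\max_{c\in C}\sum_{a\in N_c}\omega(a)=V^{*}$: every candidate is a matching, and conversely any matching extends to a Pareto optimal, minimal matching with a superset of approvers, which cannot lower the weighted score. To produce such a candidate from the matching $M^{*}$ returned by the first instance, I run a second maximum weight matching with integer vertex values $v(a)=n+1$ for $a\in N_{M^{*}}$ and $v(a)=1$ otherwise, and edge weights $\widehat w(\{a,b\})=v(a)\mathbf{1}[b\in A_a]+v(b)\mathbf{1}[a\in A_b]$. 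Since satisfying all of $N_{M^{*}}$ is worth at least $(n+1)\,|N_{M^{*}}|$, strictly more than any matching dropping even one such agent can reach, the optimal matching $\widehat M$ satisfies $N_{M^{*}}\subseteq N_{\widehat M}$ and, subject to this, maximizes $|N_{\widehat M}|$; the latter forces Pareto optimality, since any $M'$ with $N_{\widehat M}\subsetneq N_{M'}$ would also contain $N_{M^{*}}$ yet have more approvers. Deleting from $\widehat M$ every edge with no approver leaves $N_{\widehat M}$ unchanged and makes it minimal, and $N_{M^{*}}\subseteq N_{\widehat M}$ gives weighted score exactly $V^{*}$, so $\widehat M\in\argmax_{c\in C}\sum_{a\in N_c}\omega(a)$.

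Each maximum weight matching on $n$ vertices is computable in $\mathcal{O}(n^3)$ time, building the two weighted graphs costs $\mathcal{O}(n^2)$, and the minimality clean-up is linear, giving an overall $\mathcal{O}(n^3)$ bound. The reduction itself is routine; the main obstacle I expect is the Pareto-optimality and minimality requirement. A single $w_1$-optimal matching may violate it, and because $\omega$ is real-valued one cannot simply fold the secondary ``maximize the number of approvers'' objective into the first instance via a fixed scaling. Separating the two objectives into two lexicographically ordered matching instances, and verifying that the second instance preserves $N_{M^{*}}$ while attaining Pareto optimality, is the crux of the argument.
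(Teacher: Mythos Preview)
Your proposal is correct and follows essentially the same approach as the paper: reduce to two maximum weight matching computations on the approval graph, the first with edge weights derived from $\omega$ to attain the optimal weighted approval value, and the second with the $(n+1)/1$ vertex weights to enforce $N_{M^*}\subseteq N_{\widehat M}$ and Pareto optimality. The only cosmetic difference is that your minimality clean-up step is unnecessary, since you already restrict to edges of the approval graph (each of which has at least one approving endpoint), so $\widehat M$ is automatically minimal.
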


\begin{proofS}
    Given a matching election $(N,A)$ and a weight function $\omega$ on the agents, let $G$ be the undirected graph induced by the approval graph of $(N,A)$ (i.e., there exists an edge between two nodes in $G$ if there exists a corresponding directed or undirected edge in the approval graph). We define a weight function $w$ on the edges of $G$ such that for every matching $M$ in $G$ it holds that $\sum_{e \in M} w(e) = \sum_{ a \in N_M} \omega(a)$. This is achieved by summing up the weights of the endpoints approving an edge. Clearly, if $M$ is a maximum weight matching with respect to $w$, then it also maximizes $\sum_{a \in N_M} \omega(a)$, which we refer to as the \emph{weighted approval score}.
    By construction, $M$ is minimal. However, it is not guaranteed to be Pareto optimal, as there might exist agents $a \in N$ with $\omega(a) =0$. Therefore, $M$ might not be a candidate in the matching election.
    In a second step, we introduce a second weight function $\omega'$ on the agents giving all agents in $N_M$ a weight of $n+1$, and all agents in $N \setminus N_M$ a weight of $1$. Again, we derive a weight function on the edges of $G$, $w'$, guaranteeing $\sum_{e \in M} w'(e) = \sum_{ a \in N_M} \omega'(a)$. We show: If $M'$ is a maximum weight matching with respect to $w'$, then $M'$ is Pareto optimal and minimal. Moreover, $N_M \subseteq N_{M'}$ holds by construction of $\omega'$. Hence, $M'$ also maximizes the weighted approval score with respect to $\omega$. Thus, $M'$ is a solution to the \textsc{Weighted Approval Winner} problem for the matching election $(N,A)$ and the weight function $\omega$. 
\end{proofS}

\newcommand{\A}{\hat{E}}

Note that there exist other elections with an exponential candidate space for which \textsc{Weighted Approval Winner} is polynomial-time solvable. For instance, for all party-approval elections $(N,C,A,k)$ where the independent set system $(N,\{S\mid S\subseteq N_c \textnormal{ for some } c\in C\})$ forms a matroid,
\textsc{Weighted Approval Winner} reduces to finding a maximum weight independent set. This problem is polynomial-time solvable if the independence of a set $S\subseteq N$ can be checked efficiently \cite{10.5555/2190621}.

\section{Computational Complexity of Winner Determination}\label{sec:cc}
In this section, we analyze the computational complexity of computing winning committees for different voting rules. We give an overview of our results from this section in \Cref{fig:ov1}. While some of our results are tailored to matching elections, our algorithmic results in \Cref{sec:comp-rules} are applicable to a wider class of elections with an exponential number of candidates.
We start by considering sequential rules before we turn to $w$-Thiele rules. For $w$-Thiele rules, we first consider the general then the bipartite and lastly the symmetric setting. 

\begin{table}
    \centering
    \scalebox{0.85}{
\begin{tabular}{llll}
\topline
\rowcolor{tableheadcolor}
\textbf{Rules} & Party-Approval Elections & Matching Elections & Symmetric Matching Elections \\
\midtopline
$w$-Thiele & NP-hard \cite{BGP+19a}& NP-hard (Theorem \ref{th:PAV-hard}) & P (Theorem \ref{thm:wThiele-bipartite}, Corollary \ref{co:wThieleSym}) \\
seq-$w$-Thiele & P \cite{AGG+15a} 
& P (Observation \ref{obs:seqThielePoly}) & P \\
seq-Phragmén & P \cite{BFJL16a}  & P (Theorem \ref{thm:SeqPhragmen})& P \\
Rule~X & P \cite{PeSk20a}  & P (Theorem \ref{thm:RuleX}) & P \\
\bottomrule %\vspace*{0.1cm}
\end{tabular}
}
 \caption{Summary of results on the complexity of computing a winning committee for several multiwinner voting rules. We remark that the previously known results within the setting of party-approval elections do not have any implications for the matching election setting. Our hardness result (Theorem~\ref{th:PAV-hard}) is restricted to $w$-Thiele rules satisfying $w_1 >w_2>0$. We additionally prove in \Cref{thm:wThiele-bipartite}  that a winning committee under a $w$-Thiele rule in a bipartite matching election can be computed in polynomial time.} 
    \label{fig:ov1}
\end{table}

\subsection{Sequential Rules}
\label{sec:comp-rules}

For all considered sequential voting rules, we show that finding the next candidate to be added to the committee reduces to solving \waw. 
Recall that $\runt$ denotes the running time of solving the latter problem.

For sequential $w$-Thiele rules, 
this reduction is straightforward:
Given a multiset $W$ of already selected candidates, we set the weight of an agent $a$ to its marginal contribution to the score in case that a candidate in $A_a$ is added to $W$, i.e., $\omega(a)=w_{h_a(W)+1}$. The candidate returned by \waw is then added to the committee.

\begin{observation}\label{obs:seqThielePoly}
Given a party-approval election $(N,C,A,k)$ and a weight sequence $w$, a committee that is winning under seq-$w$-Thiele can be computed in $\mathcal{O}(k\cdot \runt)$-time.
\end{observation}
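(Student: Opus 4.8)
The plan is to reduce a single greedy step of seq-$w$-Thiele to one call of \waw, and then to repeat this $k$ times. The key point, already hinted at in the paragraph preceding the statement, is that the marginal contribution of a candidate is nothing but a weighted approval score.

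First I would write the marginal contribution in closed form. Fix the current multiset $W$ of selected candidates. Adding a candidate $c$ raises $h_a(W)$ by exactly one for every approver $a \in N_c$ and leaves it unchanged for all other agents. Plugging this into $\score_w(W) = \sum_{a\in N}\sum_{i=1}^{h_a(W)} w_i$, the inner sum grows by exactly one term for each approver, so the difference leaves only the new summand $w_{h_a(W)+1}$ per approver:
\[
\score_w(W \cup \{c\}) - \score_w(W) = \sum_{a \in N_c} w_{h_a(W)+1}.
\]
Defining $\omega(a) = w_{h_a(W)+1}$ --- which is non-negative, and well-defined because $h_a(W) \le |W| < k$, so only the finitely many entries $w_1,\dots,w_k$ are ever needed --- the right-hand side is exactly $\sum_{a \in N_c}\omega(a)$. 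Hence a candidate of maximum marginal contribution is precisely an element of $\argmax_{c\in C}\sum_{a\in N_c}\omega(a)$, that is, a solution to \waw for the weight function $\omega$ on the given representation of the election.

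With this reduction in hand the algorithm is immediate: start from $W = \emptyset$, and in each of the $k$ rounds recompute $\omega$ from the current values $h_a(W)$, invoke the \waw oracle once to obtain a next candidate $c$, add $c$ to $W$, and update $h_a(W)$ for the $\mathcal{O}(n)$ affected agents. Resolving the ties in \waw by any fixed rule yields one committee in the output of seq-$w$-Thiele, which is all the statement asks for (we need a single winning committee, not the entire tied output). Each round costs $\runt$ for the oracle call plus $\mathcal{O}(n)$ bookkeeping; since $\runt = \Omega(n)$ --- the weight function already ranges over all of $N$, so the oracle must at least read the agents --- the bookkeeping is absorbed and the total running time is $\mathcal{O}(k\cdot\runt)$. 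There is essentially no obstacle here; the only subtlety worth flagging is that, because party-approval committees are multisets, it is perfectly fine for \waw to return a candidate already present in $W$, which is exactly the repeated selection seq-$w$-Thiele is permitted to make.
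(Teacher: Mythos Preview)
Your argument is correct and matches the paper's approach exactly: set $\omega(a)=w_{h_a(W)+1}$, call \waw once per round, and iterate $k$ times. Your write-up is in fact more detailed than the paper's, which simply states the weight assignment in the paragraph preceding the observation and leaves the rest implicit.
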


We show in the appendix that a similar reduction also works for a local search variant of PAV~\citep{AEH+18a}. As this variant satisfies core stability in party-approval elections \cite{BGP+19a}, a core-stable outcome in a matching election can be computed efficiently.
\begin{restatable}[\appsymb]{observation}{lpav} 
    Given a party-approval election $(N,C,A,k)$, a committee satisfying core stability can be computed in $\mathcal{O}(nk^4\ln(k)\cdot r_{waw})$-time.
\end{restatable}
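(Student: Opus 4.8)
The plan is to reduce the problem of computing a core-stable committee to repeatedly solving instances of \waw, exactly as in \Cref{obs:seqThielePoly}, but now driving the \emph{local search} variant of PAV rather than the sequential one. Recall that the local-search PAV algorithm of \citet{AEH+18a} starts from an arbitrary size-$k$ committee and, as long as there is a single-candidate swap (removing one copy of some candidate in the committee and inserting one copy of some candidate from $C$) that strictly increases the PAV score by more than a $(1+\epsilon)$ multiplicative factor, it performs such a swap; otherwise it halts. The first thing I would do is recall from \citet{BGP+19a} that when $\epsilon$ is chosen appropriately (polynomially small, roughly $1/(nk^2)$ or similar), the committee returned by this procedure satisfies core stability in party-approval elections. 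So the entire task is to show that each local-search step can be executed in time $\mathcal{O}(\runt)$ up to polynomial factors, and that the number of steps is polynomially bounded.

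For a single step, I would argue as follows. Fix the current committee $W$. For each candidate $c^- \in C$ currently in $W$ (there are at most $k$ distinct such candidates), removing one copy changes each agent's happiness score, and I want to find the best candidate $c^+$ to insert. The crucial observation is that the marginal gain of inserting a copy of $c^+$ into the committee $W \setminus \{c^-\}$ is, for each agent $a$, simply $w_{h_a(W\setminus\{c^-\})+1}$ if $a$ approves $c^+$ and $0$ otherwise; this is exactly the weighted-approval structure, so setting $\omega(a) = w_{h_a(W\setminus\{c^-\})+1}$ and calling \waw returns the best insertion candidate for that particular removal. Iterating over all $\mathcal{O}(k)$ choices of the removed candidate $c^-$, each requiring one \waw call, lets us find the single best swap in $\mathcal{O}(k\cdot\runt)$ time, after which we compare against the $(1+\epsilon)$ threshold and either apply the swap or terminate.

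The remaining piece is to bound the number of iterations. Here I would use a standard potential argument: the PAV score is bounded above (by $\sum_{a\in N}\sum_{i=1}^{k} w_i$, which for PAV is $\mathcal{O}(n\ln k)$), it is nonnegative, and each executed swap multiplies the score by at least $(1+\epsilon)$. Taking logarithms, the number of swaps is $\mathcal{O}\!\left(\frac{1}{\epsilon}\ln\!\big(\text{max score}/\text{min positive score}\big)\right)$; with $\epsilon$ polynomially small in $n$ and $k$ and the score range being polynomial, this yields a polynomial iteration count. Multiplying the per-iteration cost $\mathcal{O}(k\cdot\runt)$ by this bound should reproduce the claimed $\mathcal{O}(nk^4\ln(k)\cdot \runt)$ running time.

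The main obstacle I anticipate is not the reduction itself, which is routine, but pinning down the exact value of $\epsilon$ and the precise iteration bound so that the running time matches the stated $\mathcal{O}(nk^4\ln(k)\cdot\runt)$. This requires carefully importing the exact parameter choice under which \citet{BGP+19a} establish core stability for local-search PAV, and then tracking how that $\epsilon$ propagates through the potential argument; the extra factors of $k$ and the $\ln k$ term in the bound presumably come from the combination of the score range for PAV and the per-step cost, so I would reconcile these constants against the source result rather than rederive the core-stability guarantee from scratch.
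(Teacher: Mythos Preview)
Your approach is essentially the same as the paper's: run the local-search variant of PAV, implement each step by iterating over the at most $k$ candidates currently in the committee, and for each removal find the best replacement via one call to \waw with weights $\omega(a)=w_{h_a(W\setminus\{c^-\})+1}$. The per-step cost $\mathcal{O}(k\cdot\runt)$ is exactly what the paper obtains.

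One correction: the improvement criterion used in the source result is \emph{additive}, not multiplicative. The paper (following \citet{AEH+18a} and \citet{BGP+19a}) calls a swap improving if $\score_w(W')\geq \score_w(W)+\epsilon$ with $\epsilon=\frac{1}{(1+2(k-1))(k-1)k}$. Your multiplicative variant is problematic because the initial committee is arbitrary and may have score zero, so a $(1+\epsilon)$-factor bound gives no control over the number of iterations. With the additive criterion, the paper simply imports from \citet{BGP+19a} that the procedure terminates after $\mathcal{O}(nk^3\ln(k))$ swaps and outputs a core-stable committee; it does not rederive this bound via a potential argument. Multiplying $\mathcal{O}(nk^3\ln(k))$ iterations by $\mathcal{O}(k\cdot\runt)$ per iteration gives the stated $\mathcal{O}(nk^4\ln(k)\cdot\runt)$.
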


Our algorithm for Phragmén's sequential rule employs \waw in a more involved way. 
\begin{theorem}\label{thm:SeqPhragmen}
Given a party-approval election $(N,C,A,k)$, a committee that is winning under seq-Phragmén can be computed in $\mathcal{O}(kn\cdot r_{waw})$-time.
\end{theorem}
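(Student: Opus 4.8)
The plan is to simulate seq-Phragmén directly, reducing each of its $k$ rounds to one invocation of \waw. The central difficulty is that seq-Phragmén is defined as a \emph{continuous} process: agents accumulate budget at a constant rate and a candidate is selected at the exact moment its supporters jointly own one dollar. To make this algorithmic, I would first argue that the continuous process can be discretized into exactly $k$ events, one per added candidate, and that the only quantity we need to track is, for each agent, the amount of budget it currently holds (equivalently, the ``load'' assigned to it). The key observation is that at any point in time, if $t$ denotes the total time elapsed, an agent $a$ that has not spent anything since time $t_0$ holds exactly $t - t_0$ dollars; more generally each agent carries a running balance that is reset to zero whenever a candidate it supports is bought.

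The main step is to show how to identify, in a single round, which candidate is bought next and at what time. Suppose after some round the committee is $W$ and each agent $a$ holds budget $b_a$. Each agent continues to earn money at unit rate, so after an additional time $\tau$ agent $a$ holds $b_a + \tau$. A candidate $c$ becomes affordable at the first time $\tau$ for which $\sum_{a \in N_c}(b_a + \tau) = 1$, i.e.\ at $\tau_c = (1 - \sum_{a \in N_c} b_a)/|N_c|$. The next candidate to be added is one minimizing $\tau_c$, equivalently one minimizing $(1 - \sum_{a \in N_c} b_a)/|N_c|$. This is not immediately a \waw instance because of the denominator $|N_c|$, so the plan is to guess or iterate over the value of $|N_c|$ (the number of supporters, an integer in $\{1,\dots,n\}$), and for each fixed value $j$ solve a weighted-approval-score maximization restricted to candidates with exactly $j$ supporters. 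For fixed $j$, minimizing $(1 - \sum_{a\in N_c} b_a)/j$ is the same as maximizing $\sum_{a \in N_c} b_a$, which is exactly \waw with weight function $\omega(a) = b_a$. Taking, over all $n$ choices of $j$, the candidate yielding the smallest $\tau_c$ gives the next candidate and its buy-time; I would then update the balances by setting $b_a \leftarrow b_a + \tau_c$ for all $a$ and resetting $b_a \leftarrow 0$ for $a \in N_c$.

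There is one subtlety to address carefully: the \waw oracle returns a weighted-score maximizer over \emph{all} candidates, whereas here we want a maximizer among candidates whose supporter count equals a prescribed $j$. For matching elections this is harmless because all candidates have comparable approval scores (indeed in the symmetric case equal, by \Cref{ob:equalapproval}), but in a \emph{general} party-approval election restricting to a fixed support size may not itself be solvable via \waw. To keep the statement clean for arbitrary party-approval elections, I would instead observe that for each candidate returned by \waw under weights $b_a$ we actually know $N_c$ and hence $|N_c|$ exactly, and argue that the candidate minimizing $\tau_c$ overall can be found by running \waw once per candidate support-size stratum; the total number of oracle calls per round is therefore $\mathcal{O}(n)$. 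Summed over $k$ rounds this gives the claimed $\mathcal{O}(kn \cdot \runt)$ running time, with the per-round bookkeeping (updating $n$ balances) absorbed into the cost of the oracle calls.

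I expect the main obstacle to be precisely this mismatch between ``maximize weighted approval score'' and ``minimize buy-time $(1-\sum b_a)/|N_c|$,'' and the need to handle the denominator by stratifying over the support size $|N_c|$ at the cost of a factor $n$ in oracle calls. A secondary technical point is verifying that the discretized balance-update rule faithfully reproduces the continuous seq-Phragmén dynamics across rounds, including correct tie-breaking so that the set of committees produced matches the rule's output under all tie-breaking choices; this requires a short induction on the rounds showing that the vector of balances after each discrete event coincides with the continuous process evaluated at the corresponding buy-time.
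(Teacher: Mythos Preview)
Your discretization and the reduction of a round to minimizing $\tau_c=(1-\sum_{a\in N_c}b_a)/|N_c|$ are exactly right, and you correctly isolate the real difficulty: the denominator $|N_c|$ prevents this from being a single \waw call. However, your proposed fix---``run \waw once per candidate support-size stratum''---does not go through. The \waw oracle returns $\argmax_{c\in C}\sum_{a\in N_c}\omega(a)$ with no constraint on $|N_c|$; there is no choice of weights $\omega$ that forces the maximizer to lie in a prescribed stratum $\{c:|N_c|=j\}$ while still ranking candidates inside that stratum by $\sum_{a\in N_c}b_a$. You acknowledge this yourself (``restricting to a fixed support size may not itself be solvable via \waw''), but then the next sentence simply asserts that one \waw call per stratum suffices, without saying what weight function would target stratum~$j$. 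Knowing $|N_{c^*}|$ for the single candidate $c^*$ returned by \waw under weights $b_a$ tells you nothing about the best candidate in any other stratum. So as written, the per-round algorithm is incomplete.

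The paper closes this gap with a different idea: instead of stratifying by $|N_c|$, treat the additional waiting time $t$ as a parameter and study the upper envelope $f(t)=\max_{c\in C}\bigl(\sum_{a\in N_c}b_a+|N_c|\,t\bigr)$. A single evaluation $f(t_0)$ is one \waw call with weights $\omega(a)=b_a+t_0$, and since each $f_c$ is affine, $f$ is convex piecewise linear with at most $n$ slopes (the possible values of $|N_c|$), hence at most $n$ breakpoints. The Eisner--Severance method (or a binary search over the linear pieces) then locates the first time $t^*$ with $f(t^*)=1$ using $\mathcal{O}(n)$ evaluations of $f$, i.e.\ $\mathcal{O}(n)$ \waw calls per round. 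The trick you were missing is that varying the scalar $t$ in the weights $b_a+t$ is precisely what trades off ``total budget'' against ``support size,'' so one never needs to restrict the oracle to a fixed $|N_c|$.
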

\begin{proof}
In each iteration, the problem of finding a candidate to be added to the committee can be described as follows. Each agent $a \in N$ has previously accumulated a budget of $\beta_a\geq 0$ and constantly earns additional money. Thus, at time $t\in \mathbb{R}_{\geq 0}$, agent~$a$ owns $b_a(t) =\beta_a + t$ dollars. The total budget of the approvers $N_c$ of a candidate $c\in C$ at time~$t$ can be expressed as an affine linear function $f_c(t)=\sum_{a \in N_c}\beta_a + |N_c| \cdot t$. Moreover, we define $f(t) = \max_{c\in C}f_c(t)$ as the \emph{optimal value curve}, taking the value of the maximum budget of any supporter group for a candidate at time $t$. Define $t^*$ as the minimum value $t \in \mathbb{R}_{\geq 0}$ such that $f(t) = 1$. Such a value always exists and lies in the real interval $[0,1]$ since $f(0)\leq 1$ (by definition of seq-Phragmén), $f(1)\geq 1$, and $f(t)$ is continuous on $[0,1]$.
A candidate $c^*$ with $f_{c^*}(t^*) = f(t^*)=1$ is a feasible choice under seq-Phragmén in this iteration. See \Cref{fi:SeqPhragmen} for an illustration. In the following we argue that $t^*$ and $c^*$ can be computed by using a classical method from parametric optimization and solving \waw as a subroutine.

\begin{figure}\centering
\scalebox{1}{
\begin{tikzpicture}
\node(y) at (0,3){};
\node(x) at (7,0.2){};
\node at (7.2,-0.1){$t$};
\draw[->,black, ultra thick] (0,0.2) -- (y);
\draw[->,black, ultra thick] (0,0.2) -- (x);

\node at (5,3){\textcolor{color2}{$\bm{f(t)}$}};

%can1 
\draw[-,black!70, thick] (0,1.2) -- (7,1.4);
\draw[-,color2,ultra thick] (0,1.2) -- (1.4,1.25);

%can2 
\draw[-,black!70, thick,dash pattern = on 5pt off 2pt] (0,.9) -- (1.4,1.25);
\draw[-,black!70, thick,dash pattern = on 5pt off 2pt] (3.77,1.81) -- (7,2.6);
\draw[-,color2,ultra thick,dash pattern = on 5pt off 2pt] (1.4,1.25) -- (3.77,1.81);

%can3 
\draw[-,black!70, thick, dotted] (0,0.2) -- (3.77,1.81);
\draw[-,color2,ultra thick,dotted] (3.77,1.81) -- (7,3.2);

\draw[-,color1,very thick] (0,1.52) -- (7,1.52);
\draw[-,color1,very thick] (2.6,1.52) -- (2.6,0.2);
\node at (-.3,1.6){\textcolor{color1}{$1$}};
\node at (2.6,-.1){\textcolor{color1}{$\bm{t^*}$}};

\node[fill=black!70,circle,inner sep=2](first)at(1.4,1.25){};
\node[fill=black!70,circle,inner sep=2](first)at(3.77,1.81){};

\draw [draw=black!40, thick] (8,1.8) rectangle (10,4);
%can1 
\draw[-,color2,ultra thick] (8.2,3.5) -- (8.6,3.5);
\draw[-,black!50,very thick] (8.2,3.4) -- (8.6,3.4);
\node at (9.3,3.4){$f_{c_1}(t)$};
%can2 
\draw[-,color2,ultra thick,dotted] (8.2,2.9) -- (8.6,2.9);
\draw[-,black!50,very thick,dotted] (8.2,2.8) -- (8.6,2.8);
\node at (9.3,2.8){$f_{c_2}(t)$};
%can3 
\draw[-,color2,ultra thick,dash pattern=on 5pt off 2pt] (8.2,2.3) -- (8.6,2.3);
\draw[-,black!50,very thick,dash pattern=on 5pt off 2pt] (8.2,2.2) -- (8.6,2.2);
\node at (9.3,2.2){$f_{c_3}(t)$};

\end{tikzpicture}
}
\caption{Illustration of the situation in the proof of \Cref{thm:SeqPhragmen}. The example depicts the budget curves for three different candidates $c_1,$ $c_2,$ and $c_3$. The functions $f_{c_1}(t),$ $f_{c_2}(t)$, and $f_{c_3}(t)$ are depicted by a solid, dotted, and dashed line, respectively. The optimal value curve $f(t)$ is marked in blue.} \label{fi:SeqPhragmen}
\end{figure}
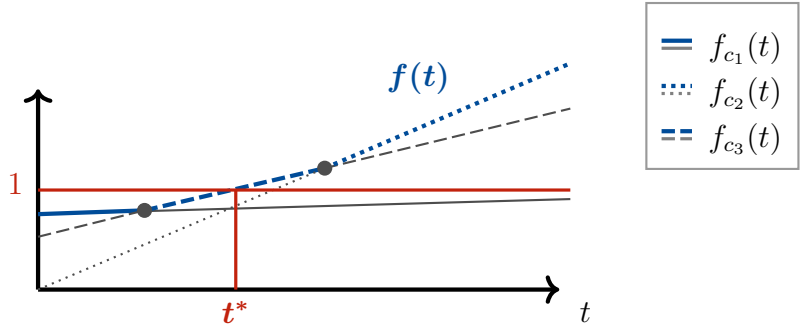

Observe that the function $f(t)$ is increasing, piecewise linear, and convex, where the latter holds because taking the pointwise maximum of convex functions results in a convex function. In order to make parts of this proof also applicable to the proof of \Cref{thm:RuleX}, we are only going to use that $f(t)$ is non-decreasing (and not that it is increasing) in the following. For a given point $t\geq 0$, we can evaluate $f(t)$ by employing the \waw problem using $b_a(t)$ as the weight of each agent $a\in N$ and computing the weight of the returned candidate. This also yields a candidate~$c$ with $f_{c}(t) = f(t)$.

The crux of finding $t^*$ is that $f(t)$ is the maximum of exponentially many functions. However, we observe that the piecewise linear function $f(t)$ has at most $n$ breaking points because the slope of $f(t)$ can take at most $n+1$ different values: for each candidate~$c$, $|N_c| \in \{0,\dots,n\}$. Hence, if we knew all breaking points of $f(t)$, we could find $t^*$ by evaluating the resulting $\mathcal{O}(n)$ linear subintervals of $f(t)$.  The Eisner-Severance method \citep{EiSe76a} can be employed to find the breaking points of $f(t)$, using $\mathcal{O}(n)$ calls to $\waw$.\footnote{More formally, the Eisner-Severance method takes as input a piecewise linear convex function $g$ defined on an interval $\mathcal{I}$. Additionally, we need to include a method to \emph{evaluate} $g$ on some point $\tau \in \mathcal{I}$ which returns the value $g(\tau)$ and an affine linear function $h$ with $g(\tau) = h(\tau)$ and $g(\tau') \geq h(\tau')$ for all $\tau' \in \mathcal{I}$. Given this input, the Eisner Severance method finds all breaking points of $g$ on $\mathcal{I}$ using $\mathcal{O}(z)$ evaluations, where $z$ is the number of breaking points of $g$.}

Since $f(t)$ is non-decreasing, we do not always have to find all breaking points in order to find $t^*$. Even though this does not improve the worst-case running time, we describe below an algorithm to find $t^*$, which mixes the idea of the Eisner-Severance method with a binary search approach.

We start by searching for two candidates $\underline{c}$ and $\overline{c}$ 
such that $f_{\underline{c}}(0)=f(0)$ and $f_{\overline{c}}(1)=f(1)$ (by solving the \waw problem). 
If $f_{\underline{c}}(0)=1$, we are done. Moreover, if $f_{\underline{c}}(t) = f_{\overline{c}}(t)$ for all $t \in [0,1]$, then there is no breaking point of $f(t)$ within the interval $[0,1]$ and we can find~$t^*$ by solving $f_{\underline{c}}(t^*) = 1$. 
Otherwise, we calculate the intersection point of $f_{\underline{c}}(t)$ and $f_{\overline{c}}(t)$, say $\hat{t}$. 
By definition of $f$, we have $f_{\underline{c}}(\hat{t}) \leq f(\hat{t})$ and we distinguish the following two cases:

If $f_{\underline{c}}(\hat{t})=f(\hat{t})$, we have found a breaking point of $f(t)$ and there is no other breaking point within the intervals $[0,\hat{t}]$ or $[\hat{t},1]$. 
Then, if $f(\hat{t}) \geq 1$, we find $t^*$ by solving $f_{\underline{c}}(t^*)=1$, and if $f(\hat{t}) < 1$, we find $t^*$ by solving $f_{\overline{c}}(t^*)=1$. 

If $f_{\underline{c}}(\hat{t})<f(\hat{t})$, we find a candidate $\hat{c}$ such that $f_{\hat{c}}(\hat{t})=f(\hat{t})$ (by solving \waw). Then, if $f(\hat{t})\geq 1$, we repeat the process for the pair $\{\underline{c},\hat{c}\}$ and the interval $[0,\bar{t}]$. If $f(\hat{t})<1$, we repeat the process for the pair $\{\hat{c},\overline{c}\}$ and the interval $[\bar{t},1]$.
We can restrict ourselves to searching within one of the two intervals because $f(t)$ is non-decreasing. This recursive procedure yields a worst-case running time of $\mathcal{O}(n \cdot r_{waw})$, as we might iterate over all breaking points. 

We have to execute the above procedure for each candidate to be added to the committee, and thus $k$ times in total. This leads to an overall running time of $\mathcal{O}(kn~\cdot~r_{waw})$. \end{proof}

\iffalse
\begin{proofS}
In each iteration, the problem of finding a candidate to be added to the committee can be described as follows. Every agent has a budget of $\beta_a\geq 0$ and constantly earns additional money. Thus, at time $t\in [0,1]$, agent~$a$ owns $\beta_a + t$ dollars. The total budget of the approvers of a candidate $c$ can be expressed as an affine linear function $f_c(t)=\sum_{a \in N_c}\beta_a + |N_c| \cdot t$. Moreover, $f(t) = \max_{c\in C}f_c(t)$ is the \emph{optimal value curve}, taking the value of the maximum budget of any supporter group for a candidate at time $t$. Define $t^*$ as the minimum value $t \in [0,1]$ such that $f(t^*) = 1$. A candidate $c^*$ with $f_{c^*}(t^*) = f(t^*)=1$ is a feasible choice under seq-Phragmén in this iteration. See \Cref{fi:SeqPhragmen} for an illustration. We argue that $t^*$ can be computed by using a classical parametric optimization method, solving \waw as a  subroutine. 

The crux of finding $t^*$ is that $f(t)$ is the maximum of exponentially many functions. However, $f(t)$ is a piecewise linear function with at most $n$ breaking points, as the slope of $f(t)$ can take at most $n+1$ different values: for each candidate~$c$, $|N_c| \in \{0,\dots,n\}$. If we knew all of the breaking points, we could iterate over all linear subintervals of $f(t)$ in order to find $t^*$. The Eisner-Severance method \citep{EiSe76a} can be employed to find all breaking points, using $\mathcal{O}(n)$ calls to $\waw$. 
\end{proofS}
\fi

In the previous proof, we upper bounded the number of breaking points of the optimal value curve by $n$. In the case of matching elections, this bound can be strengthened to $\lfloor (2/3) n \rfloor$ by making use of \Cref{ob:onethird}: The difference of the approval scores of two Pareto optimal matchings is at most $\lfloor(2/3)n \rfloor$, hence $f(t)$ takes at most $\lfloor (2/3)n\rfloor +1$ distinct slopes.

By slightly modifying the above approach, we obtain a similar algorithm for Rule~X. Here, for some fixed budgets of the agents, we need to find the minimum $q\in \mathbb{R}$ such that the supporters of some candidate jointly have one dollar, assuming that each of them pays at most $q$. We again define the optimal value curve as the maximum budget of all supporter groups dependent on $q$. Unfortunately, in this case, the optimal value curve may neither be concave nor convex. However, by observing that we can partition the domain into $n$ intervals such that the optimal value curve is a convex function in each interval, we can solve the problem using the Eisner-Severance method as described in the previous proof.

\begin{theorem}\label{thm:RuleX}
Given a party-approval election $(N,C,A,k)$, a committee that is winning under Rule~X can be computed in $\mathcal{O}(kn\cdot r_{waw})$-time.
\end{theorem}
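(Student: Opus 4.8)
The plan is to mirror the structure of the proof of \Cref{thm:SeqPhragmen}, replacing the affine budget curves by the piecewise-linear payment curves induced by the $q$-affordability condition, and to cope with the loss of global convexity by decomposing the domain of $q$ into $\mathcal{O}(n)$ pieces on which convexity is restored.

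First I would isolate the subproblem solved in a single round of Rule~X. Fix the current budgets $(b_a)_{a\in N}$. For a candidate $c$ and a threshold $q\ge 0$, the amount its supporters can jointly pay when each contributes at most $q$ is $g_c(q)=\sum_{a\in N_c}\min\{b_a,q\}$, so that $c$ is $q$-affordable iff $g_c(q)\ge 1$. I define the optimal value curve $g(q)=\max_{c\in C} g_c(q)$; the quantity to compute is $q^{\ast}=\min\{q\ge 0 : g(q)\ge 1\}$ together with a candidate $c^{\ast}$ attaining $g_{c^{\ast}}(q^{\ast})=1$. Each $g_c$ is continuous, non-decreasing, and concave (a sum of the concave functions $q\mapsto\min\{b_a,q\}$), hence $g$ is continuous and non-decreasing as well. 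Since $g(0)=0$ and $g$ stabilizes at $\max_c\sum_{a\in N_c}b_a$ once $q$ exceeds every budget (all budgets are at most $k/n$), either $q^{\ast}\in[0,k/n]$ exists, or no candidate is affordable and the round (and the rule) terminates. Crucially, $g(q)$ together with a witnessing candidate and its supporting curve can be evaluated at any fixed $q$ by a single call to \waw with weights $\omega(a)=\min\{b_a,q\}$.

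The key structural observation is the one announced before the statement. Each function $q\mapsto\min\{b_a,q\}$ has its only breakpoint at $q=b_a$, so the at most $n$ distinct positive budget values $0=\beta_0<\beta_1<\dots<\beta_m$ with $m\le n$ partition $[0,k/n]$ into at most $n$ intervals. On each interval $[\beta_i,\beta_{i+1}]$ every agent $a$ satisfies either $b_a\le\beta_i$ (contributing the constant $b_a$) or $b_a\ge\beta_{i+1}$ (contributing $q$), so each $g_c$ is affine there; consequently $g=\max_c g_c$ is piecewise linear and \emph{convex} on that interval, with slopes drawn from $\{0,1,\dots,n\}$ and hence at most $n$ breakpoints. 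This is exactly the input required by the Eisner--Severance method.

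Finally I would assemble the algorithm. Using that $g$ is globally non-decreasing, I first locate the interval containing $q^{\ast}$ by evaluating $g$ at the endpoints $\beta_0,\dots,\beta_m$ and selecting the first $\beta_j$ with $g(\beta_j)\ge 1$, so that $q^{\ast}\in[\beta_{j-1},\beta_j]$; this costs $\mathcal{O}(n)$ calls to \waw. On this single interval $g$ is convex, piecewise linear, and runs from a value below $1$ to a value at least $1$, so I apply the combined Eisner--Severance/binary-search procedure from the proof of \Cref{thm:SeqPhragmen}, now on $[\beta_{j-1},\beta_j]$, to recover $q^{\ast}$ and a candidate $c^{\ast}$ with $g_{c^{\ast}}(q^{\ast})=1$, using $\mathcal{O}(n)$ further calls. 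Adding $c^{\ast}$ and charging each supporter $\min\{b_a,q^{\ast}\}$ completes the round in $\mathcal{O}(n\cdot r_{waw})$ time; over the at most $k$ rounds this yields the claimed $\mathcal{O}(kn\cdot r_{waw})$ bound. The main obstacle, and the only genuine departure from the seq-Phragmén argument, is precisely the loss of global convexity of $g$: the whole proof hinges on confining the search to one interval of the budget-value partition, where convexity (and thus the applicability of Eisner--Severance) is regained, while the global monotonicity of $g$ is what makes the localization cheap enough to preserve the linear-in-$n$ per-round cost.
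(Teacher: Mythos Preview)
Your proposal is correct and essentially identical to the paper's proof: both define the optimal value curve $g(q)=\max_c\sum_{a\in N_c}\min\{b_a,q\}$, observe that it fails to be globally convex but becomes convex (as a max of affine functions) on each interval between consecutive budget values, locate the relevant interval by a linear scan of $\mathcal{O}(n)$ evaluations, and then apply the Eisner--Severance argument from the seq-Phragm\'en proof on that single interval. The only cosmetic difference is that you additionally note the concavity of each individual $g_c$, whereas the paper instead classifies the breakpoints of $g$ into two types to reach the same conclusion.
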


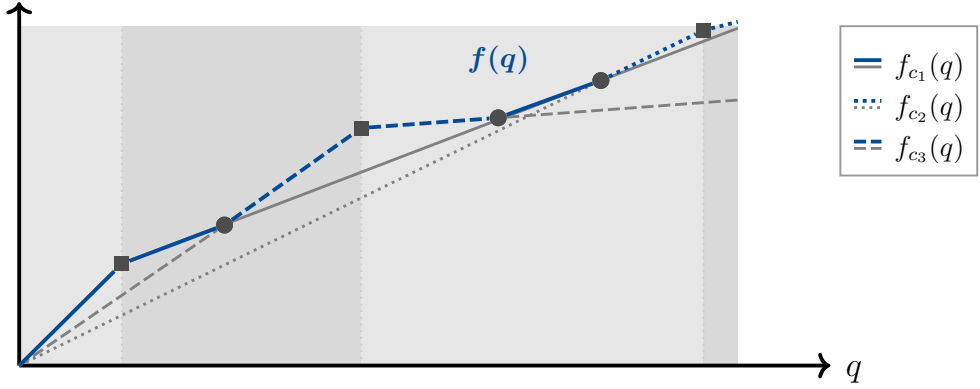
\begin{figure} \centering
\scalebox{.9}{
\begin{tikzpicture}
\node(y) at (0,5.5){};
\node(x) at (12,0){};
\node at (12.2,-0.1){\large$q$};

\draw [draw=none,fill=black!10] (0,0) rectangle (1.5,5);
\draw [draw=none,fill=black!15] (1.5,0) rectangle (5,5);
\draw [draw=none,fill=black!10] (5,0) rectangle (10,5);
\draw [draw=none,fill=black!15] (10,0) rectangle (10.5,5);

\draw[-,black!20, thick,dotted] (1.5,0) -- (1.5,5);
\draw[-,black!20, thick,dotted] (5,0) -- (5,5);
\draw[-,black!20,thick,dotted] (10,0) -- (10,5);

\draw[->,black, ultra thick] (0,0) -- (y);
\draw[->,black, ultra thick] (0,0) -- (x);

\node[fill=black!70,rectangle,inner sep=3](1)at(1.5,1.5){};
\node[fill=black!70,circle,inner sep=2.5](2)at(3,2.07){};
\node[fill=black!70,rectangle,inner sep=3](3)at(5,3.5){};
\node[fill=black!70,circle,inner sep=2.5](4)at(7,3.65){};
\node[fill=black!70,circle,inner sep=2.5](5)at(8.5,4.2){};
\node[fill=black!70,rectangle,inner sep=3](6)at(10,4.94){};

%candidate 1
\draw[-,black!50, very thick] (0,0) -- (1);
\draw[-,black!50, very thick] (1) -- (5);
\draw[-,black!50, very thick] (5) -- (10.5,4.97);
%candidate 2
\draw[-,black!50, very thick,dash pattern=on 5pt off 2pt] (0,0) -- (2);
\draw[-,black!50, very thick,dash pattern=on 5pt off 2pt] (4) -- (10.5,3.912);
%candidate 3
\draw[-,black!50, very thick,dotted] (0,0) -- (5);

\draw[-,color2,ultra thick] (0,0) -- (1);
\draw[-,color2,ultra thick] (1) -- (2);
\draw[-,color2,ultra thick,dash pattern=on 5pt off 2pt] (2) -- (3);
\draw[-,color2,ultra thick,dash pattern=on 5pt off 2pt] (3) -- (4);
\draw[-,color2,ultra thick] (4) -- (5);
\draw[-,color2,ultra thick,dotted] (5) -- (6);
\draw[-,color2,ultra thick,dotted] (6) -- (10.5,5.064);

\node[fill=black!70,rectangle,inner sep=3](1)at(1.5,1.5){};
\node[fill=black!70,circle,inner sep=2.5](2)at(3,2.07){};
\node[fill=black!70,rectangle,inner sep=3](3)at(5,3.5){};
\node[fill=black!70,circle,inner sep=2.5](4)at(7,3.65){};
\node[fill=black!70,circle,inner sep=2.5](5)at(8.5,4.2){};
\node[fill=black!70,rectangle,inner sep=3](6)at(10,4.94){};

\draw [draw=black!40, thick] (12,2.8) rectangle (14,5);
%can1 
\draw[-,color2,ultra thick] (12.2,4.5) -- (12.6,4.5);
\draw[-,black!50,very thick] (12.2,4.4) -- (12.6,4.4);
\node at (13.3,4.4){$f_{c_1}(q)$};
%can2 
\draw[-,color2,ultra thick,dotted] (12.2,3.9) -- (12.6,3.9);
\draw[-,black!50,very thick,dotted] (12.2,3.8) -- (12.6,3.8);
\node at (13.3,3.8){$f_{c_2}(q)$};
%can3 
\draw[-,color2,ultra thick,dash pattern=on 5pt off 2pt] (12.2,3.3) -- (12.6,3.3);
\draw[-,black!50,very thick,dash pattern=on 5pt off 2pt] (12.2,3.2) -- (12.6,3.2);
\node at (13.3,3.2){$f_{c_3}(q)$};

\node at (7,4.5){\textcolor{color2}{$\bm{f(q)}$}};
\end{tikzpicture}
} 
\caption{Illustration of the situation in the proof of \Cref{thm:RuleX}. The example depicts budget curves for candidates $c_1,$ $c_2,$ and $c_3$. The functions $f_{c_1}(q),$ $f_{c_2}(q)$, and $f_{c_3}(q)$ are depicted by solid, dotted, and dashed lines, respectively. The optimal value curve $f(q)$ is marked in blue. Breaking points of type $(i)$ and $(ii)$ are marked by squares and circles, respectively. Intervals in which the optimal value curve is convex are marked by gray rectangles.} \label{fig:RuleX} 
\end{figure}
\begin{proof}
At any of the iterations within the execution of Rule~X, the problem of finding a next candidate $c^*$ to be added to the committee (or deciding to stop) can be described as follows: Each agent $a\in N$ has some leftover budget $b_a \leq k/n$ at the beginning of the iteration. Then, the budget of the supporters of a candidate $c \in C$ under the restriction that every agent pays at most $q \in \mathbb{R}$ can be expressed as $f_c(q)= \sum_{a \in N_c} \min \{b_a,q\}$. Similarly as in the proof of \Cref{thm:SeqPhragmen}, we define the \emph{optimal value curve} as $f(q)=\max_{c \in C} f_c(q)$. If $f(k/n)<1$, there exists no $q$-affordable candidate for any $q$ and Rule~X terminates.  Otherwise, we aim to find the minimum $q^*$ in the real interval $[0,k/n]$ such that $f(q^*) = 1$. Such a value exists because $f(0)= 0$, $f(k/n)\geq 1$ (by the above assumption), and $f(q)$ is continuous on $[0,k/n]$. Then, a candidate $c^*$ satisfying $f_{c^*}(q^*)=1$ is a feasible next choice for Rule~X. Given $q^*$, such a $c^*$ can be found by one call to \textsc{Weighted Approval Winner}.

Observe that $f(q)$ is non-decreasing, since $f_c(q)$ is non-decreasing for all $c \in C$. However, in contrast to the proof of \Cref{thm:SeqPhragmen}, $f(q)$ is in general neither convex nor concave. 
As a consequence, we cannot directly apply the Eisner-Severance method. See \Cref{fig:RuleX} for an illustration. 
More concretely, consider some $q' \in [0,k/n]$ at which $f(q)$ has a breaking point. Then, this breaking point is of one of two types: either $(i)$ there exists some agent $a \in N$ with $b_a = q'$, or $(ii)$ such an agent does not exist. Intuitively, breaking points of type $(i)$ can (but are not required to) be induced by a  breaking point within the function $f_c(q)$ of some candidate $c$ with $f_c(q) = f(q)$ for $q\in [q'-\epsilon,q' + \epsilon]$ for some $\epsilon >0$. As a consequence, the slope of $f(q)$ can decrease at $q'$. On the other hand, breaking points of type $(ii)$ are guaranteed to be induced by a change of a candidate attaining the maximum budget. Hence, the slope at such a breaking point increases.

Reindex the agents according to their budget, i.e., $b_{a_1} \leq b_{a_2} \dots \leq b_{a_n}$.
Then, within 
each interval $[b_{a_i},b_{a_{i+1}}]$ there can only be breaking points of type $(ii)$. In particular, the function $f(q)$ is convex within these subintervals and its slope can take at most $n+1$ distinct values. 
In order to find~$q^*$, we now evaluate $f(q)$ at the borders of all of the intervals $[b_{a_i},b_{a_{i+1}}]$ for all $i \in [n-1]$ and select the left-most interval with $f(b_{a_{i^*}}) \leq 1 \leq f(b_{a_{i^*+1}})$. Then, we apply the Eisner-Severance method or its modified version as described in the proof of \Cref{thm:SeqPhragmen} in order to find the smallest $q^*\in [b_{a_{i^*}},b_{a_{i^*+1}}]$ such that $f(q^*)=1$. 

Both our preprocessing step and the Eisner-Severance method can be performed in $\mathcal{O}(n \cdot r_{waw})$. Doing so for all $k$ iterations yields an overall running time of $\mathcal{O}(kn \cdot r_{waw})$.   
\end{proof}

\subsection{Non-Sequential Thiele Rules in General Matching Elections}

In this section, we show that finding a winning committee in a general matching election is NP-hard for most $w$-Thiele rules. By contrast, as shown in the next two sections, this task becomes polynomial-time solvable for bipartite or symmetric matching elections. 

In the party-approval setting, computing a winning committee of non-constant size under PAV is NP-hard \cite{BGP+19a}. 
However, if $k$ is constant, the task can be solved in polynomial-time by iterating over all size-$k$ committees. 
This is in contrast to our setting, where we prove NP-hardness of computing a winning committee under a large class of $w$-Thiele rules including PAV, even for $k=2$. We reduce from the problem of deciding whether a 3-regular graph admits two edge-disjoint perfect matchings~\cite{Holye81a}.

\begin{restatable}[\appsymb]{theorem}{pavhard} \label{th:PAV-hard}
     Let $w$ be a weight sequence with $w_1> w_2>0$. Given a matching election $(N,A,k)$ and some number $\alpha\in \mathbb{R}$, deciding whether there exists a committee $\mathcal{M}$ of size $k$ with $\score_w(\mathcal{M})\geq \alpha$ is NP-complete for $k=2$ and even if each agent approves at most three agents. 
\end{restatable}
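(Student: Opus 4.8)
The plan is to reduce from the NP-complete problem of deciding whether a $3$-regular graph $G=(V,E)$ admits two edge-disjoint perfect matchings \citep{Holye81a} (for cubic graphs this is equivalent to $3$-edge-colorability: two edge-disjoint perfect matchings leave a third perfect matching as their complement). Membership in NP is immediate: a committee of size $k=2$ consists of two matchings, each of polynomial size, so one can verify in polynomial time that its $w$-score is at least $\alpha$ (and, if required, that both matchings are Pareto optimal and minimal, using \Cref{le:lemWeighApp}).

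The starting point is an exact description of the $w$-score for $k=2$. Since every agent is matched at most once per matching, $h_a(\{M_1,M_2\})\in\{0,1,2\}$, so only the weights $w_1=1$ and $w_2$ enter. Counting agents by their happiness gives
\[
\score_w(\{M_1,M_2\}) = |N_{M_1}\cup N_{M_2}| + w_2\,|N_{M_1}\cap N_{M_2}| = |N_{M_1}|+|N_{M_2}| - (1-w_2)\,|N_{M_1}\cap N_{M_2}|.
\]
As $w_1>w_2>0$, maximizing the score means \emph{maximizing the total coverage $|N_{M_1}|+|N_{M_2}|$ while minimizing the set of agents satisfied by \textbf{both} matchings}. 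This is exactly the tension inherent in the search for two edge-disjoint perfect matchings, and the construction is designed so that the overlap term counts shared edges.

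I would build a matching election in which every candidate encodes a matching $S\subseteq E$ of $G$ and satisfies precisely a private \emph{indicator} agent $r_e$ for each selected edge $e\in S$, together with a fixed pool of auxiliary agents that are satisfied exactly once in every Pareto-optimal candidate (contributing a constant). Concretely, I keep the vertices of $G$ as agents approving no one, so they never enter any $N_M$ and cannot create spurious overlap, and replace each edge $e=\{u,v\}$ by a constant-size gadget of private agents with suitably oriented approvals (keeping $|A_a|\le 3$ throughout). The gadget has two canonical configurations: a \emph{selected} one, which uses up both $u$ and $v$ and matches $r_e$ to an approved partner, and an \emph{unselected} one, which frees $u,v$ and leaves $r_e$ unsatisfied. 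Because selecting an edge consumes both of its endpoints, the selected edges of any candidate form a matching of $G$, and a candidate attains maximum coverage iff that matching is a maximum (hence, when $G$ admits one, perfect) matching. For candidates encoding matchings $S_1,S_2$ the identity above becomes $\score_w=\mathrm{const}+|S_1|+|S_2|-(1-w_2)\,|S_1\cap S_2|$, which is maximized exactly when $S_1,S_2$ are perfect matchings ($|S_i|=n/2$) that are edge-disjoint ($|S_1\cap S_2|=0$). Setting $\alpha$ to this maximal value $\mathrm{const}+n$ makes ``there is a size-$2$ committee of score at least $\alpha$'' equivalent to ``$G$ has two edge-disjoint perfect matchings''.

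The main obstacle is the design and analysis of the edge gadget. It must simultaneously (i) \emph{couple} the two endpoints, so that $r_e$ is satisfiable only in a configuration occupying both $u$ and $v$ --- this is what forces the selected edges to form a matching rather than an arbitrary edge set --- (ii) keep every approval set of size at most three, and (iii) behave correctly under Pareto optimality and minimality, so that no ``non-canonical'' configuration can beat the intended score. Establishing (iii) amounts to a soundness lemma characterizing the satisfied set of an \emph{arbitrary} Pareto-optimal candidate as (the constant auxiliary pool)\,$\cup\,\{r_e : e\in S\}$ for some matching $S$; once this holds, the clean score identity delivers both directions of the reduction and pins down $\alpha$. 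Realizing the coupling with $3$-bounded approval sets while still preserving this exact characterization is the delicate part of the argument.
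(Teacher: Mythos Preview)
Your reduction source (two edge-disjoint perfect matchings in a cubic graph, equivalently $3$-edge-colorability) and the score identity $\score_w(\{M_1,M_2\})=|N_{M_1}|+|N_{M_2}|-(1-w_2)|N_{M_1}\cap N_{M_2}|$ for $k=2$ match the paper exactly. The gap is that the proposal stops precisely where you yourself call it ``the main obstacle'': the edge gadget is not constructed, only specified by desiderata, and the soundness lemma is merely asserted. Moreover, the particular design you sketch may make the construction harder than necessary. With vertex agents carrying empty approval sets, the coupling requirement (``$r_e$ is satisfiable only in a configuration occupying \emph{both} $u$ and $v$'') must be enforced entirely by the private gadget agents; but if any gadget agent approves $u$, Pareto optimality will tend to force $u$ to be matched to one such agent regardless of what happens at $v$, so it is not at all clear how to rule out non-canonical configurations in which the set $\{e:r_e\text{ satisfied}\}$ fails to be a matching of $G$.

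The paper avoids this difficulty by making the vertex agents active rather than inert. For each edge $\{v_i,v_j\}$ (with $i<j$) it introduces only two private agents $a_{ij},a'_{ij}$, with the directed approvals $a_i\to a_{ij}$, $a_{ij}\to a'_{ij}$, $a_j\to a'_{ij}$; since $G$ is $3$-regular, every agent approves at most three others. Now a node agent $a_i$ is satisfied iff it is paired into one of its three incident edge gadgets, so the coupling is automatic, and the ``happy'' agent $a_{ij}$ acts as an \emph{anti}-indicator (satisfied iff $\{v_i,v_j\}$ is \emph{not} selected). The threshold $\alpha=(\tfrac52 w_1+\tfrac32 w_2)\eta$ is first shown to force both committee members to be ``proper'' (i.e., to encode perfect matchings of $G$), and then a marginal-contribution case analysis on the happy edge agents forces edge-disjointness. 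Your score identity underlies the same computation, but because the node agents always sit in the overlap once both matchings are proper, the accounting is not quite the clean ``$\mathrm{const}+|S_1|+|S_2|-(1-w_2)|S_1\cap S_2|$'' you aim for.
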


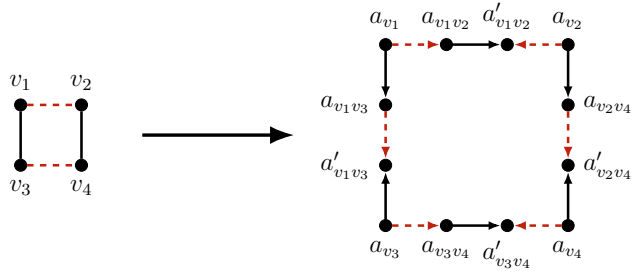
\begin{figure}
    \centering
    \scalebox{0.8}{
    \begin{tikzpicture}
         \node[node, label=90:$v_1$] (v1) at (-2, 0) {};
         \node[node, label=90:$v_2$] (v2) at (-1, 0) {};
         \node[node, label=270:$v_3$] (v3) at (-2, -1) {};
         \node[node, label=270:$v_4$] (v4) at (-1, -1) {};
         \draw (v1) edge[very thick,color1,dashed] (v2);
         \draw (v1) edge[very thick] (v3);
         \draw (v2) edge[very thick] (v4);
         \draw (v3) edge[very thick, color1, dashed] (v4);
         
         \node[node, label=90:$a_{v_1}$] (av1) at (4, 1) {};
         \node[node, label=90:$a_{v_2}$] (av2) at (7, 1) {};
         \node[node, label=270:$a_{v_3}$] (av3) at (4, -2) {};
         \node[node, label=270:$a_{v_4}$] (av4) at (7, -2) {};
         \node[node, label=180:$a_{v_1v_3}$] (av1v3) at (4, 0) {};
         \node[node, label=180:$a'_{v_1v_3}$] (aav1v3) at (4, -1) {};
         \node[node, label=90:$a_{v_1v_2}$] (av1v2) at (5, 1) {};
         \node[node, label=90:$a'_{v_1v_2}$] (aav1v2) at (6, 1) {};
         \node[node, label=0:$a_{v_2v_4}$] (av2v4) at (7, 0) {};
         \node[node, label=0:$a'_{v_2v_4}$] (aav2v4) at (7, -1) {};
         \node[node, label=270:$a_{v_3v_4}$] (av3v4) at (5, -2) {};
         \node[node, label=270:$a'_{v_3v_4}$] (aav3v4) at (6, -2) {};
         \draw[-{Latex[length=2mm]},very thick] (av1) -- (av1v3);
         \draw[-{Latex[length=2mm]},color1,very thick,dashed] (av1) -- (av1v2);
         \draw[-{Latex[length=2mm]},color1,very thick,dashed] (av2) -- (aav1v2);
         \draw[-{Latex[length=2mm]},very thick] (av4) -- (aav2v4);
         \draw[-{Latex[length=2mm]},color1,very thick,dashed] (av4) -- (aav3v4);
         \draw[-{Latex[length=2mm]},very thick] (av2) -- (av2v4);
         \draw[-{Latex[length=2mm]},color1,very thick,dashed] (av3) -- (av3v4);
         \draw[-{Latex[length=2mm]},very thick] (av3) -- (aav1v3);
         \draw[-{Latex[length=2mm]},color1,very thick,dashed] (av1v3) -- (aav1v3);
         \draw[-{Latex[length=2mm]},very thick] (av1v2) -- (aav1v2);
         \draw[-{Latex[length=2mm]},color1,very thick,dashed] (av2v4) -- (aav2v4);
         \draw[-{Latex[length=2mm]},very thick] (av3v4) -- (aav3v4);
         \draw[-{Latex[length=4mm]},ultra thick] (0,-0.5) -- (2.5,-0.5);
    \end{tikzpicture}
    }
    \caption{Example for the reduction from \Cref{th:PAV-hard}. The left side shows parts of a $3$-regular graph and the right side the constructed matching election. Red dashed arcs indicate how a matching in the graph is transformed to a matching in the  matching election.} 
    \label{fig:PAVred}
\end{figure}

\begin{proof}
We reduce from the problem of deciding whether a $3$-regular graph $G=(V,E)$ contains two edge-disjoint perfect matchings $M_1$ and $M_2$.\footnote{As this problem is equivalent to deciding whether $G$ is $3$-edge-colorable, NP-hardness follows from the work of \mbox{\citet{Holye81a}}.} 
Let $V=\{v_1,\dots, v_\eta\}$ and $E=\{e_1,\dots, e_m\}$ and observe that $3$-regularity of $G$ implies that $\eta$ is even and $m=3\eta/2$. 
From $G$, we construct a matching election $(N,A,k)$ as follows.  We introduce one \emph{node agent} $a_i$ for each $v_i\in V$. Moreover, for each edge $\{v_i,v_j\}\in E$ with $i<j$, we add an edge gadget consisting of one \emph{happy edge agent} $a_{ij}$ and one \emph{sad edge agent} $a'_{ij}$, where the node agent $a_{i}$ approves the happy edge agent $a_{ij}$, the happy edge agent $a_{ij}$ approves the sad edge agent $a'_{ij}$, and the node agent $a_{j}$ approves the sad edge agent $a'_{ij}$. We set $k=2$ and $\alpha=(5/2w_1 + 3/2w_2)\eta$ and refer to the two matchings to be found as $M'_1$ and $M'_2$.

We call a matching $M'$ of the agents $N$ a \emph{proper matching} if $M'$ is approved by all node agents and, for each edge $\{v_i,v_j\}\in E$ with $i<j$, it either holds that $\{a_{ij},a'_{ij}\}\in M'$ or that
both $\{a_{i},a_{ij}\} \in M'$ and $\{a'_{ij},a_{j}\}\in M'$:
A proper matching $M'$ matches all $\eta$ node agents to $\frac{\eta}{2}$ happy and $\frac{\eta}{2}$ sad edge agents and the remaining $m-\frac{\eta}{2}$ sad and $m-\frac{\eta}{2}$ happy edge agents to each other.  We show in the appendix that every two matchings $M'_1$ and $M'_2$ with $\score_w(\{M'_1, M'_2\})\geq \alpha$ 
need to be proper matchings. There exists a one-to-one correspondence between perfect matchings $M$ in $G$ and proper matchings $M'$ of the agents $N$ by including an edge $\{v_i,v_j\}\in E$ in $M$ if and only if $\{a_{ij},a'_{ij}\}\notin M'$. A visualization of the construction is depicted in \Cref{fig:PAVred}. One example for the described correspondence are the two matchings marked by dashed red edges. 

We now compute $\score_w(\{M'_1, M'_2\})$ assuming that $M'_1$ and $M'_2$ are proper matchings.
It is possible to calculate $\score_w(\{M'_1, M'_2\})$ by summing up the $w$-Thiele score of $M'_1$, i.e., $\score_w(\{M'_1\})$, and the marginal contribution of~$M'_2$ given~$M'_1$, i.e, $\score_w(\{M'_1, M'_2\})-\score_w(\{M'_1\})$. As we have assumed that $M'_1$ is a proper matching, it is approved by all node agents and by all happy edge agents not matched to node agents. Thus, $\score_w(M'_1)=(\eta+m-\frac{\eta}{2})w_1=2\eta w_1$, as the graph is $3$-regular. Turning to the marginal contribution of $M'_2$, as all node agents approve both matchings, they contribute $\eta w_2$. For the happy edge agents $a_{ij}$, it is possible to distinguish four different cases: 
\begin{description}
\setlength\itemsep{-0.25em}
\item[Case 1:] $\{a_{ij},a'_{ij}\} \not\in M'_1$ and $\{a_{ij},a'_{ij}\} \in M'_2$. In this case, the marginal contribution of $a_{ij}$ is $w_1$.
\item[Case 2:] $\{a_{ij},a'_{ij}\}\in M'_1$ and $\{a_{ij},a'_{ij}\} \in M'_2$. In this case, the marginal contribution of $a_{ij}$ is $w_2$.
\item[Case 3:] $\{a_{ij},a'_{ij}\}\in M'_1$ and $\{a_{ij},a'_{ij}\} \not\in M'_2$. In this case, the marginal contribution of $a_{ij}$ is $0$.
\item[Case 4:] $\{a_{ij},a'_{ij}\} \not\in M'_1$ and $\{a_{ij},a'_{ij}\} \not\in M'_2$. In this case, the marginal contribution of $a_{ij}$ is $0$.
\end{description}

By the assumption that $M'_1$ is proper, exactly $\frac{\eta}{2}$ happy edge agents are matched to node agents in~$M'_1$. Thus, Case 1 can occur at most $\frac{\eta}{2}$ times. Moreover, as there exist $\frac{3}{2}\eta$ happy edge agents and $\frac{\eta}{2}$ of them need to be matched to node agents in $M'_2$, Cases~1 and~2 combined can occur at most $\eta$ times. Thus, as $w_1> w_2$, the marginal contribution of $M'_2$ can be upper bounded by $\frac{\eta}{2}\cdot w_1 + \frac{\eta}{2} \cdot w_2$, leading to an upper bound for the combined score of any two proper matchings of $(5/2w_1 + 3/2w_2)\eta$.  Note that we set $\alpha$ exactly to match this upper bound and that it is only possible to achieve it if the second matching is chosen in a way such that Case 1 occurs $\frac{\eta}{2}$ times. We are now ready to show that there exist two edge-disjoint perfect matchings in $G$ if and only there exists two matching $M'_1$ and $M'_2$ with $\score_w(\{M'_1,M'_2\})\geq (5/2w_1 + 3/2w_2)\eta$ in the constructed matching election. 

$``\Rightarrow"$ Let $M_1$ and $M_2$ be two disjoint perfect matchings in $G$. Let $M'_1$ and $M'_2$ be the corresponding proper matchings of agents from $N$, i.e., for $t\in \{1,2\}$: 
\begin{align*}
    M'_t := & \bigcup_{\{v_i,v_j\} \in M_t: i<j} \{\{a_{i},a_{i j}\},\{a'_{i j},a_{j}\}\} \quad \cup \quad \bigcup_{\{v_i,v_j\} \in E\setminus M_t: i<j} \{\{a_{i j},a'_{i j}\}\}.
\end{align*}
As $M_1$ is a perfect matching, $M_1'$ has a score of $2\eta w_1$. Moreover, as $M_2$ is perfect and edge-disjoint from $M_1$, the first three cases concerning the marginal distribution of $M'_2$ from above all occur exactly $\frac{\eta}{2}$ times, while the last case does not appear at all. Hence, $\score_w(\{M'_1,M'_2\})=(5/2w_1 + 3/2w_2)\eta$. 

$``\Leftarrow"$ Let $M'_1$ and $M'_2$ be two matchings such that $\score_w(\{M'_1,M'_2\})\geq (5/2w_1 + 3/2w_2)\eta$, which implies that they are both proper matchings (a proof of this can be found in the appendix). Let $M_1$ and $M_2$ be the corresponding perfect matchings in $G$, i.e., for $t\in \{1,2\}$: 
\[M_t := \{\{v_i,v_j\} \in E \mid \{a_{i},a_{ij}\},\{a'_{i j},a_{j}\} \in M'_t\}\]

From $\score_w(\{M'_1,M'_2\})\geq (5/2w_1 + 3/2w_2)\eta$, it follows by our previous observations and as $w_1>w_2$ that Case 1 appears exactly $\frac{\eta}{2}$ times. Therefore, there exist $\frac{\eta}{2}$ happy edge agents that approve $M'_1$ but not $M'_2$. This implies that the corresponding edges are not included in $M_1$ but included in $M_2$. Thus, $M_1$ and $M_2$ are edge-disjoint.
\end{proof}

\subsection{Non-Sequential Thiele Rules in Bipartite Matching Elections} 
In contrast to the computational hardness even for $k=2$, all $w$-Thiele rules are tractable in bipartite matching elections. The general idea of the algorithm is to construct all $k$ matchings simultaneously with the help of a meta-election. In the meta-election, each agent is replaced by $k$ copies. We solve the \waw problem for this election with appropriate agent weights to obtain a single matching which matches all $k$ copies of each agent. From this, using Hall's theorem~\cite{Hall35a}, we construct $k$ matchings in the original instance. 
\begin{theorem} \label{thm:wThiele-bipartite}
Let $w$ be a weight sequence. In a bipartite matching election $(N,A,k)$, a winning committee under $w$-Thiele can be computed in $\mathcal{O}\big((kn)^3\big)$-time. 
\end{theorem}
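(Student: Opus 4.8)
The plan is to reduce the computation of a $w$-Thiele winner to a single maximum-weight matching on an auxiliary bipartite graph with $kn$ vertices, followed by an edge-coloring step. First I would record that, since $w$ is non-increasing, the function $g(h)=\sum_{i=1}^{h}w_i$ is concave, and that $\score_w(W)=\sum_{a\in N}g(h_a(W))$ depends only on the happiness vector. Choosing $k$ matchings therefore amounts to choosing, for each approval edge $\{a,b\}$ of the approval graph, a multiplicity $m_{ab}\in\{0,\dots,k\}$ (in how many of the matchings the pair $\{a,b\}$ appears), subject to the degree constraint $\sum_{b}m_{ab}\le k$ for every agent $a$; then $h_a=\sum_{b\in A_a}m_{ab}$ counts the matches in which $a$ sits at an approving end.

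Next I would build the meta-graph $G'$: for each agent $a$ introduce copies $a^1,\dots,a^k$, putting the copies of $N_1$ on one side and those of $N_2$ on the other, and join $a^i$ to $b^j$ whenever $\{a,b\}$ is an edge of the approval graph. I would assign edge $\{a^i,b^j\}$ the weight $[\,b\in A_a\,]\,w_i+[\,a\in A_b\,]\,w_j$, so a copy contributes $w_i$ exactly when matched to a partner it approves (this refines the endpoint-summing trick of \Cref{le:lemWeighApp} by the copy index). The heart of the proof is a two-sided equality between the maximum matching weight in $G'$ and the optimal $w$-Thiele score. For the lower bound, given an optimal committee with profile $(h_a)$ I would, independently for each agent $a$, route its $h_a$ happy matches through the cheap copies $a^1,\dots,a^{h_a}$ and its remaining matches through higher copies; since in $G'$ all copy-pairs are present, this is a legal matching, and since an unapproved match contributes $0$ to $a$ irrespective of the copy index, the realized weight is exactly $\sum_a g(h_a)$. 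For the upper bound, from any matching I read off the multiplicities and use $\sum_{i\in S}w_i\le\sum_{i=1}^{|S|}w_i$ (monotonicity of $w$) to bound its weight by $\sum_a g(h_a)$ for the induced profile, an attainable score.

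Having computed a maximum-weight matching $M^*$ in $G'$, I would extract the multiset $H$ of original pairs with multiplicities $m_{ab}$; every agent has degree at most $k$ in $H$, and $H$ is bipartite because the election is bipartite. By König's edge-coloring theorem -- the constructive, bipartite form of Hall's theorem, applied after padding $H$ to a $k$-regular bipartite multigraph -- I obtain $k$ matchings $M_1,\dots,M_k$ with $\sum_t h_a(M_t)=h_a$ for every $a$, so $\{M_1,\dots,M_k\}$ attains the optimal score. A routine final step replaces each $M_t$ by a minimal Pareto-optimal matching with a weakly larger approver set (one call to \waw each, exactly as in \Cref{le:lemWeighApp}), which cannot decrease, hence cannot change, the already-optimal score. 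The cost is dominated by one maximum-weight bipartite matching on $kn$ vertices, giving $\mathcal{O}\big((kn)^3\big)$.

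I expect the main obstacle to be the correctness of the weight construction, in particular the interaction between the concavity argument and the directed (one-sided) approval edges: one must check that letting unapproved matches occupy arbitrary copies never blocks an agent's happy matches from claiming its cheapest copies, and that the degree bound $\le k$ (rather than forcing every agent to be matched exactly $k$ times) still yields a valid decomposition. Bipartiteness is decisive precisely at the coloring step -- König guarantees $k$ colors for maximum degree $k$, whereas in a general graph Vizing's theorem would force up to $k+1$ matchings; this is the structural reason the bipartite case is tractable while \Cref{th:PAV-hard} shows NP-hardness in general.
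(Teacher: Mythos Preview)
Your proposal is correct and follows essentially the same approach as the paper: build a meta-instance with $k$ copies of each agent and copy-indexed weights $w_i$, solve a single maximum-weight bipartite matching, collapse to a bipartite multigraph of maximum degree $k$, and decompose into $k$ matchings via regularity (the paper invokes Hall's theorem iteratively, you invoke K\"onig's edge-coloring theorem, which is the same statement). The minor differences---you pad to $k$-regular at the decomposition stage rather than padding $|N_1|=|N_2|$ up front, and you explicitly Pareto-optimize each extracted matching via a \textsc{Weighted Approval Winner} call whereas the paper asserts Pareto optimality is inherited from the meta-matching---are cosmetic and do not change the argument or the running time.
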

\begin{proof}
Let $(N=N_1\dot{\cup}N_2,A,k)$ be a bipartite matching election and $w$ a weight sequence. We assume without loss of generality that $|N_1| = |N_2|$. If this is not the case, we add $||N_1|-|N_2||$ dummy agents to the smaller side, which are neither approved by any of the original agents nor approve any of them. Clearly, every matching in this new instance can be mapped to a matching in the original instance of equal $w$-Thiele score, and vice versa. 

We reduce our problem to solving one \waw instance of a meta matching election. In the meta-instance, we introduce $k$ copies $a^{(1)},\dots,a^{(k)}$ for every agent $a\in N$ and the weight of the $i$th copy is $w_i$, i.e., $\omega(a^{(i)})=w_i$ for all $i \in [k]$. 
An agent $a^{(i)}$ approves some $b^{(j)}$ iff $a$ approves $b$ in the original instance. Consider some outcome  $\widetilde{M}$ of the resulting \waw instance. Because of the special structure of the meta-instance and the fact that the weight sequence $w$ is non-increasing, we can assume without loss of generality that, for every agent $a$, there exists a threshold $i_a \in [k]$ such that her first $i_a$ copies are exactly those that are matched to partners she approves in $\widetilde{M}$. Hence, the contribution of the copies of an agent $a\in N$ to the weight of $\widetilde{M}$ under $\omega$ equals $\sum_{j = 1}^{i_a}w_j$. In the original instance, this is exactly the contribution of an agent to the $w$-Thiele score of a committee if she approves $i_a$ of the matchings in the committee. In the following, we show that, indeed, we can find a committee $\mathcal{M}$ of $k$ matchings in the original instance such that every agent $a\in N$ approves $i_a$ matchings in $\mathcal{M}$, i.e., $h_a(\mathcal{M}) = i_a$. 

In order to do so, we extend the matching $\widetilde{M}$  to a perfect matching in the meta-instance respecting the bipartition. Recall that we can do so since we assumed that $|N_1|=|N_2|$. From that, we construct a ``small'' bipartite graph $G$ which may contain parallel edges. More precisely, we define the multiset of edges $R$ of $G$ in the following, straightforward way: For every edge $\{a^{(i)},b^{(j)}\} \in \widetilde{M}, i,j \in [k]$, add one copy of the edge $\{a,b\}$ to $R$. Then, the multiset $R$ induces a bipartite graph $G=(N_1\dot{\cup}N_2,R)$ which is in particular $k$-regular. We extract $k$ perfect matchings from $G$ by a simply greedy procedure: With the help of Hall's Theorem \cite{Hall35a} it can be shown that every regular bipartite graph contains a perfect matching. We start by selecting some perfect matching in $G$ and set it to be $M_1$. Subsequently, we delete the edges contained in $M_1$ from $G$. Again, the obtained graph is regular and hence contains a perfect matching. By induction, we can proceed until we have selected $k$ perfect matchings. Lastly, we modify the extracted perfect matchings by deleting pairs that are not approved by any of the two endpoints, or in other words, making them minimal. Note that Pareto optimality of the constructed matchings is guaranteed by the Pareto optimality of $\widetilde{M}$.

Let $\mathcal{M} = \{M_1, \dots, M_k\}$ be the committee obtained from the above procedure. By construction, the number of matchings that are approved by some agent $a$ is exactly $i_a$ and hence $\score_{w}(\mathcal{M}) = \sum_{a \in N}\sum_{j=1}^{ h_a(\mathcal{M})}w_j = \sum_{a \in N}\sum_{j=1}^{i_a}w_j$. It remains to be shown that this is also optimal. Assume for contradiction that there exists a committee $\mathcal{M}'=\{M'_1, \dots, M'_k\}$ with $\score_{w}(\mathcal{M}') > \score_{w}(\mathcal{M})$. From $\mathcal{M}'$, we construct a matching $\widetilde{M}'$ in the meta-instance as follows. For all $i \in [k]$ and every $\{a,b\} \in M'_i$, add the pair $\{a^{(i)},b^{(i)}\}$ to the matching $\widetilde{M}'$. Now, for every agent $a$, it holds that the number of partners that its copies $a^{(1)}, \dots, a^{(k)}$ approve in the meta-instance matching $\widetilde{M}'$ equals $h_a(\mathcal{M}')$. However, so far, it is not guaranteed that the satisfied copies of $a$ are a prefix of $a^{(1)}, \dots, a^{(k)}$, which we need to ensure that the weight of $\widetilde{M}'$ under $\omega$ is maximal. We can ensure this by a simple exchange argument: Whenever there exists a copy $a^{(i)}$, matched to an unapproved agent, say $b^{(j)}$, while there exists another copy $a^{(i')}$ with $i'>i$ matched to an approved agent, say $c^{(j')}$, we replace the pairs $\{a^{(i)},b^{(j)}\}$ and $\{a^{(i')},c^{(j')}\}$ by the pairs $\{a^{(i)},c^{(j')}\}$ and $\{a^{(i')},b^{(j)}\}$. After doing this exhaustively, the contribution of the copies of any agent $a\in N$ to the weight of the matching $\widetilde{M}'$ under $\omega$ is exactly $\sum_{i = 1}^{h_a(\mathcal{M}')}w_i$. Hence, under~$\omega$, the weight of the constructed matching $\widetilde{M}'$ is $\score_w(\mathcal{M}')$, which is strictly larger than the weight of the matching $\widetilde{M}$, a contradiction. 

We conclude the proof by showing the claimed running time. Solving an instance of \waw in the meta-instance can be done in $\mathcal{O}((kn)^3)$-time. Computing $k$ perfect matchings in the ``small'' bipartite graph can be done in $\mathcal{O}((kn)^2)$-time. Lastly, transforming the resulting matchings to minimal matchings can be done in $\mathcal{O}(kn^2)$-time.
In total, we obtain a running time of $\mathcal{O}((kn)^3)$. 
\end{proof}

\subsection{Non-Sequential Thiele Rules in Symmetric Matching Elections} \label{sub:biptosym} \label{sub:biptosym} 
Unfortunately, the algorithm from the proof of \Cref{thm:wThiele-bipartite} does not work directly for symmetric matching elections, as not every (non-bipartite) $k$-regular graph can be partitioned into $k$ perfect matchings.  Nevertheless, it is still possible to extend the algorithm by reducing each symmetric matching election to an essentially equivalent bipartite matching election. 

Recall from \Cref{ob:equalapproval} that Pareto optimal matchings in symmetric matching elections have a strong structure, as they are, in particular, maximum matchings in the (undirected) approval graph. Using this, we can apply the 
Gallai-Edmonds Structure Theorem \citep{Gall64a,Edmo65a} (as stated in \Cref{ap:cc}) to obtain a partition of the agents into three sets $W,X,$ and $Y$ such that all agents from $X$ and $W$ approve every Pareto optimal matching. Moreover, in every Pareto optimal matching, all agents from $X$ are matched to agents from $Y$ and agents from $W$ are matched among themselves.
Using this, it is possible to transform every symmetric matching election into a bipartite one by putting agents from $Y$ on the one side and agents from $X$ and some dummy agents on the other side. It is then possible to construct from each winning committee under $w$-Thiele in the constructed bipartite election, a winning committee under $w$-Thiele in the original symmetric election. This is captured in the following lemma: 

\begin{restatable}[\appsymb]{lemma}{bipsymm} \label{lem:bip-symm}
    There exists a function $\psi$ mapping every symmetric matching election $(N,A,k)$ to a bipartite matching election $\psi\big((N,A,k)\big)$ and a function $\varphi$ mapping every committee in $\psi\big((N,A,k)\big)$ to a committee in $(N,A,k)$ such that, if a committee $\mathcal{M}$ is winning under $w$-Thiele in $\psi\big((N,A,k)\big)$, then $\varphi(\mathcal{M})$ is winning under $w$-Thiele in $(N,A,k)$. Both $\psi$ and $\varphi$ can be computed in $\mathcal{O}(n^3)$-time.
\end{restatable}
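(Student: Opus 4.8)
The plan is to exploit the Gallai--Edmonds structure of the approval graph to separate the \emph{fixed} part of every candidate from the part where the happiness of agents actually varies, and then to encode only the varying part as a bipartite matching election.

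First I would compute the Gallai--Edmonds decomposition $N = W\,\dot{\cup}\,X\,\dot{\cup}\,Y$ of the undirected approval graph $G$ (doable in $\mathcal{O}(n^3)$ via one maximum-matching computation), where, as recalled above, $Y$ is the union of the odd factor-critical components that are left deficient, $X$ is their neighbourhood, and $W$ induces a subgraph with a perfect matching. By \Cref{ob:equalapproval} every candidate is a maximum matching, so by Gallai--Edmonds every agent in $W\cup X$ is matched in every candidate, while an agent in $Y$ is matched iff it is not the (unique) deficient vertex of an uncovered component. Consequently, for every committee $\mathcal{M}$ of $k$ candidates, $h_a(\mathcal{M})=k$ for all $a\in W\cup X$, so the $w$-Thiele score equals $(|W|+|X|)\sum_{i=1}^{k} w_i + \sum_{a\in Y}\sum_{i=1}^{h_a(\mathcal{M})} w_i$; only the second, $Y$-dependent term matters for optimisation.

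Next I would define $\psi$. Put $Y$ on the left; on the right put $X$ together with two kinds of dummies. For each factor-critical component $K$ of $Y$ create $|K|-1$ \emph{happy} dummies, each mutually approving every vertex of $K$; and globally create $c-|X|$ \emph{sad} dummies (where $c$ is the number of components of $Y$), each of which approves every agent in $Y$ but is not approved back (a directed edge, still legal in a bipartite election). Keep the mutual $X$--$Y$ approval edges of $G$ and discard $W$ entirely. The two sides have equal size $|Y|=|X|+(|Y|-c)+(c-|X|)$, so $\psi\big((N,A,k)\big)$ is well-formed. The key structural claim --- the heart of the proof --- is that a perfect matching of this bipartite graph corresponds \emph{exactly} to a symmetric maximum matching with the same $Y$-happiness profile: because $K$ has $|K|-1$ happy dummies attached only to $K$, every perfect matching sends exactly one vertex of $K$ outside (to $X$ or to a sad dummy), and factor-criticality guarantees that this external vertex may be \emph{any} vertex of $K$ and that the remaining $|K|-1$ internal vertices can then be perfectly matched inside $K$. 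Thus the external-to-$X$ vertices form a system of distinct representatives for $X$ among the components (exactly as in Gallai--Edmonds), the external-to-sad-dummy vertices are the deficient ones, and a $Y$-agent is happy in the bipartite election iff it is matched in the reconstructed symmetric candidate. Since every dummy and every agent of $X$ is matched (hence, for dummies, happy by construction) in every such candidate, the $w$-Thiele scores on the two sides differ only by the additive constant contributed by $X$ and the dummies.

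Finally I would define $\varphi$: given a winning bipartite committee (WLOG with perfect matchings, since completing a matching only raises the score), reconstruct each matching by (i) keeping the realised $X$--$Y$ pairs, (ii) for each component $K$ computing an arbitrary perfect matching of $K$ minus its external vertex (which exists by factor-criticality and is found in $\mathcal{O}(|K|^3)$), and (iii) adding a fixed precomputed perfect matching of $W$. Each output is a maximum, hence Pareto-optimal and minimal, matching with the same $Y$-profile as its bipartite preimage. Combining the forward direction (every symmetric candidate's $Y$-profile is realised by a bipartite candidate) with this backward map shows that the achievable values of $\sum_{a\in Y}\sum_{i=1}^{h_a} w_i$ coincide; since the two $w$-Thiele scores differ only by a constant, $\varphi$ sends a $w$-Thiele-winning bipartite committee to a $w$-Thiele-winning symmetric committee, and all steps run in $\mathcal{O}(n^3)$. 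The main obstacle is precisely the exact correspondence of the third paragraph: one must invoke the full Gallai--Edmonds description (both the SDR structure of $X$ and the factor-criticality of the components) to rule out spurious bipartite configurations --- e.g.\ leaving two vertices of a single component deficient --- that have no symmetric counterpart, and it is exactly the component-local attachment of the happy dummies that prevents them.
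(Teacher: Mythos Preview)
Your approach is essentially the paper's: Gallai--Edmonds decomposition, $Y$ on one side, $X$ together with $|K|-1$ mutually-approving dummies per factor-critical component $K$ on the other, and reconstruction via factor-criticality plus a fixed perfect matching of $W$. The one deviation is your $c-|X|$ one-directional ``sad'' dummies added to balance the two sides; the paper omits these, which keeps its bipartite instance \emph{symmetric} and lets Observation~\ref{ob:equalapproval} immediately yield that every bipartite candidate is a maximum matching covering all of $X\cup D$. Your sad dummies break symmetry, so the parenthetical ``since completing a matching only raises the score'' is not quite the right justification for assuming perfect matchings --- the correct statement is that every Pareto-optimal minimal matching in your instance \emph{is} perfect (provable by restricting to the symmetric part and invoking the paper's version, then absorbing the remaining unmatched $Y$-vertices with sad dummies); this is a minor imprecision rather than a genuine gap.
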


Using this lemma, we can extend the algorithm from \Cref{thm:wThiele-bipartite} to symmetric instances: 
\begin{corollary} \label{co:wThieleSym}
Let $w$ be a weight sequence. In a symmetric matching election $(N,A,k)$, a winning committee under $w$-Thiele can be computed in $\mathcal{O}((kn)^3)$-time.
\end{corollary}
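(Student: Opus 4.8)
The plan is to invoke \Cref{lem:bip-symm} essentially as a black box and combine it with the algorithm of \Cref{thm:wThiele-bipartite}. Given a symmetric matching election $(N,A,k)$, I would first compute the bipartite matching election $\psi\big((N,A,k)\big)$ guaranteed by the lemma. By \Cref{thm:wThiele-bipartite}, a winning committee $\mathcal{M}$ under $w$-Thiele in this bipartite election can be computed in $\mathcal{O}((kn')^3)$-time, where $n'$ is the number of agents in $\psi\big((N,A,k)\big)$. Since $\psi$ only adds dummy agents to balance the two sides of the bipartition (bounded by the number of agents in $N$ plus at most $n$ dummies), we have $n' = \mathcal{O}(n)$, so this step runs in $\mathcal{O}((kn)^3)$-time. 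I would then apply the map $\varphi$ to $\mathcal{M}$; by the lemma, $\varphi(\mathcal{M})$ is a winning committee under $w$-Thiele in the original symmetric election $(N,A,k)$.

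The correctness is immediate from the statement of \Cref{lem:bip-symm}: the lemma asserts precisely that whenever $\mathcal{M}$ is winning in $\psi\big((N,A,k)\big)$, its image $\varphi(\mathcal{M})$ is winning in $(N,A,k)$. Thus there is no additional argument needed beyond chaining the two results. The running time follows by adding the costs of the three phases. Computing $\psi$ and $\varphi$ each take $\mathcal{O}(n^3)$-time by the lemma, and the dominant term is the $\mathcal{O}((kn)^3)$ cost of solving the bipartite instance via \Cref{thm:wThiele-bipartite}. Summing these yields the claimed $\mathcal{O}((kn)^3)$-time bound.

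Since the genuine content has been isolated into \Cref{lem:bip-symm}, the proof of the corollary itself is a short reduction and poses no real obstacle; the hard part lies entirely in establishing the lemma (which is deferred to the appendix and relies on the Gallai-Edmonds Structure Theorem). The only minor point to verify in the corollary's proof is that the size blowup introduced by $\psi$ is linear in $n$, so that the running time of the bipartite algorithm does not exceed $\mathcal{O}((kn)^3)$; this follows since the Gallai-Edmonds partition $N = W \dot{\cup} X \dot{\cup} Y$ uses the original agents together with at most $n$ dummy agents on the $X$-side, keeping the agent count of $\psi\big((N,A,k)\big)$ within a constant factor of $n$.
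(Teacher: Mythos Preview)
Your proposal is correct and follows essentially the same approach as the paper: reduce via \Cref{lem:bip-symm} to a bipartite instance, solve that instance with \Cref{thm:wThiele-bipartite}, and map back with $\varphi$. The only imprecision is your description of the dummy agents in $\psi$---they are not added ``to balance the two sides of the bipartition'' but rather $|Y_i|-1$ dummies are introduced per connected component $Y_i$ of $G[Y]$ (while the agents in $W$ are dropped entirely); still, your conclusion $n'=\mathcal{O}(n)$ is correct, so the running-time argument goes through.
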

\begin{proof}
Let $(N,A,k)$ be a symmetric matching election and $w$ be a weight sequence. 
\Cref{lem:bip-symm} implies that we can find a size-$k$ committee that is winning under $w$-Thiele by computing a size-$k$ committee that is winning under $w$-Thiele in the corresponding bipartite instance $\psi\big((N,A,k)\big)$ with the help of \Cref{thm:wThiele-bipartite} and then using the transformation~$\varphi(\cdot)$. 
\end{proof}

\section{Axiomatic Results}\label{sec:axioms}

\begin{table}
    \centering
    \scalebox{0.85}{
\begin{tabular}{llll}
%\toprule 
\topline
\rowcolor{tableheadcolor}
\textbf{Rules} & Party-Approval Elections &  Symmetric Matching Elections \\
\midtopline
%\midrule
PAV & core stability \cite{BGP+19a} & core stability \\
seq-$w$-Thiele & not PJR \cite{ABC+16a}   & EJR (Theorem \ref{thm:SeqPAV}), not core stability (Proposition \ref{pr:Seq-core}) \\
Rule~X & EJR, not core stability \cite{PeSk20a}  & EJR, not core stability (Proposition \ref{pr:ruleX-core})\\
seq-Phragmén & PJR, not EJR \cite{BFJL16a}   & PJR, not EJR (Proposition \ref{pr:ruleX-core})\\
\bottomrule %\vspace*{0.1cm}
\end{tabular}
}
 \caption{Summary of results on the axiomatic properties of several multiwinner voting rules. If a rule satisfies an axiom in the party-approval setting, then this also holds in the setting of (symmetric) matching elections. Our positive result in Theorem \ref{thm:SeqPAV} holds for seq-$w$-Thiele rules satisfying $w_i >w_{i+1}$ for all $i \in \mathbb{N}$. 
 }
    \label{fig:ov2}
\end{table}

As matching elections are also party-approval elections, axiomatic guarantees from the latter setting still apply here, that is, 
PAV satisfies core stability, 
Rule~X satisfies EJR, and
seq-Phragmén satisfies PJR. 
 Below, we study whether stronger axiomatic guarantees are obtainable for our subdomain (see \Cref{fig:ov2} for an overview of our results). We focus on symmetric matching elections, as they exhibit a particularly strong structure. We start with a surprising positive result: A large class of sequential $w$-Thiele rules (including seq-PAV, which fails all considered axioms in general) satisfy EJR.

\begin{theorem}\label{thm:SeqPAV}
Let $w$ be a weight sequence with $w_i>w_{i+1}$ for all $i \in \mathbb{N}$. Seq-$w$-Thiele satisfies EJR in all symmetric matching elections.
\end{theorem}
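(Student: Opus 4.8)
The plan is to argue by contradiction, exploiting the two structural facts from \Cref{ob:equalapproval}: every candidate is a maximum matching of the approval graph, and all candidates share the same approval score, say $\sigma$. Suppose seq-$w$-Thiele outputs a committee $\mathcal M=(M_1,\dots,M_k)$ (listed in the order the candidates are added) that violates EJR. Then there is some $\ell\in[k]$ and an $\ell$-cohesive group $S$ with $h_a(\mathcal M)\le \ell-1$ for every $a\in S$. Unfolding the embedding of \Cref{sec:embedding}, $\ell$-cohesiveness means that $S$ has a commonly approved candidate, i.e.\ a matching $M^*$ with $S\subseteq N_{M^*}$; since the election is symmetric, every pair of $M^*$ touching $N_{M^*}$ is a mutual-approval edge, so $M^*$ restricted to $N_{M^*}$ is a perfect matching of $N_{M^*}$ and matches no one else, whence $\sigma=|N_{M^*}|\ge |S|\ge \ell\, n/k$. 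In particular $k\sigma\ge \ell n$, so the average happiness $\tfrac1n\sum_{a\in N}h_a(\mathcal M)=\tfrac{k\sigma}{n}$ is at least $\ell$; the whole difficulty is to push this ``global'' surplus into the specific group $S$.

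The engine of the proof is a one-to-one exchange combined with the greedy optimality of seq-$w$-Thiele. Fix a step $t$ and write $\mathcal M_{t-1}=(M_1,\dots,M_{t-1})$ for the committee built so far. Since $M^*$ is itself a candidate, its marginal contribution is available, giving $\sum_{a\in N_{M_t}}w_{h_a(\mathcal M_{t-1})+1}=\score_w(\mathcal M_{t-1}\cup\{M_t\})-\score_w(\mathcal M_{t-1})\ge \sum_{a\in N_{M^*}}w_{h_a(\mathcal M_{t-1})+1}\ge \sum_{a\in S}w_{h_a(\mathcal M_{t-1})+1}$, using the reformulation of marginal contributions from \Cref{obs:seqThielePoly}. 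More importantly, suppose some $a\in N_{M^*}$ is unmatched by $M_t$. Because $M_t$ is a maximum matching, the $M^*$-partner $p$ of $a$ cannot also be unmatched in $M_t$ (else the mutual-approval edge $\{a,p\}$ would augment $M_t$), so $p$ is matched in $M_t$ to some agent $q$. Replacing $\{p,q\}$ by $\{a,p\}$ yields another minimal maximum matching whose marginal contribution differs from that of $M_t$ by exactly $w_{h_a(\mathcal M_{t-1})+1}-w_{h_q(\mathcal M_{t-1})+1}$. Greedy optimality of $M_t$ together with strict monotonicity of $w$ then forces $h_a(\mathcal M_{t-1})\ge h_q(\mathcal M_{t-1})$: any agent of $N_{M^*}$ that the rule leaves uncovered at step $t$ is already at least as happy as the agent currently occupying its partner's slot. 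Iterating such swaps shows that at every step the rule covers the currently least-happy agents of $N_{M^*}$ that can be covered simultaneously.

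I would then aggregate these per-step facts into a contradiction. The target inequality is $\sum_{a\in S}h_a(\mathcal M)\ge \ell\,|S|$, which already contradicts $h_a(\mathcal M)\le \ell-1$ for all $a\in S$, since it forces some member of $S$ to reach $\ell$. To get it, I would track the happiness profile restricted to $N_{M^*}$ across the $k$ rounds and read the exchange inequality as a \emph{no-starvation} guarantee: whenever an agent of $N_{M^*}$ lags behind, the round's marginal contribution could be strictly improved by a swap routing a match to it, so the rule must keep the happiness inside $N_{M^*}$ tightly balanced while its total grows by $|N_{M^*}\cap N_{M_t}|$ each step. Combined with $k\sigma\ge \ell n$ and the convexity implicit in the strictly decreasing weights $w$, this should yield that no coverable agent, and in particular no agent of $S$, can finish below the average level $\ell$.

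The main obstacle is precisely this localization step. The per-step exchange only controls agents of $N_{M^*}$ relative to whoever currently sits in their partners' slots, whereas the surplus $k\sigma\ge \ell n$ lives over all $n$ agents; a priori it could accumulate on members of $N_{M^*}\setminus S$ while $S$ stays starved. I expect to close this gap by choosing the witness carefully---taking $S$ (equivalently $M^*$) to be a \emph{minimal} violating group, so that the swaps cannot dump happiness outside $S$---and by compiling the repeated one-to-one exchanges into a single global charging argument between $\mathcal M$ and $\ell$ copies of $M^*$, in the spirit of the augmenting-path argument already used for \Cref{ob:equalapproval}. Making that charging respect the sequential (rather than globally score-optimal) nature of seq-$w$-Thiele, and verifying that strict monotonicity of $w$ is exactly what forbids the bad concentration of happiness, is where the real work lies.
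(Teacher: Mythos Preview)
Your plan has the right shape---compare each selected matching $M_t$ to the witness $M^*$ via an exchange, use strict monotonicity of $w$ to extract an inequality on happiness levels, and then count---but the exchange you write down is too weak to close the localization gap you yourself identify. In the single-edge swap $\{p,q\}\mapsto\{a,p\}$, the ejected agent $q$ may perfectly well lie in $N_{M^*}$ (even in $S$): nothing prevents $p$'s $M_t$-partner from also being matched in $M^*$. So the inequality $h_a(\mathcal M_{t-1})\ge h_q(\mathcal M_{t-1})$ can just shuffle happiness around inside $S$, and ``iterating such swaps'' does not converge to the statement you assert about covering the least-happy agents of $N_{M^*}$.

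The missing idea is to replace the single swap by a \emph{full alternating-path augmentation}. Because $M_t$ and $M^*$ are both maximum (this is where symmetry enters, via \Cref{ob:equalapproval}), their symmetric difference decomposes into cycles and \emph{even-length} paths; an agent $a\in S\setminus N_{M_t}$ is an endpoint of such a path whose other endpoint $b$ is matched in $M_t$ but unmatched in $M^*$---hence $b\notin N_{M^*}\supseteq S$. Augmenting the whole path yields another candidate whose marginal contribution differs from that of $M_t$ by exactly $w_{h_a(\mathcal M_{t-1})+1}-w_{h_b(\mathcal M_{t-1})+1}$, so greedy optimality and strict monotonicity give $h_b(\mathcal M_{t-1})\le h_a(\mathcal M_{t-1})\le\ell-1$, \emph{with $b$ guaranteed to lie outside $S$}. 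Since the paths are disjoint, this produces, for every round $t$, a set $S_t\subseteq N_{M_t}$ of size $|S|$ all of whose members have happiness at most $\ell-1$ before round~$t$. The contradiction then falls out of a direct double count of $\sum_t|S_t|=k|S|$: each agent can appear in at most $\ell$ of the $S_t$, each agent of $S$ in at most $\ell-1$ of them, and combining with $|S|\ge\ell n/k$ gives $k|S|\le\ell n-|S|\le k|S|-|S|$. No convexity of $w$, no minimal-counterexample trick, and no global score comparison with $\ell$ copies of $M^*$ is needed.
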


\begin{proof}
Let $(N,A,k)$ be a symmetric matching election.
In \Cref{sub:firstOb} we have observed that the set $N$ of agents can be partitioned into three sets $W$, $X$, and $Y$, such that in any Pareto optimal matching, all agents in $W \cup X$ are matched, agents in $X$ are matched to agents in $Y$, and agents in $W$ are matched among themselves (see also \Cref{ap:cc}). Thus, a group of agents violating EJR can only contain agents from~$Y$. 

Let $\mathcal{M}=\{M_1, \dots, M_k\}$ be some output of seq-$w$-Thiele (we assume that seq-$w$-Thiele selected matching $i$ in iteration $i$). Let $\mathcal{M}_{< i}=\{M_1, \dots,M_{i-1}\}$. Assume for contradiction that there exists an EJR violation, i.e., for some $\ell\in [k]$, there is a set $S\subseteq N$ with $|S|\geq \ell n/k$, a Pareto optimal matching~$\widetilde{M}$ with $S \subseteq N_{\widetilde{M}}$ and $h_a(\mathcal{M}) < \ell$ for all $a \in S$.  

We claim that the existence of $S$ implies that in every iteration $i$, at least $|S|$ agents in $Y$ which are matched in this iteration approve at most $\ell -1$ matchings from $\mathcal{M}_{<i}$:

\begin{claim*}
For every $i \in [k]$, there exists a group $S_i \subseteq Y \cap N_{M_i}$ with $|S_i|=|S|$ and $h_a(\mathcal{M}_{< i}) \leq \ell - 1$ for all $a \in S_i$.
\end{claim*}

\begin{claimproof} 
Fix $i \in [k]$. If all agents in $S$ are matched in~$M_i$, the claim trivially holds when setting $S_i = S$. 
Consider some $a \in S$ which is not matched in $M_i$. Since $M_i$ and $\widetilde{M}$ are maximum matchings in the approval graph of the instance, their symmetric difference consists of alternating cycles and even-length paths. In particular, there exists an even-length path starting in $a$ and ending in some $b \in Y$ which is matched in $M_i$ but not in $\widetilde{M}$. If $h_b(\mathcal{M}_{< i}) > h_a(\mathcal{M}_{< i})$, we could strictly increase the marginal contribution of $M_i$ by augmenting along this path, as this would lead to $a$ approving $M_i$ at the cost of $b$ disapproving it. Hence, $h_b(\mathcal{M}_{<i}) \leq h_a(\mathcal{M}_{< i})$. 
Since all even-length paths in the symmetric difference of $M_i$ and $\widetilde{M}$ are disjoint, we can construct $S_i$ as follows: For every $a \in S$ choose $a$ itself if $a \in N_{M_i}$ and else the agent at the other end of the corresponding even-length alternating path. 
\end{claimproof}

Let $\mathcal{S}$ be the multiset of groups of agents $S_i$ from the claim, i.e., $\mathcal{S} := \{S_1, \dots, S_k\}$. We define $g_a(\mathcal{S}):=|\{i \in [k] \mid a \in S_i\}|$ as the number of sets in $\mathcal{S}$ that include agent~$a$. 
By construction, we know that $g_a(\mathcal{S})\leq \ell$ for all $a \in Y$: No group $S_i$ contains an agent that is already included in $\ell$ of the groups $S_1, \dots, S_{i-1}$, as this would imply that $a$ approves at least $\ell$ of the matchings in $\mathcal{M}_{<i}$.
Since $S_i \subseteq N_{M_i}$ for all $i \in [k]$, we have
$g_a(\mathcal{S}) \leq h_a(\mathcal{M}) \leq \ell - 1$ for all $a \in S$.
Moreover, $\sum_{a \in Y}g_a(\mathcal{S})=k|S|$, since every group~$S_i$ contains exactly $|S|$ agents from $Y$. We get
\begin{align*}
    \sum_{a \in Y}g_a(\mathcal{S}) & =  \sum_{a \in S}g_a(\mathcal{S}) +  \sum_{a \in Y\setminus S}g_a(\mathcal{S}) \\ 
    & \leq (\ell -1 ) |S| + \ell(|Y| - |S|) \\ 
    & = \ell |Y| - |S| \leq \frac{k|S||Y|}{n} - |S| < k|S|,
\end{align*} 
a contradiction (where the last step holds as $|Y| \leq n$).
\end{proof}

A natural follow-up question is whether sequential $w$-Thiele rules even satisfy the stronger axiom of core stability in symmetric matching elections. We prove that this is not the case.
\begin{restatable}[\appsymb]{proposition}{Seqcore} \label{pr:Seq-core}
    Let $w$ be a weight sequence. Committees returned by seq-$w$-Thiele are not guaranteed to be core stable, even if the given matching election is symmetric. 
\end{restatable}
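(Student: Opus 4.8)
The plan is to exhibit an explicit counterexample: a symmetric matching election together with a committee $\mathcal{M}$ that is a valid output of seq-$w$-Thiele (under some tie-breaking) but is blocked by a coalition. The crucial constraint steering the whole construction is that, by \Cref{thm:SeqPAV}, every output of seq-$w$-Thiele already satisfies EJR. Hence any blocking coalition $S$ must fail to be $\ell$-cohesive for the relevant $\ell$: if $S$ shared a commonly approved matching and $|S|\ge \ell\, n/k$, then EJR would produce an agent $a\in S$ with $h_a(\mathcal{M})\ge \ell$, and this agent could not be strictly improved by a committee $W'$ of size $\ell$ (since $h_a(W')\le \ell$), contradicting that $S$ blocks. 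So the deviation must use a committee $W'=\{M'_1,\dots,M'_\ell\}$ of at least two distinct matchings, no single one of which covers all of $S$, yet which jointly match every $a\in S$ strictly more often than $\mathcal{M}$ does. This is precisely a core-versus-EJR gap, and keeping $S$ non-cohesive is the feature that prevents \Cref{thm:SeqPAV} from being contradicted.

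Concretely, I would aim for the smallest such pattern, taking $\ell=2$ and a coalition $S=\{s_1,s_2,s_3,s_4\}$ with two matchings $M'_1\supseteq\{s_1,s_2,s_3\}$ and $M'_2\supseteq\{s_2,s_3,s_4\}$, so that under $W'=\{M'_1,M'_2\}$ the inner agents $s_2,s_3$ are matched twice and the outer agents $s_1,s_4$ once, while \emph{no} maximum matching covers all four (so $S$ is not cohesive). For $W'$ to strictly dominate $\mathcal{M}$ it then suffices that $\mathcal{M}$ leaves $s_1,s_4$ unmatched in every round and matches each of $s_2,s_3$ at most once. A short counting argument (using $|S|\ge \ell\,n/k$ together with the requirement that every sub-coalition sharing a common matching stay below the EJR thresholds) shows this profile is compatible with EJR only for suitable parameters, e.g. $\ell=2$ with $n/k\le 2$, and that the coverage/exposure budget forces a moderately sized instance rather than a tiny one. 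Since agents in the Gallai--Edmonds sets $W\cup X$ are matched in \emph{every} maximum matching (\Cref{ob:equalapproval}), they satisfy $h_a(\mathcal{M})=k$ and can never be improved; hence $S$ must live inside the set $Y$, and the construction must arrange for $s_1$ and $s_4$ to be the repeatedly exposed $Y$-vertices.

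The heart of the construction is therefore a single, deliberately asymmetric factor-critical-type block in the approval graph in which, in every round, the marginal-contribution-maximizing matching covers the ``popular'' $Y$-vertices while exposing $s_1$ and $s_4$. I expect the main obstacle to be twofold. First, one must engineer genuine, persistent imbalance despite seq-$w$-Thiele's strong tendency to self-balance: because all candidates are maximum matchings leaving a fixed number of agents exposed and the weights are non-increasing, the rule spreads exposures evenly on symmetric gadgets (triangles, paths, stars, and disjoint unions thereof all balance out to core-stable outcomes), so the asymmetry must make covering $s_1$ and $s_4$ \emph{always} strictly less attractive, forcing the matchings that would cover them to draw their remaining weight only from already-satisfied agents. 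Second, and hardest, the analysis must be uniform over all admissible weight sequences $w$: for an arbitrary non-increasing $w$ one has to verify that the intended matching is selected in each of the $k$ rounds, which reduces to a bounded family of inequalities among partial sums of the $w_i$ that hold for every such $w$, using strict domination via $w_1>w_2>\cdots$ where marginal contributions differ and adversarial tie-breaking where they coincide. If a single $w$-independent instance proves elusive, a fallback is to let the instance depend on $w$, which only needs the same inequalities to be checkable given $w$.
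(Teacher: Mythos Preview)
Your plan correctly isolates the structural constraints: the blocking coalition must sit inside the Gallai--Edmonds set $Y$, it must fail to be $\ell$-cohesive (so that EJR, which you know holds for strictly decreasing $w$, is not contradicted), and the verification must go through for every non-increasing weight sequence. These observations are all right and match the paper's reasoning. However, the proposal stops at the level of a research plan; the actual construction---which is the entire content of this proposition---is missing, and your chosen target parameters look difficult to realise.

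Concretely, you aim for $\ell=2$ with a blocking profile $(h_{s_1},h_{s_2},h_{s_3},h_{s_4})=(0,\le 1,\le 1,0)$. A quick sanity check in the natural ``three disjoint stars'' model (hubs $x,y,z$ with leaf sets containing $s_2$, $s_3$, and $\{s_1,s_4\}$ respectively) shows the constraints are infeasible: for strictly decreasing $w$ each star distributes its $k$ matches in round-robin, so $h_{s_1}=h_{s_4}=0$ forces $k\le p$ (where $p+2$ is the number of $z$-leaves), while $h_{s_2}\le 1$ forces $k\le 2q+1$, and $|S|\ge 2n/k$ forces $k\ge (7+p+q+r)/2$; these cannot be satisfied simultaneously. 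Something must break the independence of the stars, but your proposal does not say what.

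The paper's proof takes a different route. It uses $n=k=98$, $\ell=4$, and a blocking profile $(3,3,1,1)$ improving to $(4,4,2,2)$---so no agent ever has $h=0$, sidestepping the tight counting above. The key idea is \emph{overlapping hub neighborhoods}: three hubs $x,y,z$ face leaf sets $A,B,C$ with $A_x=A$, $A_y=B$, but $A_z=A\cup B\cup C$. Since in early rounds all fresh leaves are tied, adversarial tie-breaking lets $z$ spend its first $18$ matches on $A\cup B$---agents that $x$ and $y$ would have covered anyway---creating a permanent deficit in $C$. The argument that this execution is valid uses only $w_i\ge w_{i+1}$ (never strict inequality), so it works uniformly for every weight sequence without the case analysis you anticipate. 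This overlap-and-misdirect mechanism is the missing concrete idea; once you have it, the proof is a short counting argument rather than a family of inequalities in the $w_i$.
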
 

Furthermore, Rule~X and seq-Phragmén do not satisfy stronger guarantees in (symmetric) matching elections, compared to general party-approval elections.

\begin{restatable}[\appsymb]{proposition}{RuleXcore} \label{pr:ruleX-core}
     In symmetric matching election, committees returned by seq-Phragmén are not guaranteed to provide EJR and committees returned by Rule~X are not guaranteed to be core stable. 
\end{restatable}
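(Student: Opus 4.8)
Both statements are negative, so the plan is to exhibit explicit counterexamples inside the symmetric matching domain rather than to argue abstractly. The key design principle I would exploit is the characterization implicit in \Cref{ob:equalapproval}: in a symmetric matching election the approval graph is undirected, every candidate is a maximum matching, and an agent approves a candidate if and only if she is matched in it. Hence engineering a desired ``candidate/approval'' pattern reduces to designing an undirected graph whose Pareto-optimal minimal (i.e.\ maximum) matchings realize the intended candidate set. I would build such graphs from small disjoint gadgets, e.g.\ short odd paths or components admitting exactly two distinct maximum matchings, so that the global candidate set is essentially the product of local binary choices. This lets me mimic the approval profiles used in the known general-setting counterexamples witnessing that seq-Phragmén fails EJR and Rule~X fails core stability, while respecting the strong structural constraint that all candidates share the same approval score.

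For the seq-Phragmén part, I would construct an instance containing an $\ell$-cohesive group $S$ realized by a common matching $\widetilde{M}$ that matches all of $S$ simultaneously (so $|S|\ge \ell n/k$ and $S$ jointly approves $\widetilde{M}$), together with ``distractor'' gadgets whose associated matchings become jointly affordable earlier. I would then simulate seq-Phragmén directly using the affine budget curves $f_c(t)$ from the proof of \Cref{thm:SeqPhragmen}: at each iteration compute the earliest time $t^*$ at which some supporter group accumulates one unit, select the corresponding matching, and reset that group's budget. The target behavior is that $S$'s members end up matched in at most $\ell-1$ of the selected matchings, so that $h_a(\mathcal{M})<\ell$ for all $a\in S$, witnessing an EJR violation; I would keep budgets in exact rational arithmetic to verify the run and to rule out accidental ties (or fix an adversarial tie-break if needed).

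For the Rule~X part I would proceed analogously, simulating the rule via the affordability subroutine analyzed in \Cref{thm:RuleX}: at each round find the minimum $q$ for which some maximum matching is $q$-affordable under the current leftover budgets, charge its supporters accordingly, and continue until no candidate is affordable, completing the committee arbitrarily. After fixing the resulting committee $\mathcal{M}$, I would exhibit a coalition $S$ with $|S|\ge\ell\,n/k$ and a size-$\ell$ matching-committee $W'$ with $h_a(W')>h_a(\mathcal{M})$ for every $a\in S$, i.e.\ a blocking coalition, contradicting core stability. Since Rule~X already fails core stability in general party-approval elections, the content lies entirely in realizing this failure within the matching constraints.

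The main obstacle is twofold. First, I must \emph{control the candidate space}: the gadgets have to be chosen so that the graph's maximum matchings are exactly the intended candidates and no unintended Pareto-optimal matching interferes with the argument, all the while respecting that every candidate is approved by the same number of agents (\Cref{ob:equalapproval}), which sharply limits the admissible gadgets. Second, I must \emph{faithfully execute the continuous dynamics}, since the adversarial outcomes of seq-Phragmén and Rule~X depend delicately on the order in which supporter groups reach one unit of budget or become $q$-affordable, and on tie-breaking. I would therefore keep the instances as small as possible and track all budget quantities exactly, so that verifying each simulated step and the final EJR/core violation becomes a routine finite computation.
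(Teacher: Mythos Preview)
Your proposal is a reasonable high-level plan, but it is only a plan: you never exhibit a concrete instance, never fix $k$, never simulate a run, and never name a violating group. Since the statement is proved in the paper by two explicit, small counterexamples, the absence of any concrete construction is a genuine gap.

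Moreover, your intended strategy is significantly more elaborate than what is actually needed and, in one important respect, points in the wrong direction. For seq-Phragm\'en the paper's counterexample is simply a triangle: three agents $a_1,a_2,a_3$ who all approve each other, with $k=6$. All three one-edge maximum matchings are tied at every step; choosing to alternate between $\{\{a_1,a_2\}\}$ and $\{\{a_2,a_3\}\}$ produces a winning committee in which the $4$-cohesive group $\{a_1,a_3\}$ has happiness only $3$, violating EJR. The whole point is that the bad committee arises \emph{from} a tie-break, not despite one; your plan to ``rule out accidental ties'' and to build distractor gadgets that reach one unit of budget strictly earlier would, if carried out, actively prevent you from finding this example. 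The correct reading of ``committees returned by seq-Phragm\'en are not guaranteed to provide EJR'' is that \emph{some} tie-breaking yields a non-EJR committee, and the easiest witnesses exploit massive ties rather than avoid them.

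For Rule~X the paper again gives a single concrete bipartite symmetric instance on $13$ agents with $k=13$, simulates the rule under a specific tie-break, and exhibits a four-agent blocking coalition. Your idea of porting a known general-setting counterexample through ``product-of-gadgets'' constructions could in principle work, but you would still have to produce and verify a specific instance; nothing in your proposal does this, and the equal-approval-score constraint you correctly flag makes a naive port nontrivial. In short: the methodology you describe is not wrong, but the proof consists of the concrete examples and their verification, and those are missing.
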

\begin{proofS}[Proof (seq-Phragmén)]
    Consider a symmetric matching election consisting of three agents $a_1$, $a_2$, and $a_3$ all approving each other. We set $k=6$ and claim that the committee~$\mathcal{M}$ consisting of three times matching $\{\{a_1,a_2\}\}$ and three times matching $\{\{a_2,a_3\}\}$ is a winning committee under seq-Phragmén. In the first step, all possible non-empty matchings become affordable at $t=0.5$. Breaking ties, we select $\{\{a_1,a_2\}\}$. Now, $a_3$ has $0.5$ dollars left and thus needs to be included in the next matching.  Again breaking ties, we select $\{\{a_2,a_3\}\}$.
    Continuing this way of breaking ties, we alternate between adding $\{\{a_1,a_2\}\}$ and $\{\{a_2,a_3\}\}$ until $\mathcal{M}$ is constructed. However, $\mathcal{M}$ violates EJR, as the group $\{a_1,a_3\}$ is $4$-cohesive but $h_{a_1}(\mathcal{M})=h_{a_3}(\mathcal{M})=3$.
\end{proofS}

In the counterexamples for seq-$w$-Thiele, seq-Phragmén, and Rule~X, there also exist other winning committees under these rules that satisfy the respective notion. Presumably, this is due to the richness of the candidate space, combined with a high number of ties in the execution of all three rules. It remains an open question whether the rules always return at least one winning committee satisfying the respective property.

\section{Complexity of Checking Axioms}\label{sec:check-axiom}

In this section, we settle the computational complexity of checking whether a given committee provides a proportionality guarantee. We first consider EJR.  

Deciding whether a committee $W$ in a party-approval election provides EJR can be reduced to solving \waw: For each $\ell \in [k]$, we check whether there exists an $\ell$-cohesive group violating EJR by 
 marking all agents that approve less than $\ell$ matchings from $W$ and 
 checking whether there exists a candidate that is approved by at least $\ell \frac{n}{k}$ of the marked agents. The latter step can be solved by a single call to \waw by assigning to all marked agents a weight of one and to all other agents a weight of zero.

\begin{observation} \label{ob:check-EJR}
    Given a party-approval election $(N,C,A,k)$ and a committee $W$, it is possible to check whether $W$ provides EJR in $\mathcal{O}(k\cdot\runt)$-time.
\end{observation}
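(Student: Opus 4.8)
The plan is to reduce the EJR-checking problem to a bounded number of calls to \waw, exactly as the paragraph preceding the observation already outlines, and then to verify both directions of correctness carefully. Recall that $W$ \emph{fails} EJR precisely when there exist some $\ell \in [k]$ and an $\ell$-cohesive group $S \subseteq N$ (so $|S| \geq \ell \frac{n}{k}$ and $\bigcap_{a \in S} A_a \neq \emptyset$) such that $h_a(W) < \ell$, i.e.\ $h_a(W) \leq \ell - 1$, for every $a \in S$. The key structural insight is that the two constraints on $S$ decouple: the ``cohesiveness witness'' requires that all members of $S$ commonly approve some single candidate $c$, while the ``violation'' requirement is a per-agent threshold on happiness that depends only on the fixed committee $W$ and the parameter $\ell$. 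Since $\ell$ ranges over only $k$ values, I would simply loop over all $\ell \in [k]$.

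For each fixed $\ell$, first I would compute $h_a(W)$ for every agent $a \in N$ (a one-time preprocessing that is cheap once $W$ is given) and mark exactly those agents with $h_a(W) \leq \ell - 1$; call this marked set $T_\ell = \{a \in N : h_a(W) < \ell\}$. The crucial observation is that an $\ell$-cohesive EJR-violating group exists for this value of $\ell$ if and only if some candidate $c \in C$ is approved by at least $\ell \frac{n}{k}$ marked agents. For the ``only if'' direction: if such a group $S$ exists, then $S \subseteq T_\ell$ (since each member violates the threshold), and any candidate $c \in \bigcap_{a \in S} A_a$ is approved by all $|S| \geq \ell \frac{n}{k}$ members of $S$, all of which are marked. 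For the ``if'' direction: if a candidate $c$ is approved by a set $S'$ of at least $\ell \frac{n}{k}$ marked agents, then $S'$ is $\ell$-cohesive (witnessed by $c \in \bigcap_{a \in S'} A_a$) and every member satisfies $h_a(W) < \ell$, so $S'$ is an EJR violation. This equivalence is exactly what makes a single optimization query suffice per value of $\ell$.

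To detect whether such a candidate exists, I would invoke \waw with the weight function $\omega$ assigning weight $1$ to every marked agent and weight $0$ to every unmarked agent. By definition, \waw returns a candidate $c^*$ maximizing $\sum_{a \in N_c} \omega(a)$, which under this weighting equals the number of marked approvers of $c$. Hence $W$ fails EJR via an $\ell$-cohesive group if and only if the returned maximum weight is at least $\ell \frac{n}{k}$. Declaring an EJR violation whenever this holds for some $\ell$, and declaring EJR satisfied otherwise, is correct. Since we perform one \waw call per value of $\ell \in [k]$, the total running time is $\mathcal{O}(k \cdot \runt)$, matching the claim.

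The main obstacle is not any deep argument but rather getting the equivalence airtight in both directions, in particular making sure that the optimal candidate returned by \waw genuinely certifies a cohesive group: one must confirm that the weight-maximizing candidate's marked approvers form a valid $\ell$-cohesive set, which hinges on the fact that they all commonly approve that very candidate. The decoupling of the cohesiveness condition (a shared approved candidate) from the violation condition (a per-agent happiness bound) is what permits replacing an apparently exponential search over candidates and subsets by a single weighted-approval maximization; verifying that nothing is lost in this replacement is the only subtle point, and it is settled by the simple biconditional argued above.
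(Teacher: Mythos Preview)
Your proposal is correct and follows essentially the same approach as the paper: loop over $\ell\in[k]$, mark all agents with $h_a(W)<\ell$, and use a single \waw call with $\{0,1\}$ weights to test whether any candidate has at least $\ell\frac{n}{k}$ marked approvers. You add an explicit biconditional argument for correctness that the paper leaves implicit, but the method is identical.
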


This approach does not extend to PJR. In fact, it turns out that checking whether a committee of a matching election provides PJR is coNP-complete. 
This is in contrast to general party-approval elections, for which this problem can be solved in polynomial time \citep{BGP+19a}.

\begin{restatable}[\appsymb]{proposition}{checkPJR}
    Given a matching election $(N,A,k)$ and a committee $\mathcal{M}$, checking whether~$\mathcal{M}$ provides PJR is coNP-complete, even if the given matching election is symmetric and bipartite.
\end{restatable}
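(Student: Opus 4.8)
The plan is to establish membership in coNP and coNP-hardness separately.

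For coNP, observe that a \emph{violation} of PJR admits a short certificate: a value $\ell\in[k]$, a group $S\subseteq N$, and a single matching $M^*$ serving as the cohesiveness witness. Given these, one verifies in polynomial time that (i) $|S|\geq \ell n/k$, (ii) $M^*$ matches every agent of $S$ to an approved partner, which witnesses $\bigcap_{a\in S}A_a\neq\emptyset$ and hence that $S$ is $\ell$-cohesive, and (iii) the number of committee matchings approved by at least one agent of $S$, namely $\sum_{c\in\bigcup_{a\in S}A_a}\mathcal{M}(c)=|\{M\in\mathcal{M}\mid N_M\cap S\neq\emptyset\}|$, is strictly smaller than $\ell$. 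Step (iii) ranges only over the $k$ matchings in $\mathcal{M}$ and is therefore polynomial. Thus deciding whether a violation exists lies in NP, so checking PJR lies in coNP.

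For coNP-hardness I would reduce an NP-complete problem to the \emph{existence} of a PJR violation; a natural source is a clique/densest-subgraph problem, reformulated as: do there exist fewer than $\ell$ vertices of a graph $G$ inducing at least $\ell n/k$ edges (with parameters tuned so that this forces a clique of the target size)? The idea is to let the \emph{flexible} agents of the election correspond to the edges of $G$ and to build a committee $\mathcal{M}$ of maximum (hence Pareto optimal) matchings, one member $M_v$ per vertex $v$, so that an edge-agent $a_e$ with $e=\{u,v\}$ is happy in exactly the two members $M_u$ and $M_v$. Under such a committee a group $S$ of edge-agents is approved precisely by the committee members indexed by the endpoints of the chosen edges; hence $S$ is underrepresented (approved by fewer than $\ell$ members) iff the selected edges span fewer than $\ell$ vertices, which together with $|S|\geq \ell n/k$ is exactly the dense-subgraph condition. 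Cohesiveness of $S$ would be arranged to hold for \emph{every} edge-subset via auxiliary gadgets, so that the only binding constraints are the size threshold and the coverage count, making a PJR violation equivalent to a yes-instance. Realizing the incidence structure of $G$ as an undirected bipartite approval graph keeps the election symmetric and bipartite.

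The main obstacle is the gadget design coupling \emph{cohesiveness} with \emph{Pareto optimality}. By \Cref{ob:equalapproval} the committee members must be maximum matchings, so an agent may be left unhappy only if it is genuinely flexible in the Gallai--Edmonds sense (equivalently, the Dulmage--Mendelsohn decomposition in the bipartite case); one therefore has to place the edge-agents into the deficient part $Y$ while still guaranteeing both that an arbitrary subset of them can be simultaneously saturated by \emph{some} maximum matching and that each committee member $M_v$ saturates exactly the edge-agents incident to $v$ without being extendable (extendability would destroy Pareto optimality or blur the approval pattern $T_{a_e}=\{u,v\}$). Getting this deficiency structure right, together with choosing $n$, $k$, and $\ell$ so that $\ell n/k$ equals the intended edge threshold, is the delicate part. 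The forward direction (a dense subgraph yields a violation) is then routine, whereas the reverse direction---arguing that any cohesive underrepresented group must arise from a genuine clique rather than from some unintended matching---is where the deficiency analysis must be invoked to pin down the admissible witnesses.
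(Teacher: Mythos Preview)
Your coNP-membership argument is correct and essentially what one would expect.

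For hardness, however, your plan \emph{dualizes} the paper's construction, and that inversion is precisely where the difficulty you flag becomes serious. The paper also reduces from \textsc{Clique} on $r$-regular graphs, but it makes the \emph{vertices} into agents and the \emph{edges} into committee members: one node agent $a_v$ per vertex, together with $q$ dummy agents, all of whom approve (and are approved by) $q$ ``good'' agents on the other side of a complete bipartite approval graph. Each committee matching corresponds to an edge $\{u,v\}$ and matches $a_u$, $a_v$, and $q-2$ dummies to the $q$ good agents. With this orientation the two issues you worry about disappear: every committee matching has size $q$ and is therefore a maximum matching, hence Pareto optimal by \Cref{ob:equalapproval}; and any set of at most $q$ node agents is automatically cohesive because the approval graph is complete bipartite towards the $q$ good agents. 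The parameter $k/n$ is then tuned to $r-\frac{q-1}{2}+\frac{1}{q}$, and a short counting argument shows that a violating group must consist of exactly $q$ node agents whose incident-edge count equals $qr-\binom{q}{2}$, i.e., a $q$-clique.

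Your orientation (edge-agents, vertex-matchings) runs into a structural obstruction in the symmetric bipartite setting. By \Cref{ob:equalapproval}, every candidate is a maximum matching, so the number of edge-agents that are happy is the \emph{same} in every Pareto optimal matching. If each $M_v$ is to leave all $|E|-r$ non-incident edge-agents unhappy while being Pareto optimal, then no candidate can make more than $r$ edge-agents happy; but then ``cohesiveness for every edge-subset'' is impossible for subsets larger than $r$, and the cohesiveness witness you need for a clique of size~$q$ already requires $\binom{q}{2}$ simultaneously happy edge-agents. So the tension you identify between cohesiveness and Pareto optimality is not just a delicate gadget problem---in the symmetric bipartite regime it appears to rule out the construction as stated. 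The paper's role-swap avoids this entirely: cohesiveness becomes trivially free, and the reverse direction reduces to arithmetic rather than deficiency analysis.
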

\begin{proofS}
We reduce from the NP-hard \textsc{Clique} problem on $r$-regular graphs \cite{GJ79}, where given an undirected $r$-regular graph $G=(V,E)$ and an integer $q$, the question is whether there exists a set of $q$ pairwise adjacent nodes. 

We insert one \emph{node agent} $a_v$ for each node $v\in V$, $q$ \emph{dummy agents}, and $q$ \emph{good agents}. All node and dummy agents approve all good agents and the other way round. For each edge $\{u,v\}\in E$, we add a matching to $\mathcal{M}$ that matches $a_u$, $a_v$, and $q-2$ dummy agents to good agents. By adding agents with empty approval ballot and adding matchings that do not match node agents, we modify the instance such that $\frac{k}{n}=r-\frac{q-1}{2}+\frac{1}{q}$. Thus, each node agent approves~$r$ matchings from $\mathcal{M}$ and a group of $q$ agents deserves to be represented by $qr-\binom{q}{2}+1$ matchings. Intuitively, there exists a size-$q$ clique in $G$ if and only if $\mathcal{M}$ does not satisfy PJR, as for a group of node agents $X=\{a_v\mid v\in V'\}$, the set of matchings approved by some agent in~$X$ corresponds to the set of edges that are incident to some node in~$V'$. 
\end{proofS}

Note that, in our hardness reduction, the given committee has a non-constant size. In fact, given a party-approval election, the problem whether a committee $W$ provides PJR is solvable in $\mathcal{O}(2^{|W|}\cdot \runt)$-time: For all $\ell \in [k]$, we iterate over all $(\ell-1)$-subsets of candidates $W'\subseteq W$ and mark all agents whose approval set is a subset of $W'$. Subsequently, we check whether there exists a candidate approved by at least $\ell \frac{n}{k}$ of the marked agents.  In this case, the group of $\ell \frac{n}{k}$ agents is $\ell$-cohesive and by construction all of them approve only candidates from the set of $\ell-1$ candidates~$W'$.
\begin{observation}
    Given a party-approval election $(N,C,A,k)$ and a committee $W$, checking whether~$W$ provides PJR can be done in $\mathcal{O}(2^{|W|}\cdot \runt)$-time.
\end{observation}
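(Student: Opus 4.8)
The plan is to turn the search for a PJR violation into a bounded enumeration over the candidates actually occurring in $W$, using one call to \waw per guess to certify cohesiveness. Write $\operatorname{supp}(W)=\{c\in C : W(c)>0\}$ for the set of distinct candidates occurring in $W$; note $|\operatorname{supp}(W)|\le |W|$. First I would recall the structure of a violation: $W$ fails PJR iff there exist $\ell\in[k]$ and an $\ell$-cohesive group $S$ (so $|S|\ge \ell n/k$ and $\bigcap_{a\in S}A_a\neq\emptyset$) with $\sum_{c\in\bigcup_{a\in S}A_a} W(c)<\ell$. The key observation is that such a group is summarized by the set $U:=\big(\bigcup_{a\in S}A_a\big)\cap\operatorname{supp}(W)$ of distinct $W$-candidates it collectively approves: its total multiplicity $m_U:=\sum_{c\in U}W(c)$ equals the left-hand side above, hence $m_U\le\ell-1\le k-1$. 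Since $U\subseteq\operatorname{supp}(W)$, there are at most $2^{|W|}$ possibilities for it.

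Next I would describe the enumeration. For each subset $U\subseteq\operatorname{supp}(W)$, mark every agent $a$ whose $W$-relevant approvals lie inside $U$, i.e.\ with $A_a\cap\operatorname{supp}(W)\subseteq U$, assign weight $1$ to each marked agent and $0$ to all others, and invoke \waw; let $\mu_U$ be the number of marked supporters of the returned candidate. The algorithm reports a PJR violation precisely if some $U$ satisfies $m_U+1\le k$ and $(m_U+1)\,n/k\le\mu_U$, and otherwise certifies PJR.

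For correctness I would argue both directions. For soundness, if the test fires for some $U$, set $\ell:=m_U+1$ and let $c^*$ be the \waw winner; letting $S$ be the marked supporters of $c^*$ gives $|S|=\mu_U\ge\ell n/k$, common approval of $c^*$ (hence $\ell$-cohesiveness), and $\big(\bigcup_{a\in S}A_a\big)\cap\operatorname{supp}(W)\subseteq U$, so the PJR count is at most $m_U=\ell-1<\ell$, a genuine violation. For completeness, given any violation with group $S$ and parameter $\ell$, the guess $U=\big(\bigcup_{a\in S}A_a\big)\cap\operatorname{supp}(W)$ marks all of $S$, and a common candidate of $S$ is approved by at least $|S|\ge\ell n/k\ge(m_U+1)n/k$ marked agents, so $\mu_U$ meets the threshold and the test fires. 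The running time is $2^{|W|}$ guesses, each costing one \waw call of time $\runt$ plus polynomial bookkeeping for marking and for computing $m_U$ and the threshold, giving $\mathcal{O}(2^{|W|}\cdot\runt)$.

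The step I expect to be most delicate is the bookkeeping around multiplicities: because PJR counts \emph{copies} of approved candidates rather than distinct candidates, it is essential to guess the distinct set $U$ but to threshold against its multiplicity $m_U$, and to mark according to $A_a\cap\operatorname{supp}(W)$ rather than $A_a$ itself (an agent may approve candidates outside $W$ that contribute nothing to the count). Getting these two conventions right is exactly what makes the reduction to a single \waw call per guess both sound and complete.
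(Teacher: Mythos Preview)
Your proof is correct and follows essentially the same approach as the paper: enumerate subsets of the committee's candidates, mark agents whose committee-relevant approvals fall within the guessed subset, and use a single \textsc{Weighted Approval Winner} call to test for a commonly approved candidate with enough marked supporters. Your version is in fact slightly more careful than the paper's informal sketch---you correctly mark according to $A_a\cap\operatorname{supp}(W)\subseteq U$ rather than $A_a\subseteq W'$ (which would wrongly exclude agents approving candidates outside~$W$), and you handle multiplicities cleanly by enumerating over $\operatorname{supp}(W)$ and thresholding against $m_U$ instead of enumerating $(\ell{-}1)$-sub-multisets of~$W$ for each~$\ell$.
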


Finally, we show that checking core stability is computationally intractable, even for a constant committee size.

\begin{restatable}[\appsymb]{proposition}{checkcore} \label{pr:check-core}
    Given a matching election $(N,A,k)$ and a committee $\mathcal{M}$, checking whether~$\mathcal{M}$ is core stable is coNP-hard, even if $k=6$
    and the given matching election is bipartite. 
\end{restatable}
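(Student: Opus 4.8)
The first thing to settle is the direction of the reduction. The complement problem --- \emph{is $\mathcal{M}$ blocked by some coalition?} --- lies in NP: a certificate consists of a level $\ell\in[k]$, a coalition $S\subseteq N$, and a deviating committee $W'$ of size $\ell$; since $k=6$ is constant, $W'$ is a multiset of at most six matchings and hence has polynomial size, so one can verify $|S|\ge \ell n/k$ and $h_a(W')>h_a(\mathcal{M})$ for every $a\in S$ in polynomial time. Thus checking core stability is in coNP, and it suffices to give a polynomial reduction from an NP-hard problem to the question of whether a blocking coalition exists. I would reduce from \textsc{Clique} on regular graphs, the same source used in the PJR sketch above, so that \emph{a clique of size $q$ exists} corresponds exactly to \emph{$\mathcal{M}$ is blocked}.

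Given an $r$-regular graph $G=(V,E)$ and a target size $q$, I would build a bipartite matching election together with a fixed committee $\mathcal{M}$, reusing the node-agent/good-agent skeleton of the PJR construction. One side of the bipartition carries a \emph{node agent} $a_v$ for each $v\in V$ (together with auxiliary and dummy agents), the other side carries \emph{good agents}, and each edge $\{u,v\}\in E$ is encoded as an ``edge matching'' in which $a_u$ and $a_v$ are matched to approved good agents. The committee $\mathcal{M}$ is chosen so that every node agent has a \emph{uniform baseline happiness} $h_{a_v}(\mathcal{M})$, and I would pad the instance with empty-ballot agents and with matchings that leave all node agents untouched --- exactly as in the PJR sketch --- in order to (i) fix $k=6$ and (ii) tune $n$ so that the cohesiveness threshold $\ell\,n/k$ for the intended blocking level $\ell$ equals precisely the number of node agents a $q$-clique can improve. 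The key correspondence is that a size-$\ell$ deviation $W'$ amounts to choosing $\ell$ edge matchings, i.e.\ $\ell$ edges of $G$ (with multiplicity), and a node agent $a_v\in S$ strictly improves iff it is an endpoint of at least $h_{a_v}(\mathcal{M})+1$ of the chosen edges.

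The forward direction is then routine: the $\binom{q}{2}$ edges inside a $q$-clique can be packed so that each of the $q$ clique vertices lies on enough chosen edges to exceed its baseline, and the padding makes the coalition $S$ of these node agents meet the threshold, so $\mathcal{M}$ is blocked. The main obstacle is \textbf{soundness}, as is typical for ``existence of a deviation'' reductions. By an incidence count, any size-$\ell$ deviation strictly improving a coalition $S$ of node agents must supply at least $\sum_{a_v\in S}\bigl(h_{a_v}(\mathcal{M})+1\bigr)$ endpoint-incidences from only $2\ell$ edge-endpoints; since an edge leaving $S$ contributes only one useful incidence, meeting the tuned threshold forces the chosen edges to lie inside $S$ and to be dense enough to constitute a $q$-clique. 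Making this rigorous requires ruling out \emph{every} admissible level $\ell\le 6$ and every mixture of node, auxiliary, and dummy agents, which is exactly what the rigidity of the padding must enforce; one must also check that all matchings involved stay bipartite, Pareto optimal, and minimal so that they are legitimate candidates. The constant $k=6$ ultimately falls out of the number of matching ``slots'' the gadget needs to simultaneously encode the baseline happiness and leave room for the blocking deviation.
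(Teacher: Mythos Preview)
Your proposal has a fundamental gap: the claim that ``a size-$\ell$ deviation $W'$ amounts to choosing $\ell$ edge matchings'' is unjustified and in fact false in the setup you describe. In the PJR skeleton you reuse, every node agent approves every good agent. A blocking coalition may deviate with \emph{any} $\ell$ matchings, not only the edge matchings that populate $\mathcal{M}$; in particular, a single matching can pair any $\le q$ node agents with the $q$ good agents and satisfy all of them, and repeating it $\ell$ times gives every member of $S$ happiness $\ell$. Hence any set of $\le q$ node agents blocks exactly as easily as a clique does, and soundness collapses. There is also a scale mismatch in the forward direction: with $k=6$ the deviation has at most six matchings, so if each ``edge matching'' improves only its two endpoints you have at most twelve useful incidences and cannot ``pack the $\binom{q}{2}$ clique edges'' for unbounded~$q$; if instead each deviation matching may improve many node agents at once, it no longer corresponds to selecting an edge of~$G$. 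Either branch breaks the intended correspondence.

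The paper avoids this by reducing from \textsc{X3C} and placing the combinatorial constraint in the approval graph itself rather than in $\mathcal{M}$. A set agent $a_c$ approves only the three element agents of its set (and one dummy), so in any deviation $a_c$ can be satisfied only by consuming one of those three element agents. Additional dummy gadgets are tuned so that every element agent must be matched to its dummy partner in two of the three deviation matchings, leaving exactly one slot for a set agent. A counting argument then pins the size of any blocking coalition and forces it to contain exactly $q$ set agents whose sets cover the $3q$ elements exactly once. The restricted approvals are what constrain the shape of the deviation; in your construction the deviation is essentially unconstrained, which is why the reduction does not go through.
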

\begin{proofS}
    We reduce from the NP-hard \textsc{Exact Cover by 3-Sets} (\textsc{X3C}) problem, where we are given a universe~$X$ of $3q$ elements and a collection $C$ of $3$-element subsets of~$X$ in which each element appears in exactly three sets and the question is whether there exists an exact cover $C'\subseteq C$ of $X$. An exact cover contains each element from $X$ exactly once.
    For each element $x\in X$, we insert one \emph{element agent} $a_x$ and one \emph{dummy element agent} $b_x$. Moreover, for each set $c\in C$, we add one \emph{set agent} $a_c$. For each element $x\in X$, the element agent $a_x$ and the dummy element agent~$b_x$ approve each other. Moreover, the element agent $a_x$ and the three set agents $a_c$ corresponding to sets in which $x$ is contained approve each other. Thus, the element agents form one side of the bipartition and the dummy element and set agents the other.
    We construct the committee $\mathcal{M}$ consisting of six matchings such that each dummy element agent approves one matching, each element agent approves two matchings, and each set agent approves two matchings. Moreover, we modify the instance such that each possible blocking coalition needs to deserve to be represented by three matchings and needs to contain all element and dummy element agents and $q$ set agents. It can be shown that a subset $C'\subseteq C$ is an exact cover if and only if the corresponding set agents together with all element and dummy element agents form a blocking coalition.
\end{proofS}

\section{Conclusion}

We initiated the study of a multiagent problem at the intersection of social choice and matching theory: Given preferences of agents over each other, we model the problem of finding a 
representative multiset of matchings 
as a multiwinner election. 
Notwithstanding the difficulty presented by an exponential candidate space, we exploit the structure of the election domain to recover the computational tractability of some considered sequential rules, 
and also establish computational and axiomatic results that do not hold in the general setting.  

There are several intriguing directions for future work on matching elections. 
First, 
one could consider
axioms that are tailored to the specific structure of the setting. For example, a natural relaxation of core stability could only allow groups of agents to be matched among themselves in a deviation. 
Second, it would be natural to allow agents to rank-order potential matching partners and apply ordinal multiwinner voting procedures. 
Third, it would be interesting to identify other relevant multiwinner voting domains involving compactly representable preferences over an exponential candidate space.

% 1. ALternative:
Finally, in some applications, one is interested in finding multiple matchings of the same set of agents to be implemented one after the other. It is therefore natural to try to find a \textit{sequence} of matchings, rather than simply a multiset (as done in this paper). While an arbitrary ordering of a proportional committee still provides proportionality if assessed as a whole, in such temporal settings, it might also be desirable to satisfy proportionality constraints for every sliding window of the sequence. 
One potential way to achieve this is to introduce depreciation weights to sequential rules, capturing the amount and recency of representation that agents have observed so far. Similar ideas have been recently explored  within the context of approval-based multiwinner elections~\citep{DBLP:conf/aaai/Lackner20}.

\section*{Acknowledgments}
% We thank \'Agnes Cseh for helpful discussions. 
% Niclas Boehmer was supported by the Deutsche Forschungsgemeinschaft (DFG) under grant MaMu (NI 369/19). Markus Brill and Ulrike Schmidt-Kraepelin were supported by DFG under grant BR 4744/2-1.
This material is based on work supported by the Deutsche Forschungsgemeinschaft under grants NI~369/19 and BR~4744/2-1. 
We thank \'Agnes Cseh for helpful discussions.

\bibliographystyle{plainnat}
\bibliography{small,abb,algo}

\newpage
\appendix

\section{Omitted Proofs from Section \ref{sec:first_observation}}

\lemWeighApp*

\begin{proof}
Given a matching election $(N,A)$ and a weight function $\omega$ on the agents, let $G=(N,\A,E)$ be the corresponding approval graph. Recall that $G$ is a mixed graph, where $N$ is the set of nodes, $\A$ is the set of directed edges and $E$ is the set of undirected edges. We denote by $\bar{G}=(N,\bar{E})$ the undirected graph induced by $G$. More precisely, $\bar{E} = \{\{a,b\} \mid \{a,b\} \in E \text{ or } (a,b) \in \A \}$. We show how to solve the \textsc{Weighted Approval Winner} problem by computing two maximum weight matchings in $\bar{G}$, with respect to two different weight functions. We start by defining the first weight function on the edges $w:\bar{E} \rightarrow \mathbb{R}_{\geq 0}$. For every directed edge in $G$, $(a,b) \in \A$, let $w(\{a,b\}) = \omega_a$ and for every undirected edge in $G$, $\{a,b\} \in E$ , let $w(\{a,b\}) = \omega_a + \omega_b$.

By construction of the weight function $w$ it holds that for every matching $M$ in $\bar{G}$ the weight of $M$ with respect to $w$ equals the weighted sum of all agents under $\omega$ that approve $M$, that is, $\sum_{e \in M} w(e) = \sum_{a \in N_M} \omega_a$. Let $M$ be a maximum weight matching in $\bar{G}$ with respect to $w$. By the above observation, $M$ also maximizes the weighted approval sum under $\omega$ among all matchings of the agents. Recall that, in order for $M$ to be a candidate within the matching election $(N,A)$, it needs to be minimal and Pareto optimal. While $M$ clearly satisfies minimality (every edge included in $M$ is approved by at least one agent), Pareto optimality is not guaranteed since there might exist agents $a \in N$ with $\omega_a = 0$. 

In the following, we construct a matching $M'$ in $\bar{G}$ based on the matching $M$, that is minimal, maximizes the weighted sum of approvals and is also guaranteed to be Pareto optimal. To this end we first define a second weight function on the agents, i.e., $\omega':N \rightarrow \mathbb{R}_{\geq 0}$. More precisely, $\omega'_a = n+1$ if $a \in N_M$ and $\omega'_a = 1$ if $a \not\in N_M$. Again, we derive a weight function $w':\bar{E} \rightarrow \mathbb{R}_{\geq 0}$ on the edges of $\bar{G}$ as follows. For every directed edge in $G$, $(a,b) \in \A$, let $w'(\{a,b\}) = \omega'_a$ and for every undirected edge in $G$, $\{a,b\} \in E$ , let $w'(\{a,b\}) = \omega'_a + \omega'_b$. Again, by construction of $w'$ it holds for every matching $M'$ in $\bar{G}$ that $\sum_{e \in M'} w'(e) = \sum_{a \in N_{M'}} \omega'_a$.

Let $M'$ be a maximum weight matching in $\bar{G}$ with respect to $w'$. We first claim that $N_M \subseteq N_{M'}$. For the sake of contradiction, assume that this is not the case. Then, \[\sum_{e \in M} w'(e) = \sum_{a \in N_M} \omega'_a = |N_M| \cdot (n+1) > (|N_M|-1) \cdot (n+1) + n \geq \sum_{a \in N_{M'}} \omega'_a = \sum_{e \in M'}w'(e),\] which contradicts the maximality of $M'$ with respect to $w'$. Hence, \[\sum_{e \in M} w(e) = \sum_{a \in N_M} \omega_a \leq \sum_{a \in N_{M'}} \omega_a = \sum_{e \in M'} w(e), \] and by the maximality of $M$ with respect to $w$ the two sides are equal. Hence, $M'$ also maximizes the weighted approval sum with respect to $\omega$. Moreover, $M'$ is minimal, since every edge in $M'$ is approved by at least one agent. Lastly, it remains to show that $M'$ is also Pareto optimal. Assume for contradiction that there exists $M''$ with $N_{M'} \subsetneq N_{M''}$. However, since $\omega'$ is strictly positive for all agents, this would imply \[\sum_{e \in M''} w'(e) = \sum_{a \in N_{M''}} \omega'_a > \sum_{a \in N_{M'}} \omega'_a = \sum_{e \in M'} w'(e),\] a contradiction to the maximality of $M'$ with respect to $w'$. We conclude that $M'$ is a \textsc{Weighted Approval Winner} for the matching election $(N,A)$ and the weight function $\omega$. 

Summarizing, we have shown that the \textsc{Weighted Approval Winner} problem for any matching election can be solved by computing two maximum weight matchings. This can be done in $\mathcal{O}(n^3)$ time (Theorem 11.19 in \cite{10.5555/2190621}).
\end{proof}
\section{Omitted Proofs from Section \ref{sec:cc}} \label{ap:cc}
\lpav*
\begin{proof}
\citet{BGP+19a} showed that for a party-approval election $(N,C,A,k)$ a core-stable committee can be computed by running a parameterized local search variant of PAV. The method was originally introduced by \citet{AEH+18a} for general approval-based multiwinner elections. In the following, we present their method tailored to the party-approval setting and show that computing a winning committee under it can be reduced to solving the \waw problem $\mathcal{O}(nk^3\ln(k))$ times. 

Let $w$ be the weight sequence corresponding to PAV, i.e., $w_i = 1/i$ for all $i \in \mathbb{N}$. The method \emph{LS-PAV} starts by selecting an arbitrary size-$k$ committee $W$. Then, it checks whether there exists an \emph{improving swap} defined as follows. A \emph{swap} replaces one candidate $c$ which occurs at least once in $W$ by some other candidate $c'\neq c$. Let $W'$ be the committee obtained from $W$ by removing (one copy of) $c$ and adding one copy of $c'$. The swap replacing $c$ by $c'$ is called \emph{improving} iff \[\score_w(W')\geq \score_w(W) + \epsilon,\] where $\epsilon:= \frac{1}{(1+2(k-1))(k-1)k}$. LS-PAV searches for an improving swap $(c,c')$ and, if an improving swap exists, updates the committee by exchanging (one copy of) $c$ for (one copy of) $c'$. This procedure is repeated until there does not exist any improving swap.

We claim that we can check whether there exists an improving swap (and if so find one) in $\mathcal{O}(k \cdot r_{waw})$-time: For a given committee $W$, iterate over all $c$ that are selected at least once in $W$. Define $\widetilde{W}$ as the committee obtained from $W$ by deleting (one copy of) $c$. We create a $\waw$ instance by setting the weights of the agents to $\omega(a) = w_{h_a(\widetilde{W}) + 1}$. Let $c'$ be a weighted approval winner of this instance. Then, there exists an improving swap replacing $c$ iff $(c,c')$ is an improving swap. Moreover, \citet{BGP+19a} showed that the algorithm always terminates after performing at most $\mathcal{O}(nk^3\ln(k))$ improving swaps and that the outcome is guaranteed to satisfy core stability. 
\end{proof}

\pavhard*
\begin{proof}[Proof (continued)]
It remains to be proven that for every two matching $M'_1$, $M'_2$ in the constructed matching election with $\score_w(\{M'_1,M'_2\})\geq(5/2w_1 + 3/2w_2)\eta$ it needs to hold that both of them are proper matchings. For every matching $M'$ of agents in $N$ and each edge $\{v_i,v_j\}\in E$ with $i<j$, one of the following four cases has to hold:
\begin{description}
\setlength\itemsep{-0.2em}
    \item[Case 1:] $\{a_{i}, a_{ij}\}$ and $\{a_{j}, a'_{ij}\}\in M'$
    \item[Case 2:] 
    \big($\{a_{i}, a_{ij}\}\in M'$ and $\{a_{j}, a'_{ij}\}\notin M'$\big) or  \big($\{a_{i}, a_{ij}\}\notin M'$ and $\{a_{j}, a'_{ij}\}\in M'$\big)
    \item[Case 3:] $\{a_{ij}, a'_{ij}\}\in M'$
    \item[Case 4:] None of the three edges is part of $M'$
\end{description}
 Note that the last case never occurs, as $M'$ cannot be Pareto optimal (we can additionally match the happy and sad edge agent which leads to a strict extension of $N_{M'}$). This implies that the first three cases together happen $\frac{3}{2}\eta$ times.
 Let $y$ denote the frequency of the first and $z$ the frequency of the second case in $M'_1$
 and $\tilde{y}$ and $\tilde{z}$ their frequencies in $M'_2$.
 
We now bound the score of $M'_1$ and the marginal contribution of $M'_2$ in these four variables. The number of agents that approve $M'_1$ is $2y$ plus $z$ plus the number of times the third case appears, which is $\frac{3}{2}\eta-y-z$:
\[\score_w(M'_1) % =[2y +z +(\frac{3}{2}n-y-z)] w_1
=\left(\frac{3}{2}\eta+y\right)w_1 \text.\]
Turning to the marginal contribution of the second matching $M'_2$, we first consider the contribution of the node agents. We know that $M'_2$ is approved by $2\tilde{y}+\tilde{z}$ node agents. As $M'_1$ is approved by $2y+z$ node agents, at most $\eta-2y-z$ node agents can contribute with $w_1$ to the marginal score of $M'_2$, while the remaining $2\tilde{y}+\tilde{z}-(\eta-2y-z)$ contribute with $w_2$. Turning to the edge agents, $M'_2$ is approved by  $\frac{3}{2}\eta-\tilde{y}-\tilde{z}$ happy edge agents. As the first matching is approved by all but $y+z$ happy edge agents, the number of happy edge agents contributing $w_1$ to the marginal score of $M'_2$  can be upper bounded by $y+z$, while the remaining $\frac{3}{2}\eta-\tilde{y}-\tilde{z}-y-z$ happy edge agents approving $M'_2$ contribute $w_2$. Thus, the marginal contribution
% $w\text{-Thiele}_{M'_1}(M'_2)$ 
of $M'_2$ can be upper bounded as:
\[\score_w(\{M'_1,M'_2\}) - \score_w(\{M'_1\}) \leq (\eta-y)w_1+(\frac{\eta}{2}+\tilde{y}+y)w_2 \text.\]
Combining the two bound yields:
\[\score_w(\{M'_1,M'_2\}) \le \frac{5}{2}\eta w_1+ (\frac{\eta}{2}+\tilde{y}+y)w_2. \] 
Recall that $y,\tilde{y}\leq \frac{\eta}{2}$. Thus, as we have assumed that $\score_w(\{M'_1,M'_2\})\geq(5/2w_1 + 3/2w_2)\eta$, it needs to hold that $y=\tilde{y}=\frac{\eta}{2}$, which implies that $z=\tilde{z}=0$. From this it directly follows that $M'_1$ and $M'_2$ are proper matchings.  
\end{proof}

As each Pareto optimal matching in a symmetric matching elections corresponds to a maximum matching in the corresponding approval graph, we make use of the Gallai-Edmonds decomposition \cite{Gall64a,Edmo65a} to transform a symmetric matching election into a bipartite matching election:

\paragraph{Gallai-Edmonds decomposition.} Let $G=(V,E)$ be an undirected graph and $W,X,$ and $Y$ be a partition of the set of nodes $V$, such that $Y$ is the set of nodes which are not matched in all maximum matchings, $X$ are their neighbors from $V \setminus Y$, and ${W = V\setminus (Y\cup X)}$. Concerning the notation, we denote by $G[S]$ the subgraph induced by $S \subseteq V$, i.e., the graph $(S,E[S])$, where $E[S]$ is the set of all edges from $E$ having both end nodes in $S$.  The decomposition theorem \cite{Gall64a,Edmo65a} says that 
\begin{enumerate}
\setlength\itemsep{-0.25em}
 \item the graph $G[W]$ contains a perfect matching; \label{item:Gall-Edm-1}
 \item the connected components of $G[Y]$ are all factor-critical, i.e., removing any node from a connected component of $G[Y]$ results in a graph containing a perfect matching; and \label{item:Gall-Edm-2}
 \item in every maximum matching, all nodes from $X$ are matched to distinct connected components of $G[Y]$. \label{item:Gall-Edm-3}
\end{enumerate}

Using the Gallai-Edmonds decomposition of the approval graph, we can prove the following lemma:
\bipsymm*
\begin{proof}
Let $(N,A,k)$ be a symmetric matching election. Applying the Gallai-Edmonds decomposition to the approval graph $G$ of $(N,A,k)$, we can partition the set of agents into three sets $W,X,Y$ with the above described properties. For any matching $M$ in $(N,A,k)$, the only relevant information to determine $N_M$ is the matching between the agents in $X$ and the agents in $Y$. Following this idea, we construct the bipartite and symmetric matching election $\psi\big((N,A,k)\big)=(N'=N'_1 \dot{\cup}N'_2,A',k)$ that provides this information as follows. We set $N'_1=Y$ and $N'_2 = X \cup D$, where $D$ is a set of dummy nodes. More precisely, $D$ is constructed as follows: Let $Y_1,\dots, Y_{\ell}$ be the subgroups of agents corresponding to the connected components in $G[Y]$. For some group $Y_i$, we add $|Y_i|-1$ dummy agents $d_i^{(1)},\dots, d_i^{(|Y_i|-1)}$ to $D$. These agents approve the agents of $Y_i$ and vice versa. Lastly, agents from $Y$ and $X$ approve each other in the new preference profile $A'$ iff they approve each other in the original preference profile $A$. 

We further define two transformations $\mu$ and $\varphi$ that, given a Pareto optimal matching in the symmetric instance $(N,A,k)$, return a Pareto optimal matching in $\psi\big((N,A,k)\big)=(N'=N'_1 \dot{\cup}N'_2,A',k)$ and vice versa. For a Pareto optimal matching $M$ in $(N,A,k)$, we define $\mu(M)$ as follows: For each pair between an agent from $X$ and an agent from $Y$ in $M$, we add the same pair to $\mu(M)$. For all groups $Y_i$ which now already have one matched agent, we match the remaining agents to their corresponding dummy agents. For all other groups $Y_i$, we leave exactly the agent unmatched which is unmatched in $M$ and match the remaining agents to their corresponding dummy agents. Observe that this transformation maintains the set of agents in $Y$ that are matched, i.e., $N_M \cap Y = N'_{\mu(M)} \cap Y$. For the opposite direction, $\varphi(\cdot)$, let $M'$ be a Pareto optimal matching in the constructed bipartite graph. By using Pareto optimality, it can be shown that all dummy agents need to be matched to agents they approve and at most one agent from each group $Y_i$ is matched to an agent from $X$. Moreover, all agents in $X$ need to be matched to agents they approve. We define the matching $\varphi(M')$ by first adding all pairs between agents from $X$ and agents from $Y$ in $M'$ and a perfect matching of the agents in $W$. Lastly, for all groups $Y_i$, we add a matching leaving exactly one agent in $Y_i$ unmatched. More precisely, for those groups having an agent matched to an agent from $X$, we leave this agent unmatched and for a group $Y_i$ not having any agent matched to an agent from $X$, we leave the same agent unmatched which is unmatched in $M'$. Similarly to before, this transformation maintains the set of agents in $Y$ that are matched, i.e., $N'_{M'} \cap Y = N_{\varphi(M')} \cap Y$. 

We straightforwardly extend the two transformations from matchings to committees of matchings. More precisely, for a given committee $\mathcal{S}=\{M_1, \dots, M_k\}$ in $(N,A,k)$, we define $\mu(\mathcal{S}) := \{\mu(M_1),\dots,\mu(M_k)\}$ and for a committee $\mathcal{M}=\{M'_1, \dots, M'_k\}$ in $\psi\big((N,A,k)\big)$, we define $\varphi(\mathcal{M}) := \{\varphi(M'_1),\dots,\varphi(M'_k)\}$. In order to clearly distinguish both instances, we write $h_a(\mathcal{S})$ for the number of matchings agent $a$ from instance $(N,A,k)$ approves in $\mathcal{S}$ and $h'_a(\mathcal{M})$ for the number of matchings in $\mathcal{M}$ an agent $a$ from instance $\psi\big((N,A,k)\big)$ approves. Observe that for some committee $\mathcal{S}$ in $(N,A,k)$ it holds that $h_a(\mathcal{S}) = h'_a(\mu(\mathcal{S}))$ for all $a \in Y$. Symmetrically, for some committee $\mathcal{M}$ in $\psi\big((N,A,k)\big)$ it holds that $h'_a(\mathcal{M}) = h_a(\varphi(\mathcal{M}))$ for all agents $a \in Y$. 

We now turn to proving that $\varphi$ and $\psi$ fulfill the property stated in the theorem. Assume for contradiction that $\mathcal{M}$ is winning under $w$-Thiele in $\psi\big((N,A,k)\big)$, but $\varphi(\mathcal{M})$ is not winning under $w$-Thiele in $(N,A,k)$. Hence, there exists a size-$k$ committee $\mathcal{S}$ in $(N,A,k)$ with $\score_w(\mathcal{S}) > \score_w(\varphi(\mathcal{M}))$. In particular, this implies that \begin{equation}
\sum_{a \in Y} \sum_{i=1}^{ h_a(\mathcal{S})}w_i > \sum_{a \in Y} \sum_{i=1}^{h_a(\varphi(\mathcal{M}))}w_i. \label{eq:one}\end{equation} Now, using one of our transformations, we can find a committee $\mu(\mathcal{S})$ in the bipartite instance such that all agents in $X\cup D$ are matched $k$ times and $h_a(\mathcal{S}) = h'_a(\mu(\mathcal{S}))$ for all agents in $a \in Y$, i.e., an agent from $Y$ in the bipartite instance approves the same number of matchings from $\mu(\mathcal{S})$ as the corresponding agent from the symmetric instance approves in $\mathcal{S}$. 
We get \vspace{-0.2cm}
\begin{align*}
    \score_w(\mu(\mathcal{S})) & = \sum_{a \in Y} \sum_{i=1}^{h'_a(\mu(\mathcal{S}))}w_i + \sum_{i=1}^k w_i \cdot (|X| + |D|)  = \sum_{a \in Y}\sum_{i=1}^{h_a(\mathcal{S})}w_i + \sum_{i=1}^k w_i \cdot (|X| + |D|) \\ 
    & > \sum_{a \in Y}\sum_{i=1}^{h_a(\varphi(\mathcal{M}))}w_i + \sum_{i=1}^k w_i \cdot (|X| + |D|)  = \sum_{a \in Y} \sum_{i=1}^{h'_a(\mathcal{M})}w_i + \sum_{i=1}^k w_i \cdot (|X| + |D|) \\ 
    & = \score_w(\mathcal{M}),
\end{align*}
where the inequality follows from (\ref{eq:one}). This yields a contradiction to the optimality of $\mathcal{M}$.

Concerning the running time of $\psi(\cdot)$, note that a Gallai-Edmonds decomposition can be computed by running Edmond's blossom algorithm \citep{Edmo65a} once which needs $\mathcal{O}(n^3)$-time. Given such a decomposition, constructing $\psi(\cdot)$ can be done in $\mathcal{O}(n^2)$-time.
On the other hand, applying the transformation $\varphi(\cdot)$, we have to compute one maximum cardinality matching of the vertices $Y_i$ for each $i\in [\ell]$.  Since the groups $Y_i$ correspond to the connected components of $G[Y]$, this can be done by computing one maximum cardinality matching in $G[Y]$ (where some nodes were deleted). This can be done in $\mathcal{O}(n^3)$-time. 
\end{proof}

%\thmSeqPhragmen*

\section{Omitted Proofs from Section \ref{sec:axioms}}
\Seqcore*
\begin{proof}

\iffalse
\begin{figure}
    \centering
    \begin{tikzpicture}
         \node[node, label=180:$z$] (v1) at (0, 0) {};
         \node[node, label=180:$y$] (v2) at (0, 2) {};
         \node[node, label=180:$x$] (v3) at (0, 4) {};
         
         \node[node] (z1) at (3, 4) {};
         \node[mnode]  at (3, 3.75) {};
         \node[mnode]  at (3, 3.5) {};
         \node[mnode]  at (3, 3.25) {};
         \node[node] (zz1) at (3,3) {};
         \node[node] (z2) at (3, 1.5) {};
         \node[node] (zz2) at (3,2.5) {};
         \node[mnode]  at (3, 1.75) {};
         \node[mnode]  at (3, 2) {};
         \node[mnode]  at (3, 2.25) {};
         \node[node] (z3) at (3, 1) {};
         \node[mnode]  at (3, 0.75) {};
         \node[mnode]  at (3, 0.5) {};
         \node[mnode]  at (3, 0.25) {};
         \node[node] (zz3) at (3,0) {};
         \node[text width=1cm] at (4,3.5) {A};
        \node[text width=1cm] at (4,2) {B};
        \node[text width=1cm] at (4,0.5) {C};
        \draw (v3) edge (z1);
         \draw (v3) edge (zz1);
         \draw (v2) edge (z2);
         \draw (v2) edge (zz2);
         \draw (v1) edge (z1);
         \draw (v1) edge (zz1);
         \draw (v1) edge (z2);
         \draw (v1) edge (zz2);
         \draw (v1) edge (z3);
         \draw (v1) edge (zz3);
    \end{tikzpicture}
    \caption{Approval graph of counterexample for core stability for sequential $w$-Thiele rules from \Cref{pr:Seq-core}.}
    \label{fi:Seq-core}
\end{figure}
\fi

\tikzstyle{mnode}=[draw, circle, fill, inner sep = 0.5pt]
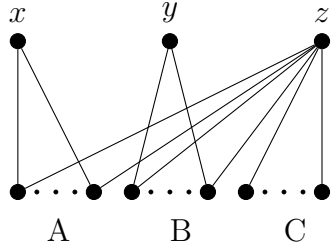
\begin{figure}
    \centering
    \begin{tikzpicture}
         \node[node, label=90:$z$] (v1) at (4, 0) {};
         \node[node, label=90:$y$] (v2) at (2, 0) {};
         \node[node, label=90:$x$] (v3) at (0, 0) {};
         
         \node[node] (z1) at (4,-2) {};
         \node[mnode]  at (3.75,-2) {};
         \node[mnode]  at (3.5,-2) {};
         \node[mnode]  at (3.25,-2) {};
         \node[node] (zz1) at (3,-2) {};
         \node[node] (z2) at (1.5,-2) {};
         \node[node] (zz2) at (2.5,-2) {};
         \node[mnode]  at (1.75,-2) {};
         \node[mnode]  at (2,-2) {};
         \node[mnode]  at (2.25,-2) {};
         \node[node] (z3) at (1,-2) {};
         \node[mnode]  at (0.75,-2) {};
         \node[mnode]  at (0.5,-2) {};
         \node[mnode]  at (0.25,-2) {};
         \node[node] (zz3) at (0,-2) {};
         \node[text width=1cm] at (0.9,-2.5) {A};
        \node[text width=1cm] at (2.5,-2.5) {B};
        \node[text width=1cm] at (4,-2.5) {C};
        \draw (v3) edge (z3);
         \draw (v3) edge (zz3);
         \draw (v2) edge (z2);
         \draw (v2) edge (zz2);
         \draw (v1) edge (z1);
         \draw (v1) edge (zz1);
         \draw (v1) edge (z2);
         \draw (v1) edge (zz2);
         \draw (v1) edge (z3);
         \draw (v1) edge (zz3);
    \end{tikzpicture}
    \caption{Approval graph of counterexample for core stability for sequential $w$-Thiele rules from \Cref{pr:Seq-core}.}
    \label{fi:Seq-core}
\end{figure}

To show the proposition, we present a symmetric matching election and construct a committee which is winning under seq-$w$-Thiele but fails to be core stable.\footnote{To make some of the calculations easier, we construct the instance in a way such that $n=k$. Thus, the example should not be understood as a minimal counterexample.}
The instance consists of three groups of \emph{dummy agents} $A=\{a_1,\dots,a_{27}\}$, $B=\{b_1,\dots,b_{27}\}$, and $C=\{c_1,\dots,c_{41}\}$ and three \emph{special agents} $x$, $y$, and $z$. Approvals are symmetric and the special agent $x$ approves all agents from $A$, the special agent $y$ approves all agents from $B$, and the special agent $z$ approves all dummy agents. See \Cref{fi:Seq-core} for a visualization. Note that this instance consists of $98$ agents. We set $k=n=98$. Thus, every agents deserves to be represented by one matching.  

We now construct a committee $\mathcal{M}$ that is winning under seq-$w$-Thiele and argue that it is not core stable. In the first nine matchings, we match $x$ and $z$ to distinct agents from $A$ and $y$ to distinct agents from $B$. In the matchings ten to eighteen, we match $y$ and $z$ to previously unmatched agents from $B$ and $x$ to a previously unmatched agent from $A$. Note that the selected matchings are winning under seq-$w$-Thiele in their respective round, as we match only so-far unmatched dummy agents and assume $w_1\geq w_2$. Overall, all agents from $A$ and $B$ are matched in exactly one of the first eighteen matchings. In the remaining $80$ matchings, we match $x$ to an agent from $A$, $y$ to an agent from $B$, and $z$ to an agent from $C$ such that approvals within $A$, $B$, and $C$ are distributed as equally as possible. We can do so by constructing the matchings sequentially and always matching each special agent to the so far unhappiest agent from the respective group. Note that it is possible to distribute the approvals as equally as possible within a set, as we have assumed that $w_i\geq w_{i+1}$ for all $i\in  \mathbb{N}$. Moreover, after matching eighteen, it is always possible to match $z$ to an agent of $C$ in a winning matching, as over the whole construction process, each node from $A$ and $B$ approves the same or more of the already added matchings than a node from $C$ ($|B|,|A|<|C|$).

To summarize, the summed happiness score of the agents from the three different sets are as follows: $\sum_{a\in A} h_a(\mathcal{M})=\sum_{a\in B} h_a(\mathcal{M})= 98+9=107$ and $\sum_{a\in C} h_a(\mathcal{M})=80.$ Note that it holds that $\frac{107+1}{27}=4$ and $\frac{80+2}{41}=2$. By the pigeonhole principle, this implies that there exists at least one agent $a$ from $A$ that approves only three matchings from $\mathcal{M}$, at least one agent $b$ from $B$ that approves only three matchings, and, as happiness scores are distributed as equally as possible, two agents $c$ and $c'$ from $C$ which only approve one matching. We claim that the group $\{a,b,c,c'\}$ blocks $\mathcal{M}$. Note that this group deserves to be represented by four matchings. Let $\mathcal{M'}$ be a set of four matchings, where $a$ is matched to $x$ and $b$ is matched to $y$ in all four matchings, while in two matchings, $c$ is matched to $z$ and in the other two, $c'$ is matched to $z$. As all four agents approve strictly more matchings from $\mathcal{M'}$  than from $\mathcal{M}$, core stability is violated. 
\end{proof}

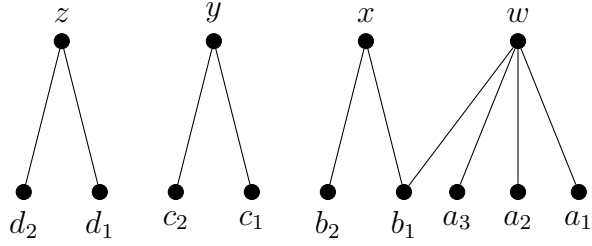
\begin{figure}
    \centering
    \begin{tikzpicture}
         \node[node, label=90:$z$] (v1) at (0, 0) {};
         \node[node, label=90:$y$] (v2) at (2, 0) {};
         \node[node, label=90:$x$] (v3) at (4, 0) {};
         \node[node, label=90:$w$] (v4) at (6, 0) {};

         \node[node,, label=270:$d_2$] (d1) at (-0.5,-2) {};
         \node[node,, label=270:$d_1$] (d2) at (0.5,-2) {};
         \node[node,, label=270:$c_2$] (c1) at (1.5,-2) {};
         \node[node,, label=270:$c_1$] (c2) at (2.5,-2) {};
         \node[node,, label=270:$b_2$] (b1) at (3.5,-2) {};
         \node[node,, label=270:$b_1$] (b2) at (4.5,-2) {};
         \node[node,, label=270:$a_3$] (a1) at (5.2,-2) {};
         \node[node,, label=270:$a_2$] (a2) at (6,-2) {};
         \node[node,, label=270:$a_1$] (a3) at (6.8,-2) {};
         
        \draw (v1) edge (d1);
        \draw (v1) edge (d2);
        \draw (v2) edge (c1);
        \draw (v2) edge (c2);
        \draw (v3) edge (b1);
        \draw (v3) edge (b2);
        \draw (v4) edge (a1);
        \draw (v4) edge (a2);
        \draw (v4) edge (a3);
        \draw (v4) edge (b2);
    \end{tikzpicture}
    \caption{Approval graph of counterexample for core stability for Rule~X from \Cref{pr:ruleX-core}.}
    \label{fi:ruleX-core}
\end{figure}
\RuleXcore*
\begin{proof}[Proof (Rule~X)]

    We depict our counterexample in \Cref{fi:ruleX-core}. It consists of $13$ agents $\{w$, $x$, $y$, $z$, $a_1$, $a_2$, $a_3$, $b_1$, $b_2$, $c_1$, $c_2$, $d_1$, $d_2\}$. Approvals are symmetric. Agent $w$ approves $a_1$, $a_2$, $a_3$, and $b_1$. Agent $x$ approves $b_1$ and $b_2$. Agent $y$ approves $c_1$ and $c_2$. Agent $z$ approves $d_1$ and $d_2$. We set $k=13$. Thereby, each agent starts with a budget of one dollar. We now describe a run of Rule~X on the constructed instance which returns a committee $\mathcal{M}$ that is not core stable. Initially, all matchings which are approved by eight agents are $\frac{1}{8}$-affordable. Breaking ties, we select the matching $\{\{w,b_1\},\{x,b_2\},\{y,c_1\},\{z,d_1\}\}$ eight times. After that, all agents except $a_1$, $a_2$, $a_3$, $c_2$, and $d_2$ have zero budget left. Now, every matching which is approved by one of $a_1$, $a_2$, $a_3$, and $c_2$, and $d_2$ is $\frac{1}{3}$-affordable. We select the matching $\{\{w,a_1\},\{x,b_2\},\{y,c_2\},\{z,d_2\}\}$ three times. Subsequently, only $a_2$ and $a_3$ have budget, which makes all matchings which are approved by one of them $1$-affordable. We select $\{\{w,a_2\},\{x,b_2\},\{y,c_1\},\{z,d_1\}\}$ and $\{\{w,a_3\},\{x,b_2\},\{y,c_1\},\{z,d_1\}\}$ as the last two matchings. Note that $a_2$ and $a_3$ both approve one matching from $\mathcal{M}$, while $c_2$ and $d_2$ approve three matchings from~$\mathcal{M}$. Let $\mathcal{M'}$ be a set of four matchings, where all matchings match $z$  to $d_2$ and $y$ to $c_2$, two of the matchings match $w$ to $a_2$ and the remaining two matchings match $w$ to $a_3$. The group $\{a_2,a_3,c_2,d_2\}$ block $\mathcal{M}$, as they deserve to be represented by four matchings and all four agents approve more matchings from $\mathcal{M'}$ than from $\mathcal{M}$.
    
    Note that, as the counterexample of \citet{PeSk20a} showing that Rule~X violates core stability only allows that every candidate can be selected once (and also partially relies on this constraint), this example also settles the question whether Rule~X satisfies core stability in every party-approval election. 
\end{proof}

\section{Omitted Proofs from Section \ref{sec:check-axiom}}
\checkPJR*
\begin{proof}
We reduce from the NP-hard \textsc{Clique} problem on $r$-regular graphs \cite{GJ79}, where given an undirected $r$-regular graph $G=(V,E)$ and an integer $q$ the question is whether there exists a set of $q$ pairwise adjacent nodes. We assume without loss of generality that $q>3$. We construct a matching election and a committee $\mathcal{M}$ as follows.

We insert one \emph{node agent} $a_v$ for each node $v\in V$, $q$ \emph{dummy agents}, and $q$ \emph{good agent}s. All node and dummy agents approve all good agents and the other way round. Turning to the construction of $\mathcal{M}$, for each edge $\{u,v\}\in E$, we add a matching to $\mathcal{M}$ that matches $a_u$, $a_v$, and $q-2$ dummy agents to good agents. Further, we insert $2|E|+1$ matchings in which each dummy agent is matched to a good agent. Lastly, we modify the instance such that $\frac{k}{n}=r-\frac{q-1}{2}+\frac{1}{q}$
by adding agents with empty approval ballot and matchings that match each dummy agent to a good agent. Note that each node agent approves $r$ matchings from $\mathcal{M}$ and a group of $q$ agents deserves to be represented by $q \frac{k}n = qr-\binom{q}{2}+1$ matchings. Moreover, note that only node agents can be part of a violating group, as good agents approve all matchings and dummy agents approve more than $\frac{2}{3}$ of the matchings (which is enough to show the claim, as every cohesive group can have size at most $2q\leq \frac{2}{3}n$).  In the following, we show that there exists a size-$q$ clique in $G$ if and only if $\mathcal{M}$ does not satisfy PJR. Intuitively, this holds as for a group of node agents $X=\{a_v\mid v\in V'\}$, the set of matchings approved by some agent from $X$ corresponds to the set of edges that are incident to some node from $V'$. 

    $``\Rightarrow"$ Let $V'$ be a clique in $G$ of size $q$, then exactly $qr-\binom{q}{2}$ different edges are incident to some node from $V'$ (every node is incident to $r$ edges and $\binom{q}{2}$ edges have both endpoints in $V'$). As there exist $q$ good agents, the group $\{a_v\mid v\in V'\}$ is $(qr-\binom{q}{2}+1)$-cohesive. As they together approve only $qr-\binom{q}{2}$ different matchings from $\mathcal{M}$, $\{a_v\mid v\in V'\}$ is a violating group for PJR. 
 
    $``\Leftarrow"$ Assume that there exists a violating group of agents $X$ for PJR. Recall that only node agents can be part of $X$. Moreover, as only the $q$ good agents are approved by some node agent, it further needs to hold that $|X|\leq q$. For the sake of contradiction, assume that $|X|=x$ for some $x<q$. Each set of vertices of size $x$ in $G$ needs to be adjacent to at least $xr-\binom{x}{2}$ different edges. Thus, agents from $X$ must approve at least $xr-\binom{x}{2}$ different matchings in $\mathcal{M}$, while they deserve to be represented by  $x \cdot (r-\frac{q-1}{2}+\frac{1}{q})$ matchings. However, note that such a group cannot be violating, as, for all $x\in[1,q-1]$, it holds that
    \begin{align*}
      &&xr-\binom{x}{2}&>&x\cdot(r-\frac{q-1}{2}+\frac{1}{q}) \\
     \Leftrightarrow&& r-\frac{x-1}{2}&>&r-\frac{q-1}{2}+\frac{1}{q} \\
    \Leftrightarrow &&-x&>&-q+\frac{2}{q},
    \end{align*} where the last inequality holds as $x\in [1,q-1]$ and ${q>3}$. Thus, $X$ needs to have size $q$. For a group of size $q$ to violate PJR, they need to approve at most $qr-\binom{q}{2}$ matchings together. Thus, the set of vertices $\{v\mid a_v\in X\}$ is incident to at most $qr-\binom{q}{2}$ different edges in $G$ implying that they form a clique in $G$.
\end{proof}

We note that the reduction for \Cref{pr:check-core} described below makes use of committees containing matchings that are not approved by any agent and are thus not Pareto optimal. In order to avoid this technical problem, we describe at the end of the proof how the reduction can be altered such that all considered matchings are Pareto optimal. 
\checkcore*

\begin{proof}
In the NP-hard \textsc{Exact Cover by 3-Sets} (\textsc{X3C}) problem, we are given a universe $X$ of size $3q$ and a collection $C$ of $3$-element subsets of $X$ and the question is whether there exists an exact cover $C'\subseteq C$ of $X$. In fact, we reduce from the restricted version where each element appears in exactly three sets from $C$. Thus, it holds that $|C|=3q$. We construct a matching election and a committee $\mathcal{M}$ of size $k=6$ as follows.

We start by describing the central part of the constructed matching election before adding additional agents to cope with some technical details.  For each element $x\in X$, we insert one \emph{element agent} $a_x$ and one \emph{dummy element agent} $b_x$. Moreover, for each set $c\in C$, we add one \emph{set agent} $a_c$. Approvals are symmetric. For each element $x\in X$, the element agent $a_x$ and the dummy element agent $b_x$ approve each other. Moreover, the element agent $a_x$ approves the three set agents $a_c$ corresponding to sets in which it is contained. 
We construct $\mathcal{M}$ such that each dummy element agent approves one matching, each element agent approves two matchings, and each set agent approves two matchings. Moreover, we modify the instance such that each possible blocking coalition needs to deserve to be represented by three matchings and needs to contain $7q$ of the so-far introduced agents.

\tikzstyle{mnode}=[draw, circle, fill, inner sep = 0.75pt]
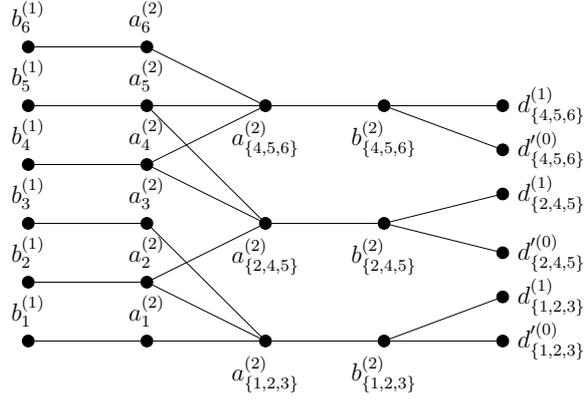
\begin{figure}
    \centering
   \resizebox{.5\textwidth}{!}{%
   \begin{tikzpicture}
        \node[node, label=90:$a_1^{(2)}$] (a1) at (-3, 0) {};
        \node[node, label=90:$a_2^{(2)}$] (a2) at (-3, 1) {};
        \node[node, label=90:$a_3^{(2)}$] (a3) at (-3, 2) {};
        \node[node, label=90:$a_4^{(2)}$] (a4) at (-3, 3) {};
        \node[node, label=90:$a_5^{(2)}$] (a5) at (-3, 4) {};
         \node[node, label=90:$a_6^{(2)}$] (a6) at (-3, 5) {};
         \node[node, label=90:$b_1^{(1)}$] (b1) at (-5, 0) {};
        \node[node, label=90:$b_2^{(1)}$] (b2) at (-5, 1) {};
        \node[node, label=90:$b_3^{(1)}$] (b3) at (-5, 2) {};
        \node[node, label=90:$b_4^{(1)}$] (b4) at (-5, 3) {};
        \node[node, label=90:$b_5^{(1)}$] (b5) at (-5, 4) {};
         \node[node, label=90:$b_6^{(1)}$] (b6) at (-5, 5) {};
         \node[node, label=270:$a_{\{1,2,3\}}^{(2)}$] (v1) at (-1, 0) {};
         \node[node, label=270:$a_{\{2,4,5\}}^{(2)}$] (v2) at (-1, 2) {};
         \node[node, label=270:$a_{\{4,5,6\}}^{(2)}$] (v3) at (-1, 4) {};
         
         \node[node, label=270:$b_{\{1,2,3\}}^{(2)}$] (vv1) at (1, 0) {};
         \node[node, label=270:$b_{\{2,4,5\}}^{(2)}$] (vv2) at (1, 2) {};
         \node[node, label=270:$b_{\{4,5,6\}}^{(2)}$] (vv3) at (1, 4) {};
         
         \node[node, label=0:$d_{\{4,5,6\}}^{(1)}$] (z1) at (3, 4) {};
         \node[node, label=0:$d'^{(0)}_{\{4,5,6\}}$] (zz1) at (3,3.25) {};
         \node[node,label=0:$d'^{(0)}_{\{2,4,5\}}$] (z2) at (3, 1.5) {};
         \node[node, label=0:$d_{\{2,4,5\}}^{(1)}$] (zz2) at (3,2.5) {};
         \node[node, label=0:$d_{\{1,2,3\}}^{(1)}$] (z3) at (3, 0.75) {};
         \node[node, label=0:$d'^{(0)}_{\{1,2,3\}}$] (zz3) at (3,0) {};
         \draw (a1) edge (b1);
         \draw (a2) edge (b2);
         \draw (a3) edge (b3);
         \draw (a4) edge (b4);
         \draw (a5) edge (b5);
         \draw (a6) edge (b6);
         \draw (a1) edge (v1);
         \draw (a2) edge (v1);
         \draw (a2) edge (v2);
         \draw (a3) edge (v1);
         \draw (a4) edge (v2);
         \draw (a4) edge (v3);
         \draw (a5) edge (v2);
         \draw (a5) edge (v3);
         \draw (a6) edge (v3);
          \draw (v1) edge (vv1);
          \draw (v2) edge (vv2);
          \draw (v3) edge (vv3);
        \draw (vv3) edge (z1);
         \draw (vv3) edge (zz1);
         \draw (vv2) edge (z2);
         \draw (vv2) edge (zz2);
         \draw (vv1) edge (z3);
         \draw (vv1) edge (zz3);
    \end{tikzpicture} }
    \caption{Example of the hardness reduction from \Cref{pr:check-core} for \textsc{Exact Cover By 3-Sets} instance: $X=\{1,2,3,4,5,6\}$ and $\{\{1,2,3\},\{2,4,5\},\{4,5,6\}\}\subseteq C$. Numbers in the superscripts denotes the number of matchings from $\mathcal{M}$ the agent approves.}
    \label{fi:check-core}
\end{figure}

To realize these requirements, we need to introduce several additional agents. That is, we introduce for each set $c\in C$, three \emph{dummy set} agents $b_c$, $d_c$, and $d'_c$. Approvals are again symmetric. Agent $b_c$ approves the set agent $a_c$ and the two dummy agents $d_c$ and $d'_c$. We construct $\mathcal{M}$ such that $b_c$ approves two matchings, $d_c$ approves one matching and $d'_c$ approves zero matchings.  Lastly, to adjust the total number of agents, we add $14q$ \emph{filling} agents with empty approval ballot. In total, the instance consists of $6q$ element and dummy element agents, $3q$ set agents and $9q$ dummy set agents and $14q$ filling agents, i.e., $32q$ agents in total. For a visualization of the reduction see \Cref{fi:check-core}.

We are now ready to construct $\mathcal{M}$ realizing the already mentioned happiness scores of the agents. First, we add a matching where for each element $x\in X$, the element agent $a_x$ is matched to the dummy element agent $b_x$ and, for each set $c\in C$, the set agent $a_c$ is matched to the dummy set agent $b_c$. In the second matching, we match all element agents $a_x$ to a set agent $a_c$ that they approve. (Note that such a perfect matching of element agents and set agents has to exist because these agents form a $3$-regular bipartite graph.) Moreover, for each $c\in C$, we match dummy set agents $b_c$ and $d_c$. Finally, we add four matchings that are not approved by anyone. Thus, as $\mathcal{M}$ consists of six matchings and as the total number of agents is $32q$, each group of $\frac{32q}{6} = \frac{16q}{3}$ agents deserves to be represented by one matching. We now show that the given \textsc{X3C} instance $(X,C)$ admits an exact cover if and only if there exists a group violating core stability in the constructed matching election.

$``\Rightarrow"$ Let us assume that there exists an exact cover $C'\subseteq C$ of $X$. We claim that the group $S$ consisting of all element and dummy element agents, all set agents corresponding to sets from $C'$, and all dummy set agents block committee $\mathcal{M}$. Note that $S$ consists of $6q+1q+9q=16q$ agents and thus deserves to be represented by three matchings. We now describe the three blocking matchings. For each $c\in C$, $b_c$ is matched to $d_c$ in the first two of the three matchings and to $d'_c$ in the third.  For each $c=\{x_i,x_j,x_k\}\in C'$, we match $a_c$ to $a_{x_i}$ in the first matching, to $a_{x_j}$ in the second matching, and to $a_{x_k}$ in the third matching. This is always possible, as $C'$ is an exact cover of $X$. Thereby, each element agent is matched to a set agent in one of the three matchings. We match each element agent in the remaining two matchings to the corresponding dummy element agent. Note that all element agents and all set agents corresponding to sets from $C'$ approve all three matchings. All dummy element agents approve two matchings. For all $c\in C$, $b_c$ approves all three matchings, $d_c$ approves two matchings and $d'_c$ one matching. Thus, $S$ is blocking. 

$``\Leftarrow"$ Assume that there exists a blocking coalition $S$ for $\mathcal{M}$ because of a multiset of matchings $\mathcal{M}'$. Note that there exist only $3q$ non-filling agents that do not approve any matching from $\mathcal{M}$ and only $9q$ non-filling agents that approve at most one matching from $\mathcal{M}$. As each group of $\frac{16q}{3}$ agents deserves to be represented by one matching and $3q\cdot\frac{3}{16q}<1$ and $9q\cdot\frac{3}{16q}<2$, it needs to hold that $|\mathcal{M}'|\geq 3$ and thus $S$ needs to have size at least $16q$. Moreover, note that there cannot exist a blocking coalition that deserves to be represented by four matchings, as there exist only $18q$ non-filling agents. 

To complete the proof, we need the following claim.
\begin{claim*}
Let $S_{\text{set}}\subseteq S$ be the set of set agents $a_c$ that are part of the blocking coalition $S$. Then, it holds that $|S_{\text{set}}|= q$. 
\end{claim*}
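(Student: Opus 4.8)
The plan is to run a global counting argument over the three blocking matchings $\mathcal{M}' = \{M'_1,M'_2,M'_3\}$. Recall that the preceding discussion already forces $|\mathcal{M}'|=3$ (it is $\geq 3$ and cannot be $4$), hence $|S|\geq 16q$; moreover no filling agent can lie in $S$ since such an agent has $h_a(\mathcal{M}')=0=h_a(\mathcal{M})$, so $|S|\leq 18q$ and at most $2q$ of the $18q$ non-filling agents are missing from $S$. First I would record, for every gadget, how often each approved pair is used across the three matchings: let $p_x$ count the matchings using $\{a_x,b_x\}$, let $r_c,u_c,v_c$ count those using $\{a_c,b_c\},\{b_c,d_c\},\{b_c,d'_c\}$ respectively, and let $T$ be the total number of element--set pairs $\{a_x,a_c\}$ used. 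Since each agent is matched at most once per matching and approvals are symmetric, $h_{b_x}(\mathcal{M}')=p_x$, $h_{d_c}(\mathcal{M}')=u_c$, $h_{d'_c}(\mathcal{M}')=v_c$, while $h_{a_x}(\mathcal{M}')=p_x+t_x$, $h_{a_c}(\mathcal{M}')=r_c+t_c$ and $h_{b_c}(\mathcal{M}')=r_c+u_c+v_c$, where $t_x$ (resp.\ $t_c$) is the number of element--set pairs at $a_x$ (resp.\ $a_c$); note the double count $\sum_x t_x=\sum_c t_c=T$.

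Next I would translate membership in $S$ into numerical constraints via strict improvement over $\mathcal{M}$: an element or set agent or a $b_c$ in $S$ needs three approvals, which together with the capacity bound forces an equality (e.g.\ $p_x+t_x=3$); a dummy element agent or a $d_c$ in $S$ needs at least two ($p_x\geq 2$ resp.\ $u_c\geq 2$); and a $d'_c$ in $S$ needs at least one ($v_c\geq 1$). Writing $e,f,s,g,h,j$ for the number of element, dummy-element, set, $b_c$, $d_c$, $d'_c$ agents in $S$, the capacities $p_x+t_x\leq 3$, $r_c+t_c\leq 3$, $r_c+u_c+v_c\leq 3$ yield two bounds on $T$. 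From the element side, $T=\sum_x t_x\leq 9q-\sum_x p_x\leq 9q-2f$. From the set side, $T\geq \sum_{a_c\in S} t_c = 3s-\sum_{a_c\in S} r_c \geq 3s-\sum_c r_c$, and $\sum_c r_c\leq 9q-\sum_c(u_c+v_c)\leq 9q-2h-j$, so $T\geq 3s-9q+2h+j$. Chaining the two bounds gives the single inequality $3s+2f+2h+j\leq 18q$.

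Finally I would pass to the \emph{deficiencies} $\bar e=3q-e$, $\bar f=3q-f$, $\bar s=3q-s$, $\bar g=3q-g$, $\bar h=3q-h$, $\bar j=3q-j$, so that $|S|\geq 16q$ reads $\bar e+\bar f+\bar s+\bar g+\bar h+\bar j\leq 2q$ and the inequality above becomes $3\bar s+2\bar f+2\bar h+\bar j\geq 6q$. Since all deficiencies are non-negative and their coefficients are at most $3$, $6q\leq 3\bar s+2\bar f+2\bar h+\bar j\leq 3(\bar s+\bar f+\bar h+\bar j)\leq 6q$, so every inequality in the chain is tight. Tightness forces $\bar f=\bar h=\bar j=0$ and $\bar e=\bar g=0$, and then $\bar s=2q$; that is, $|S_{\text{set}}|=s=q$, as claimed.

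I expect the main obstacle to be setting up the two bounds on $T$ so that they are valid for an \emph{arbitrary} blocking committee rather than an optimal one — in particular verifying that the capacity and requirement inequalities hold no matter how the agents outside $S$ are matched — and then arranging the coefficients so that the deficiency budget of $2q$ squeezes all estimates to equality simultaneously. The bipartite double count $\sum_x t_x=\sum_c t_c=T$ is the glue that links the element-side upper bound with the set-side lower bound.
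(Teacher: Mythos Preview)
Your argument is correct. The counting is clean: the capacity bounds $p_x+t_x\le 3$, $r_c+t_c\le 3$, $r_c+u_c+v_c\le 3$ hold because each agent is matched at most once per matching and, in this symmetric instance, every pair in a minimal matching lies on an edge of the approval graph, so the only possible partners of $a_x$, $a_c$, $b_c$ are exactly the ones you enumerate. The double count $\sum_x t_x=\sum_c t_c=T$ then produces the inequality $3s+2f+2h+j\le 18q$, and your deficiency substitution squeezes the chain $6q\le 3\bar s+2\bar f+2\bar h+\bar j\le 3(\bar s+\bar f+\bar h+\bar j)\le 6q$ to equality, forcing $\bar f=\bar h=\bar j=0$, then $\bar s=2q$ and $\bar e=\bar g=0$.

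The paper reaches the same conclusion by a different bookkeeping. It first gets $|S_{\text{set}}|\ge q$ from the crude bound $|S|\le |S_{\text{set}}|+w+y+z+6q$ (with $w,y,z$ your $f,h,j$), then assumes $t:=|S_{\text{set}}|-q>0$ and argues informally that each extra set-agent approval beyond the ``one element per set agent per matching'' baseline consumes capacity at an element agent or at some $b_c$, yielding $w+y+z\le 9q-\tfrac32 t$ and hence $|S_{\text{set}}|+w+y+z<10q$. Your approach replaces this verbal exchange argument by two transparent linear bounds on $T$ and a tightness squeeze; it is more systematic, and as a bonus it simultaneously pins down the full composition of $S$ (all element, dummy-element, $b_c$, $d_c$, $d'_c$ agents must be present), which the paper only derives afterwards from the claim.
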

\begin{claimproof} 
As $S$ can only contain non-filling agents, from $|S|\geq 16q$ and the fact that there exist only $18q$ non-filling agents of which $3q$ are set agents, it follows that $|S_{\text{set}}|\geq q$ needs to hold. To prove that $|S_{\text{set}}|= q$, first of all, note that all set agents from $S_{\text{set}}$ need to approve all three matchings from $\mathcal{M}'$. Thus, in total, there exist $3|S_{\text{set}}|$ pairs in $\mathcal{M}'$ each containing exactly one agent from $S_{\text{set}}$. 
Let $w$ be the number of dummy element agents that are part of $S$, $y$ the number of dummy set agents of the form $d_c$ and $z$ the number of dummy set agents of the form $d'_c$. Overall, it needs to hold that $|S_{\text{set}}|+w+y+z+6q\geq 16q$ and thus $|S_{\text{set}}|+w+y+z\geq 10q$. Note that as each dummy element agent only needs to approve two matchings from $\mathcal{M}'$, even if $w=3q$, each element agent can be matched to an agent from $S_{\text{set}}$ in one matching. For the sake of contradiction, let us assume that $t:=S_{\text{set}}-q>0$. Then, $3t$ approvals for set agents that do not come from element agents which are matched to the corresponding dummy element agent in the other two matchings are needed. However, for each two of these $3t$ approvals, either an element agent needs to be matched more than once to a set agent or an dummy set agent $b_c$ needs to be matched twice to a set agent. While the former implies that the corresponding dummy element agent cannot be part of the blocking coalition $S$, the latter implies that either one less dummy set agent of the form $d_c$ or $d'_c$ can be part of the blocking coalition. Thus, $t>0$ implies that $w+y+z\leq 9q-\frac{3}{2}t$. Overall we get that  $q+t+w+y+z\leq q+t+9q-\frac{3}{2}t=10q-\frac{1}{3}t$. Thus, it needs to hold that $t=0$. This directly implies that $|S_{\text{set}}|=q$.
\end{claimproof}
From the claim it directly follows that $S$ consists of the agents $S_{\text{set}}$ and all non-filling agents that are not set agents. 

To ensure that all dummy element agents approve two matchings from $\mathcal{M}'$, each element agent needs to be matched to the corresponding dummy element agent in two of the three matchings. Moreover, each set agent from $S_{\text{set}}$ needs to approve all three matchings from $\mathcal{M}'$ and no dummy set agent $b_c$ can be matched to an agent from $S_{\text{set}}$. Thus, each element agent is matched to a set agent it approves in exactly one of the three matchings. As each set agent from $S_{\text{set}}$ needs to approve all three matchings from $\mathcal{M}'$, this implies that each set agent from $S_{\text{set}}$ needs to be matched to each of the three element agents corresponding to its elements in one of the three matchings. Thus, $S_{\text{set}}$ forms an exact cover of $X$.

\medskip
It is possible to slightly modify the reduction to avoid that Pareto-dominated matchings are part of the given committee, at the cost of losing symmetry. We start by modifying the approval ballots of $6q$ arbitrary filling agents and make them approve all element agents $a_x$ for $x\in X$ and all dummy set agents $b_c$ for $c\in C$ (but not the other way round). Constructing~$\mathcal{M}$, instead of adding four matchings not approved by anyone, we add four matchings in which the $6q$ modified filling agents are matched to all element agents $a_x$ and dummy set agents $b_c$. Note that these matchings are Pareto optimal, as modified filling agents only approve these agents and the remaining non-filling agents also only approve element agents $a_x$ or dummy set agents $b_c$.

The correctness of the forward direction of the proof remains unaffected, while for the backward direction it is necessary to argue why none of the modified filling agents can be part of a blocking coalition. To see this, note that these agents approve four matchings in $\mathcal{M}$ and thus any blocking coalition $S$ they are part of needs to deserve to be represented by at least five matchings. However, this implies that $|S|\geq  5\cdot \frac{16q}{3} > 26q$, which cannot be the case, as there exist only $24q$ agents approving some other agent.
\end{proof}
\end{document}